%% file: main.tex
\documentclass[acmsmall,screen]{acmart}
\AtBeginDocument{%
  }

\newcommand{\authorcite}[1]{\citeauthor{#1}~\cite{#1}}

\usepackage{proof}
\usepackage{multirow}
\usepackage{amsthm}
\usepackage{listings}
\usepackage{tikz}
\usetikzlibrary{calc}
\usepackage{soul}

\input{macros.tex}

\newtheorem{theorem}{Theorem}
\newtheorem{invariant}{Invariant}

\definecolor{dkgreen}{rgb}{0,0.6,0}
\definecolor{ltblue}{rgb}{0,0.4,0.4}
\definecolor{dkviolet}{rgb}{0.3,0,0.5}

\lstdefinelanguage{Coq}{ 
    mathescape=true,
    texcl=false, 
    morekeywords=[1]{Section, Module, End, Require, Import, Export,
        Variable, Variables, Parameter, Parameters, Axiom, Hypothesis,
        Hypotheses, Notation, Local, Tactic, Reserved, Scope, Open, Close,
        Bind, Delimit, Definition, Let, Ltac, Fixpoint, CoFixpoint, Add,
        Morphism, Relation, Implicit, Arguments, Unset, Contextual,
        Strict, Prenex, Implicits, Inductive, CoInductive, Record,
        Structure, Canonical, Coercion, Context, Class, Global, Instance,
        Program, Infix, Theorem, Lemma, Corollary, Proposition, Fact,
        Remark, Example, Proof, Goal, Save, Qed, Defined, Hint, Resolve,
        Rewrite, View, Search, Show, Print, Printing, All, Eval, Check,
        Projections, inside, outside, Def},
    morekeywords=[2]{forall, exists, exists2, fun, fix, cofix, struct,
        match, with, end, as, in, return, let, if, is, then, else, for, of,
        nosimpl, when},
    morekeywords=[3]{Type, Prop, Set, true, false, option},
    morekeywords=[4]{pose, set, move, case, elim, apply, clear, hnf,
        intro, intros, generalize, rename, pattern, after, destruct,
        induction, using, refine, inversion, injection, rewrite, congr,
        unlock, compute, ring, field, fourier, replace, fold, unfold,
        change, cutrewrite, simpl, have, suff, wlog, suffices, without,
        loss, nat_norm, assert, cut, trivial, revert, bool_congr, nat_congr,
        symmetry, transitivity, auto, split, left, right, autorewrite},
    morekeywords=[5]{by, done, exact, reflexivity, tauto, romega, omega,
        assumption, solve, contradiction, discriminate},
    morekeywords=[6]{do, last, first, try, idtac, repeat},
    morecomment=[s]{(*}{*)},
    showstringspaces=false,
    morestring=[b]",
    morestring=[d]’,
    tabsize=3,
    extendedchars=false,
    sensitive=true,
    breaklines=false,
    basicstyle=\small,
    captionpos=b,
    columns=[l]flexible,
    identifierstyle={\ttfamily\color{black}},
    keywordstyle=[1]{\ttfamily\color{dkviolet}},
    keywordstyle=[2]{\ttfamily\color{dkgreen}},
    keywordstyle=[3]{\ttfamily\color{ltblue}},
    keywordstyle=[4]{\ttfamily\color{dkblue}},
    keywordstyle=[5]{\ttfamily\color{dkred}},
    stringstyle=\ttfamily,
    commentstyle={\ttfamily\color{dkgreen}},
    literate=
    {\\forall}{{\color{dkgreen}{$\forall\;$}}}1
    {\\exists}{{$\exists\;$}}1
    {<-}{{$\leftarrow\;$}}1
    {=>}{{$\Rightarrow\;$}}1
    {==}{{{\tt\small ==}\;}}1
    {==>}{{{\tt\small ==>}\;}}1
    {->}{{$\rightarrow\;$}}1
    {<->}{{$\leftrightarrow\;$}}1
    {<==}{{$\leq\;$}}1
    {\#}{{$^\star$}}1 
    {\\o}{{$\circ\;$}}1 
    {\@}{{$\cdot$}}1 
    {\/\\}{{$\wedge\;$}}1
    {\\\/}{{$\vee\;$}}1
    {++}{{{\tt\small ++}}}1
    {~}{{$\sim$}}1
    {\@\@}{{$@$}}1
    {\\mapsto}{{$\mapsto\;$}}1
    {\\hline}{{\rule{\linewidth}{0.5pt}}}1
}[keywords,comments,strings]

\begin{document}

\title{On the computational complexity of JavaScript regex matching, in Rocq}

\author{Victor Deng}
\email{victor.deng@epfl.ch}
\affiliation{%
  \institution{EPFL}
  \city{Lausanne}
  \country{Switzerland}
}
\orcid{0000-0002-7871-0147}

\author{Aur\`ele Barri\`ere}
\email{aurele.barriere@cnrs.fr}
\affiliation{%
  \institution{CNRS, ENS de Lyon, Inria, Université Claude Bernard Lyon 1, LIP}
  \city{Lyon}
  \country{France}
}
\orcid{0000-0002-7297-2170}

\author{Cl\'ement Pit-Claudel}
\email{clement.pit-claudel@epfl.ch}
\affiliation{%
  \institution{EPFL}
  \city{Lausanne}
  \country{Switzerland}
}
\orcid{0000-0002-1900-3901}


\begin{abstract}
  Despite widespread use, the complexity class of modern regular expression matching was not well-understood. Previous work proved that regular expression matching with backreferences and lookarounds was PSPACE-complete, but the proof was not mechanized and applied to an abstract regex language. This paper clarifies the question for JavaScript regular expressions. In this paper, we prove the following new results, with most core proofs mechanized in the Rocq proof assistant. We prove that JavaScript regex matching is indeed PSPACE-hard, even without negative lookarounds, and OptP-hard as well; that JavaScript regex matching without lower-bounded quantifiers (i.e. quantifiers with a non-zero minimum number of repetitions) is PSPACE-complete; and that JavaScript regex matching without lower-bounded quantifiers and without lookarounds is OptP-complete.
\end{abstract}

\begin{CCSXML}
<ccs2012>
<concept>
<concept_id>10003752.10003777.10003778</concept_id>
<concept_desc>Theory of computation~Complexity classes</concept_desc>
<concept_significance>500</concept_significance>
</concept>
<concept>
<concept_id>10003752.10003790.10002990</concept_id>
<concept_desc>Theory of computation~Logic and verification</concept_desc>
<concept_significance>500</concept_significance>
</concept>
</ccs2012>
\end{CCSXML}

\ccsdesc[500]{Theory of computation~Complexity classes}
\ccsdesc[500]{Theory of computation~Logic and verification}
\keywords{Regex, Complexity, JavaScript, Rocq, Formal Verification}


\maketitle

\section{Introduction}

Modern regular expressions, or regexes, enjoy widespread use, such as in over 30\% of Python and JavaScript packages~\cite{redos_impact}. Some languages and libraries (rust-regex, Go, RE2) adopt a regex language that permits polynomial-time matching. Others include more advanced features, such as backreferences and atomic groups. For these, the computational complexity of the regex-matching problem is still unknown: while results exist for idealized regex languages, no work has characterized the complexity of matching real-world regexes. Moreover, real-world regexes have different features and semantics in different languages. This paper characterizes the complexity of matching JavaScript regexes. In doing so, we hope to inform future algorithm and regex-language design.

Two facts are known with respect to real-world regex matching complexity. First, backreferences make the problem NP-hard (by reduction from 3-SAT), and online sources often (incorrectly) generalize this claim to NP-completeness~\cite{perl-3sat,backref-npcomplete}. Second, JavaScript regex matching without backreferences nor bounded quantifiers\footnote{Quantifiers other than the star and the plus quantifiers.} is in P, even with lookarounds~\cite{linearjs2024}.

More definitive results are known for theoretical regex languages. For instance, traditional regular expression matching is known to be in P~\cite{thompson-nfa,backref-npcomplete}, and a theoretical extension of it with backreferences and lookarounds (two features found in most modern regex languages) was recently shown to be PSPACE-complete~\cite{uezato_rewblk}. However, real-world regex matching varies by language, and is generally specified in terms of backtracking algorithms, so that theoretical results do not immediately apply to real-world languages. For instance, the existing PSPACE result~\cite{uezato_rewblk} is proved for a non-backtracking semantics that allows duplicate groups and does not have atomic lookarounds, whereas JavaScript regex matching has backtracking semantics, no duplicate capture groups, and atomic lookarounds.

In this paper, we focus on JavaScript regex matching, because JavaScript regexes are popular and JavaScript regex matching has a well-defined specification, in the form of a dedicated chapter in the ECMAScript specification of JavaScript~\cite{ecma_2023,ecma_2024,ecma_2025}.
We clarify the computational complexity of JavaScript regex matching and give new results, with mechanized proofs. Our contributions are the following:
\begin{itemize}
    \item We adapt the result of \authorcite{uezato_rewblk} about PSPACE-hardness of regex matching with look\-arounds and backreferences to a real-world regex language, JavaScript: we prove a reduction of a variant of the PSPACE-complete problem QBF to JavaScript regex matching, with a core argument mechanized in Rocq.
    \item We show that negative lookarounds, which are constructs that allow negating the results of regex matching, are in fact not essential to PSPACE-hardness of JavaScript regex matching: we prove a new reduction of the same variant of QBF to JavaScript regex matching \emph{without} negative lookarounds, with a core argument mechanized in Rocq. To do so, we show that in JavaScript (and in most modern regex languages), positive lookarounds are unexpectedly powerful, and are in fact sufficient to encode a limited form of negation.
    \item We show that JavaScript regex matching without lookarounds is OptP-hard.
    \item We show that JavaScript regex matching without lower-bounded quantifiers\footnote{Quantifiers with a minimum number of repetitions different from zero.} is in PSPACE, with a core argument proved in Rocq.
    \item We show that JavaScript regex matching without lower-bounded quantifiers and without lookarounds is in OptP, with the same core argument.
\end{itemize}
Together, these results show that JavaScript regex matching without lower-bounded quantifiers is PSPACE-complete with and without negative lookarounds, and OptP-complete without lookarounds.

All of our proofs are with respect to a semantics of JavaScript regular expressions~\cite{linden_popl26} proven to be equivalent to the Warblre line-by-line, auditable mechanization~\cite{warblre_icfp} of the ECMAScript 2023 specification of JavaScript regexes~\cite{ecma_2023}. The Rocq mechanization of our core arguments with respect to this semantics therefore helps ensure that our results are indeed applicable to JavaScript regex matching, despite its complex backtracking semantics.
To our knowledge, this is the first time that complexity class results are proven about a real-world regex specification and that such proofs are mechanized in a proof assistant.

The second contribution about PSPACE-hardness without negative lookarounds can likely be generalized to other regex languages: it relies on a property of \emph{atomicity} of (positive) lookarounds, which is not exclusive to JavaScript but can also be found in Perl~\cite{perlre} and PCRE2~\cite{pcre2} for instance.

The rest of this paper is organized as follows. Section \ref{sec:background} provides background about JavaScript regexes, the semantics we use and the complexity classes used in this work. Section \ref{sec:hardness} shows the hardness properties of JavaScript regex matching. Section \ref{sec:completeness} shows the completeness properties of JavaScript regex matching. Section \ref{sec:relatedwork} describes related work, Section \ref{sec:limitations_futurework} discusses the limitations of our work and future work, and Section \ref{sec:conclusion} concludes the paper.

In sections \ref{sec:background} to \ref{sec:completeness}, we consider JavaScript regexes without lower-bounded quantifiers; we discuss the state of the art on matching bounded repetitions in section \ref{sec:limitations_futurework}.

\section{Background}
\label{sec:background}

\subsection{JavaScript regular expressions}

Modern regular expressions, such as those found in JavaScript, are extensions of traditional regular expressions with new features: besides the usual single characters, concatenation, disjunction, Kleene star and Kleene plus, one can find features such as capture groups, lookarounds and backreferences, among others. Moreover, the regex matching problem changes: while traditional regex matching is concerned with checking whether a string belongs to some regular language, modern regex matching is a needle-in-a-haystack problem, where one looks for a substring in the haystack string that the regex matches. We focus here on JavaScript regular expressions, whose syntax (as of the 2023 edition of the ECMAScript specification~\cite{ecma_2023}) is described in Fig. \ref{fig:jsregex_syntax}.

\begin{figure}[hbt!]
    \raggedright
    \begin{minipage}[t]{.49\textwidth}
        \begin{tabular}{l r l l}
        \re{\subreg} & $::=$ & \re{\regepsilon} & Empty \\
        &$\mid$& \re{\rchar} & Character \\
        &$\mid$& \re{\disjunction{\subreg_1}{\subreg_2}} & Disjunction \\
        &$\mid$& \re{\sequence{\subreg_1}{\subreg_2}} & Sequence \\
        &$\mid$& \re{\group{\gid}{\subreg}} & Capture group \\
        &$\mid$& \re{\noncap{\subreg}} & Non-capturing group \\
        &$\mid$& \re{\backref{\gid}} ~ $\mid$ ~ \re{\backref{k\langle{}\gid\rangle}} & Backreference \\
        &$\mid$& \re{\regstar{\subreg}} & Star \\
        &$\mid$& \re{\neglookahead{\subreg}} & Negative lookahead \\
        \end{tabular}
    \end{minipage}
    \begin{minipage}[t]{.49\textwidth}
        \begin{tabular}{l r l l}
        &$\mid$& \re{\lookahead{\subreg}} & Positive lookahead \\
        &$\mid$& \re{\regbos} ~ $\mid$ ~ \re{\regeos} ~ $\mid$ ~ \re{\esc{b}} ~ $\mid$ ~ \re{\esc{B}} & Anchors \\
        &$\mid$& $[c_0c_1-c_2]$ & \multirow{3}{*}{Character descriptors} \\
        &$\mid$& $[\cdcomplement{}c_0c_1-c_2]$ & \\
        &$\mid$& \re{\esc{w}} ~ $\mid$ ~ \re{\esc{W}} ~ $\mid$ ~ ... & \\
        &$\mid$& \re{\lookbehind{\subreg}} ~ $\mid$ \re{\neglookbehind{\subreg}} & Lookbehinds \\
        &$\mid$& \re{\quant{\subreg}{\rmin}{\Delta}{\top}} & Greedy quantifiers \\
        &$\mid$& \re{\quant{\subreg}{\rmin}{\Delta}{\bot}} & Lazy quantifiers
        \end{tabular}
    \end{minipage}
    \caption{JavaScript regex syntax (adapted for presentation purposes). Left: subset of regex constructs used in proof of PSPACE-hardness. Right: other regex constructs. Figure adapted from~\cite{linden_popl26}.}
    \Description{} 
    \label{fig:jsregex_syntax}
\end{figure}

The syntax of JavaScript regexes described in Fig. \ref{fig:jsregex_syntax} is adapted for presentation purposes as in~\cite{linden_popl26}:
\begin{itemize}
    \item We represent the empty regex with \re{\regepsilon} instead of nothing at all.
    \item We use the syntax \re{\noncap{\subreg}} instead of \texttt{(?{:}r)} for non-capturing groups.
    \item We annotate all capture groups with a group index or group name instead of relying on implicit numbering from left to right. For instance, the original JavaScript regex \texttt{(a)(b)cd} corresponds to the regex \re{\group{1}{a}\group{2}{b}cd} in our syntax.
    \item We use a different syntax for quantified regexes. In \re{\quant{\subreg}{\rmin}{\Delta}{\greedy}}, $\rmin$ is the minimum number of repetitions, $\Delta$ corresponds to the difference between the maximum and the minimum number of repetitions, and $p = \top$ (resp. $p = \bot$) denotes a greedy (resp. lazy) quantifier. The usual star quantifier, denoted \re{\regstar{\subreg}}, is syntactic sugar for \re{\quant{\subreg}{0}{+\infty}{\top}}.
\end{itemize}

The JavaScript regex features of interest that are not traditional regex features are as follows. A \emph{capture group} captures the part of the input string that was last matched by a subregex. For instance, matching the regex \re{a\group{1}{\regstar{b}}c} on string \str{abbbc} returns the entire string as well as the substring \str{bbb} corresponding to capture group 1. Capture groups are numbered starting from 1 from left to right (ordered by their opening positions) and can optionally be named. A \emph{lookaround} is an assertion that checks whether a regex matches the next (lookahead) or previous (lookbehind) part of the string; as an assertion, it does not consume characters in the string during matching. For instance, matching the regex \re{a\lookahead{bc}b} on the string \str{abc} returns the string \str{ab}: indeed, the lookahead assertion \re{\lookahead{bc}} succeeds when matched at position 1 (between the \chr{a} and the \chr{b}), since it is followed by the string \str{bc}, but the matching process then continues by matching the remaining regex \re{b} starting from position 1. Lookarounds can be negative, in which case they succeed if and only if the underlying regex does \emph{not} match the string at the current position. For instance, matching the regex \re{a\neglookahead{bc}b} on either of the strings \str{ab} and \str{abd} succeeds and returns the string \str{ab}, but matching the same regex on the string \str{abc} does not succeed, because the lookaround regex \re{bc} matches the string \str{abc} at position 1 (between the \chr{a} and the \chr{b}). Finally, a backreference allows matching again a substring that was previously captured by a capture group. For instance, the regex \re{\group{1}{a\regstar{b}c}\backref{1}} matches any string of the form \str{ab...bc} repeated twice; it matches the string \str{abbcabbc} (with capture group 1 set to \str{abbc}) but \emph{not} the string \str{abbcabc}.

JavaScript regex semantics is a backtracking one, meaning that the result of matching a regex on a string is that returned by a backtracking algorithm. This backtracking algorithm tries matching the regex from each starting position from left to right, and within a starting position, it first tries matching the left branch of each disjunction before trying the right branch if the left branch fails to find a match.
Quantifiers, such as the star, may be greedy (the default) or lazy: a greedy quantifier means that iterating the quantified regex again has priority over skipping the quantifier (where possible), while a lazy quantifier means the reverse: skipping the quantifier has priority over iterating the quantified regex again.

\paragraph{Semantics used in this work} In this work, we build on the inductive semantics of JavaScript regexes presented in~\cite{linden_popl26}, which is proven equivalent in Rocq to the Warblre mechanization~\cite{warblre_icfp} of the ECMAScript 2023 specification of JavaScript regular expressions~\cite{ecma_2023}; Warblre itself can be audited to be faithful. The Linden semantics is a big-step semantics that, at a high level, relates a regex and a string with a \emph{backtracking tree}, which describes the steps and choices performed by a backtracking algorithm, but without stopping at the first match as prescribed by the backtracking semantics. More precisely, the inductive statement of the semantics has the following form:
$$ \istree{\actions}{\inp}{\gm}{\dir}{\treecont} $$
where $\actions$ is a list of \emph{actions} left to be performed to finish matching the regex, $\inp$ is the current input (an input string and a position within the string, between characters or before the first or after the last character), $\gm$ is a \emph{group map} that stores the values of each capture group in the regex, $\dir$ is a matching direction (forward or backward) and $\treecont$ is a backtracking tree.
The statement $\istree{\actions}{\inp}{\gm}{\dir}{\treecont}$ means that $\treecont$ is the backtracking tree that corresponds to performing the actions in $\actions$ from input $\inp$ with the initial groups in $\gm$ and in the matching direction $\dir$.
There are three types of actions: matching a regex $r$, which is simply denoted by $\areg{r}$, closing a capture group $\gid$, which is denoted by $\aclose{\gid}$, and checking for progress with respect to a certain input $\inp$ (to implement the nullable quantifier rule, which forbids optional iterations of a quantifier to match the empty string), which is denoted by $\acheck{\inp}$.

\begin{figure}[hbt!]
    \centering
    \begin{minipage}{.95\textwidth}
        \centering
        \mbox{\infer[\ruledef{tree}{Match}]{\istree{[]}{\inp}{\gm}{\dir}{\treematch}}{}}\hfill%
        \mbox{\infer[\ruledef{tree}{Close}]{\istree{\aclose{\gid} :: \cont}{\inp}{\gm}{\dir}{\treeclose{\gid}{\treecont}}}
            {\istree{\cont}{\inp}{\gmclose{\gm}{\gid}{\idx{\inp}}}{\dir}{\treecont}}}%

        \semspace
        \mbox{\infer[\ruledef{tree}{Disj}]{\istree{\areg{\re{(\disjunction{\subreg_1}{\subreg_2})}} :: \cont}{\inp}{\gm}{\dir}{\treechoice{\treecont_1}{\treecont_2}}}
            {\istree{\areg{\re{\subreg_1}} :: \cont}{\inp}{\gm}{\dir}{\treecont_1} \newpremise \istree{\areg{\re{\subreg_2}} :: \cont}{\inp}{\gm}{\dir}{\treecont_2}}}\hfill%
        \mbox{\infer[\ruledef{tree}{SeqForward}]{\istree{\areg{\re{\sequence{\subreg_1}{\subreg_2}}} :: \cont}{\inp}{\gm}{\forward}{\treecont}}
            {\istree{\areg{\re{\subreg_1}} :: \areg{\re{\subreg_2}} :: \cont}{\inp}{\gm}{\forward}{\treecont}}}
        
        \semspace
        \mbox{\infer[\ruledef{tree}{GreedyStar}]{\istree{\areg{\re{\regstar{\subreg}}} :: \cont}{\inp}{\gm}{\dir}{\treechoice{(\treereset{\defgroups{\subreg}}{\treeiter})}{\treecont_{skip}}}}
        {\istree{\cont}{\inp}{\gm}{\dir}{\treecont_{skip}} \newpremise \istree{\areg{\re{\subreg}} :: \acheck{\inp} :: \areg{\re{\regstar{\subreg}}} :: \cont}{\inp}{\gmreset{\gm}{\defgroups{\subreg}}}{\dir}{\treeiter}}}%
    \end{minipage}
    \caption{Examples of inductive rules of the semantics in~\cite{linden_popl26}.}
    \label{fig:linden_semantics}
\end{figure}

Figure \ref{fig:linden_semantics} presents a few inductive rules of the semantics. For instance, rule \ruleref{tree}{Match} specifies that the backtracking tree corresponding to no action at all is simply a \treematch~node. Rule \ruleref{tree}{Close} specifies (to simplify) that if $\treecont$ is the backtracking tree corresponding to some list of actions $\actions$, then the backtracking tree corresponding to the list of actions $\aclose{\gid} :: \cont$ is $\treeclose{\gid}{\treecont}$. Finally, rule \ruleref{tree}{GreedyStar} means (to simplify) that the tree corresponding to a greedy star of the form \re{\regstar{\subreg}} is a choice between two subtrees: a tree $t_{iter}$ corresponding to iterating the star at least once, and a tree $t_{skip}$ corresponding to skipping the star. More precisely, $t_{iter}$ corresponds to the list of actions $\subreg :: \acheck{\inp} :: \re{\regstar{\subreg}} :: \actions$: we iterate $\subreg$ once, we check that we have made progress with respect to the original input $\inp$ (because optional iterations of a quantifier are not allowed to match the empty string with JavaScript regexes), then we match $\re{\regstar{\subreg}}$ again. We refer the reader to~\cite{linden_popl26} for the full, precise semantics.

\paragraph{Leaves of backtracking trees: reconstructing matching results}
As described by \authorcite{linden_popl26}, one can reconstruct from a backtracking tree the matching result, by following the steps and choices described in the backtracking tree until a leaf is reached.
For instance, consider matching the regex \re{\group{1}{\disjunction{a}{ab}}c} on string \str{abc}.
The backtracking tree corresponding to this instance of regex matching is shown in Figure \ref{fig:example_bt}.
We start with an initial input $\inp_0$ equal to the string \str{abc} at the beginning of the string, and the empty group map $\gmempty$.
The left branch ends with a \ctor{Mismatch} node, so no result is associated with this leaf.
However, the right branch ends with a \ctor{Match} node.
We reconstruct the corresponding result by following the steps described on the branch as follows.
We open group 1 in the group map at the initial position, resulting in a group map $\gm_1$ with an open group at position 0.
Then, we branch right at the \ctor{Choice} node. We advance the input once by consuming the node \ctorone{Read}{a}, then again by consuming the node \ctorone{Read}{b}, leading to an input $\inp_2$ equal to the string \str{abc} at position 2.
We then close group 1 at the position of $\inp_2$ (i.e. 2), leading to a group map $\gm_2$ that has one closed group, starting at position 0 and ending at position 2.
We finally advance the input once by consuming the node \ctorone{Read}{c}, leading to input $\inp_3$ at the end of the string \str{abc}, and reach the final \ctor{Match} node, at which point we return the final input $\inp_3$ and group map $\gm_2$.
From now on, as in~\cite{linden_popl26}, we call \emph{leaf} of a backtracking tree a result (input and group map) associated with a final \ctor{Match} node of the backtracking tree, and we call the \emph{first leaf} of a backtracking tree the leaf with the highest priority (the leftmost leaf). In particular, \ctor{Mismatch} nodes are \emph{not} called leaves of backtracking trees.

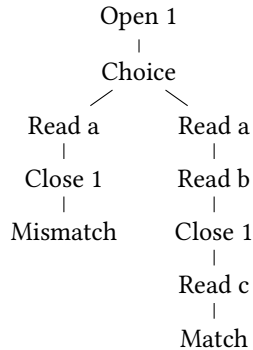
\begin{figure}[hbt!]
    \centering
    \def\yunit{0.7}
    \begin{tikzpicture}
        \begin{pgfonlayer}{foreground}
            \node (root) at (0, 0) {Open 1};
            \node (choice) at (0, -1*\yunit) {Choice};
            \node (reada_left) at (-1, -2*\yunit) {Read a};
            \node (close1_left) at (-1, -3*\yunit) {Close 1};
            \node (mismatch) at (-1, -4*\yunit) {Mismatch};
            \node (reada_right) at (1, -2*\yunit) {Read a};
            \node (readb) at (1, -3*\yunit) {Read b};
            \node (close1_right) at (1, -4*\yunit) {Close 1};
            \node (readc) at (1, -5*\yunit) {Read c};
            \node (match) at (1, -6*\yunit) {Match};
            \draw (root) -- (choice);
            \draw (choice) -- (reada_left);
            \draw (reada_left) -- (close1_left);
            \draw (close1_left) -- (mismatch);
            \draw (choice) -- (reada_right);
            \draw (reada_right) -- (readb);
            \draw (readb) -- (close1_right);
            \draw (close1_right) -- (readc);
            \draw (readc) -- (match);
        \end{pgfonlayer}
    \end{tikzpicture}
    \caption{Backtracking tree of regex $(_1 \mathrm{a|ab})\mathrm{c}$ on string \str{abc}.}
    \label{fig:example_bt}
    \Description{} 
\end{figure}

\subsection{Complexity classes and complete problems}

The two complexity classes that we work with in this paper are the following ones.

\subsubsection{PSPACE}

PSPACE is the class of decision problems that can be solved by a deterministic Turing machine running in polynomial space with respect to the input size. The canonical PSPACE-complete problem is QBF (Quantified Boolean Formula, or TQBF for True Quantified Boolean Formula)~\cite{computers_and_intractability}, which is the following problem:

\textbf{Input:} a quantified boolean formula $\varphi$

\textbf{Problem:} is $\varphi$ true?

A quantified boolean formula $\varphi$ is a closed boolean formula built from variables $x, y, ...$, logical connectors $\neg, \land, \lor$ and quantifiers $\forall$ and $\exists$, and can be either true or false. For instance, $\forall x. \exists y. x \lor y$ is a true QBF (take for instance $y = \top$), while $\forall x. \exists y. x$ is a false QBF (take $x = \bot$).

It is well-known that the following alternative formulation of QBF is also PSPACE-complete; we will call it PCNF-QBF:

\textbf{Input:} a quantified boolean formula $\varphi$ in prenex conjunctive normal form

\textbf{Problem:} is $\varphi$ true?

Prenex form means that all quantifiers are at the beginning of the QBF, while conjunctive normal form means for a QBF in prenex form that the underlying propositional formula is in conjunctive normal form (i.e. a conjunction of disjunctions of literals).

In this work, we use an alternative version of QBF, that we call QBF', which is also PSPACE-complete. In this version of QBF, the only quantifiers are $\exists$ and $\neg\exists$, and like in PCNF-QBF, the quantifiers are at the beginning of the QBF and the underlying propositional formula is in conjunctive normal form. Additionally, the underlying propositional formula can be negated. One transforms an instance of PCNF-QBF into an instance of QBF' in linear time by replacing each instance of $\forall$ by $\neg\exists. \neg$ and eliminating double negations.

\subsubsection{OptP}

OptP~\cite{krentel_optp} is the class of \emph{optimization} problems that can be solved by taking a nondeterministic Turing machine that runs in polynomial time with respect to the input size and outputs a (binary) number on each of its accepting branches, and returning the minimal (or maximal) number of all the outputs of the accepting branches.

To define OptP-completeness, \emph{metric} reductions are used. They are defined as follows in~\cite{krentel_optp}:

\begin{definition}
    Let $\Sigma$ be a (finite) alphabet and $f, g: \Sigma \to \mathbb{N}$. A \emph{metric reduction} from $f$ to $g$ is a pair of functions $(T_1, T_2)$ computable in polynomial time where $T_1: \Sigma{*} \to \Sigma$ and $T_2: \Sigma{*} \times \mathbb{N} \to \mathbb{N}$ such that for all $x \in \Sigma*$, $f(x) = T_2(x, g(T_1(x)))$.
\end{definition}

Essentially, to metrically reduce a problem A to a problem B, we must first be able to transform in polynomial time an instance $x_\mathrm{A}$ of the problem A into an instance $T_1(x_\mathrm{A})$ of the problem B.
Then, given the result $y_\mathrm{B}$ of the instance $T_1(x_\mathrm{A})$ of problem B and the original instance $x_\mathrm{A}$ of problem A, we must be able to recover in polynomial time the result $y_\mathrm{A}$ of the instance $x_\mathrm{A}$ of problem A.

A complete problem for OptP is LEXICOGRAPHIC SAT (called MAXIMUM SATISFYING ASSIGNMENT in~\cite{krentel_optp}), which is the following problem:

\textbf{Input:} A propositional formula $\varphi(x_1, x_2, ..., x_n)$

\textbf{Output:} The lexicographically maximum assignment of $x_1, x_2, ..., x_n$ that makes $\varphi$ true, or $0$ if there is none.

We will consider another version of LEXICOGRAPHIC SAT, which is still OptP-complete, where the propositional formula is in conjunctive normal form (CNF).
We call this version CNF LEXICOGRAPHIC SAT.
(This version is OptP-hard, because the reduction from UNIV$_n$ to LEXICOGRAPHIC SAT in~\cite{krentel_optp} is to a formula in CNF.)

\subsection{JavaScript regex matching problems}

In this paper, we will deal with two variants of the JavaScript regex matching problem: a decision variant and an optimization variant. These variants will be parameterized by a set of supported regexes $\regset$.

\subsubsection{Decision variant}

The decision variant of the JavaScript regex matching problem, which we call $\mathrm{JSREGTEST}_\regset$ for a set of supported regexes $\regset$, is the following one:

\textbf{Input:} A regex $r \in \regset$ and a string $s$.

\textbf{Problem:} Does $r$ have a match in $s$?

We drop the index $\regset$ when $\regset$ is the set of all JavaScript regexes: $\mathrm{JSREGTEST}$ is the unrestricted version of the decision variant of the JavaScript regex matching problem.

\subsubsection{Optimization variant}

The optimization variant of the JavaScript regex matching problem, which we call $\mathrm{JSREGMATCH}_\regset$ for a set of supported regexes $\regset$, is the following one:

\textbf{Input:} A regex $r \in \regset$ and a string $s$.

\textbf{Problem:} What is the highest priority match of $r$ on $s$, if it exists?

We also drop the index $\regset$ when $\regset$ is the set of all JavaScript regexes: $\mathrm{JSREGMATCH}$ is the unrestricted version of the optimization variant of the JavaScript regex matching problem.

\subsubsection{Size of a regex}
To characterize the complexity of the above problems, we use the notion of size of a regex defined in Fig. \ref{fig:regex_size}.

\begin{figure}[hbt!]
    \centering
    \(\begin{aligned}
    |\re{\regepsilon}| = |\re{\regchar{\cd}}| &= 1 \\
    |\re{\anchor{\anc}}| = |\re{\backref{\gid}}| &= 1 \\
    |\re{\noncap{\disjunction{\subreg_1}{\subreg_2}}}| = |\re{\sequence{\subreg_1}{\subreg_2}}| &= 1 + |\re{\subreg_1}| + |\re{\subreg_2}| \\
    |\re{\quant{\subreg}{0}{\Delta}{\greedy}}| &= 3 + |\re{\subreg}| \\
    |\re{\lookaround{\lk}{\subreg}}| &= 1 + |\re{\subreg}| \\
    |\re{\group{\gid}{\subreg}}| &= 2 + |\re{\subreg}|
    \end{aligned}\)
    \caption{Definition of the size of a regex.}
    \label{fig:regex_size}
    \Description{} 
\end{figure}

This definition of the size of a regex is greater than the size of its AST, and less than three times the size of its AST. We do not consider the size of the textual representation of the regex, as we consider regex parsing to be out of the scope of this paper.

\section{Hardness results}
\label{sec:hardness}

\subsection{JavaScript regex matching is PSPACE-hard}
\label{subsec:pspacehard}

\authorcite{uezato_rewblk} proves that regex matching with backreferences and lookarounds is PSPACE-hard by reducing QBF' to it. However, the reduction \citeauthor{uezato_rewblk} describes makes use of duplicate capture groups, and the semantics of regular expressions in~\cite{uezato_rewblk} is not the same as in JavaScript; for instance, the semantics in~\cite{uezato_rewblk} is not a backtracking one, and it allows all iterations of quantifiers to match the empty string. Therefore, \citeauthor{uezato_rewblk}'s result is not directly applicable to JavaScript. In this section, we adapt this reduction to JavaScript regular expressions and prove its correctness in Rocq. The reduction we present here is also partly inspired from~\cite{perl-3sat}, where a reduction from 3-SAT to Perl regex matching is presented.

We polynomially reduce QBF' to $\mathrm{JSREGTEST}$; this reduction transforms in polynomial time each QBF' input problem to a regex and a string, such that the regex finds a match on the string if and only if the original formula is true. As in~\cite{uezato_rewblk}, the idea is to let the backtracking regex matching algorithm simulate the naïve algorithm for checking the validity of a QBF. In this process, the group map, which remembers the values of each capture group, will simulate the environment that maps each variable to $\top$ or $\bot$.

We start with an example. Let $q_\mathrm{a} = \neg\exists x_1. \exists x_2. \neg\exists x_3. (x_1 \lor x_2) \land (\neg x_2 \lor x_3)$ be a QBF in the input format of QBF'. We want to set up a regex and string of size polynomial in $|q_\mathrm{a}|$ such that $q_\mathrm{a}$ is true if and only if the regex matches the string from the beginning. Our reduction produces the following regex $r_\mathrm{a}$ and the following string $s_\mathrm{a}$:
\begin{align*}
    r_\mathrm{a} &= \re{\neglookahead{\mathnormal{E(x_1) E(x_2)} \neglookahead{\mathnormal{E(x_3)} \noncap{\disjunction{\backref{1}}{\backref{2}}}{;}\noncap{\disjunction{\backref{2}x}{\backref{3}}}{;}}}} \\
    s_\mathrm{a} &= \str{x;x;x;x;x;z} \\
    \text{where } E(x_i) &= \re{\noncap{\disjunction{\group{i}{x}}{x}};}
\end{align*}
Here is a breakdown of the regex. First, the core, $\re{\noncap{\disjunction{\backref{1}}{\backref{2}}}{;}\noncap{\disjunction{\backref{2}x}{\backref{3}}}{;}}$. We assume first that the group map maps each group from 1 to 3 to either \str{x} or the empty string. Under this assumption, the sequence \re{\noncap{\disjunction{\backref{1}}{\backref{2}}};\noncap{\disjunction{\backref{2}x}{\backref{3}}};} matches the string \str{x{;}x{;}} if and only if both subregexes match the string \str{x{;}}. The components of the sequence match if and only if the groups are correctly set up. For example, the fragment \re{\noncap{\disjunction{\backref{2}x}{\backref{3}}};} matches the string \str{x{;}} if and only if group 2 is set to the empty string (false) or group 3 is set to \str{x} (true), hence if and only if the formula $\neg x_2 \lor x_3$ is true given values for each group. Thus, the whole inside fragment tests one assignment: it matches the string \str{x{;}x{;}} if and only if the underlying propositional formula of the QBF $q_e$ is true given values for each variable.

Then, the outside fragment, i.e. the part with the $E(x_i)$ and the lookarounds, forces the backtracking engine to consider all possible assignments.
From now on, we say that a group map $\gm$ \emph{agrees} with an assignment $e$ of variables $x_1, x_2, ..., x_n$ to $\bot$ or $\top$ (an \emph{environment}) when for each $x_i$, $\gm$ maps group $i$ to \str{x} if and only if $e$ maps $x_i$ to $\top$, and $\gm$ has no mapping for group $i$ if and only if $e$ maps $x_i$ to $\bot$.
We say that a group map $\gm$ agrees with an environment $e$ on variables in $X$ if and only if the above condition holds for each $x_i \in X$.
$E(x_3) = \re{\noncap{\disjunction{\group{3}{x}}{x}};}$ always matches the string \str{x{;}}, but has two ways of doing so, either by capturing the \chr{x}, hence setting group $3$ to \str{x}, or by not capturing the \chr{x}, leaving group $3$ undefined.
Therefore, the backtracking engine tries matching $E(x_3)$ followed by the inner fragment on the string \str{x{;}x{;}x{;}} by first setting group $3$ to \str{x} (true), then trying to match the inner fragment on the string \str{x{;}x{;}}.
If this does not succeed, it backtracks, leaving group $3$ set to the empty string (false), and tries matching the inner fragment again on the string \str{x{;}x{;}}.
Therefore, given a group map and an environment that agree on variables $x_1$ and $x_2$ and do not define variable $x_3$, the regex $\re{\mathnormal{E(x_3)} \noncap{\disjunction{\backref{1}}{\backref{2}}};\noncap{\disjunction{\backref{2}x}{\backref{3}}};}$ matches the string \str{x{;}x{;}x{;}} if and only if the partially quantified boolean formula $\exists x_3. (x_1 \lor x_2) \land (\neg x_2 \lor x_3)$ is true.
Going further outside, the negative lookahead \re{\neglookahead{\mathnormal{E(x_3)} \noncap{\disjunction{\backref{1}}{\backref{2}}};\noncap{\disjunction{\backref{2}x}{\backref{3}}};}} matches the string \str{x{;}x{;}x{;}} if and only if the lookahead regex $E(x_3) \re{\noncap{\disjunction{\backref{1}}{\backref{2}}};\noncap{\disjunction{\backref{2}x}{\backref{3}}};}$ does \emph{not} match \str{x{;}x{;}x{;}}, i.e. if and only if the partially quantified boolean formula $\neg\exists x_3. (x_1 \lor x_2) \land (\neg x_2 \lor x_3)$ is true (note the negation before the existential quantifier).
Then similarly, given a group map and environment that agree on variable $x_1$ and do not define $x_2$ and $x_3$, the regex \re{\mathnormal{E(x_2)} \neglookahead{\mathnormal{E(x_3)} \noncap{\disjunction{\backref{1}}{\backref{2}}};\noncap{\disjunction{\backref{2}x}{\backref{3}}};}} matches the string \str{x{;}x{;}x{;}x{;}} if and only if the partially quantified boolean formula $\exists x_2. \neg\exists x_3. (x_1 \lor x_2) \land (\neg x_2 \lor x_3)$ is true.
Then, the regex \re{\mathnormal{E(x_1) E(x_2)} \neglookahead{\mathnormal{E(x_3)} \noncap{\disjunction{\backref{1}}{\backref{2}}};\noncap{\disjunction{\backref{2}x}{\backref{3}}};}} matches the string \str{x{;}x{;}x{;}x{;}x{;}} with an empty group map if and only if the quantified boolean formula $\exists x_1. \exists x_2. \neg\exists x_3. (x_1 \lor x_2) \land (\neg x_2 \lor x_3)$ is true.
Finally, with the outer negative lookahead, the regex $r_e$ matches the string \str{x{;}x{;}x{;}x{;}x{;}} with an empty group map if and only if the QBF $q_e$ is true. Since the final \chr{z} is not used anywhere during the matching process, and a regex does not need to match the entire string to match the said string, it turns out that the regex $r_e$ matches the string $s_e$ if and only if the QBF $q_e$ is true.

More generally, let $q = Q_1x_1. Q_2x_2. ...Q_nx_n. \varphi(x_1, x_2, ..., x_n)$ be a QBF that respects the input format of QBF', where for all $i$, $Q_i \in \{\exists, \neg\exists\}$ and $\varphi$ is a possibly negated propositional formula in CNF.
Let $\varphi(x_1, x_2, ..., x_n) = C_1 \land C_2 \land ... \land C_m$ or $\varphi(x_1, x_2, ..., x_n) = \neg(C_1 \land C_2 \land ... \land C_m)$, where each of the $C_j$ is a clause (disjunction of literals). Let for all $j$, $C_j = l_{j, 1} \lor l_{j, 2} \lor ... \lor l_{j, v_j}$ where each $l_{j,k}$ is a literal $x_i$ or $\neg x_i$. We define:
\begin{align*}
    R_{\mathrm{lit}}(x_i) &= \re{\backref{\mathnormal{i}}} \quad \text{for all } i \in \{1, ..., n\} \\
    R_{\mathrm{lit}}(\neg x_i) &= \re{\backref{\mathnormal{i}}x} \quad \text{for all } i \in \{1, ..., n\}
\end{align*}
This way, given a group map and environment that agree on variable $x_i$, $R_{\mathrm{lit}}(x_i)$ matches the string \str{x} with the group map if and only if the group map maps $i$ to the string \str{x}, i.e. if and only if variable $x_i$ is true in the environment.
Conversely, $R_{\mathrm{lit}}(\neg x_i)$ matches the string \str{x} if and only if group $i$ is set to the empty string, i.e. if and only if the literal $\neg x_i$ is true under the environment.

We then define:
$$
    R_{\mathrm{cl}}(C_j) = \re{\noncap{\mathnormal{R_{\mathrm{lit}}(l_{j,1})} | \mathnormal{R_{\mathrm{lit}}(l_{j,2})} | ... | \mathnormal{R_{\mathrm{lit}}(l_{j,v_j})}};} \quad \text{for all } j \in \{1, ..., m\}
$$
This way, given a group map and an environment that agree, $R_{\mathrm{cl}}(C_j)$ matches the string \str{x;} with the group map if and only if either of the $R_{\mathrm{lit}}(l_{j, k})$ matches the string \str{x} with that group map, i.e. if and only if either of the literals $l_{j, k}$ is true, i.e. if and only if clause $C_j$ is true under the environment.

We then define:
$$
    R_{\mathrm{conj}}(\varphi) = R_{\mathrm{cl}}(C_1) \reskip R_{\mathrm{cl}}(C_2) \reskip ... \reskip R_{\mathrm{cl}}(C_m)
$$
This way, given a group map and environment that agree, $R_{\mathrm{conj}}(\varphi)$ matches the string \str{x;x;...;x;} (\str{x;} $m$ times) with the group map if and only if each of the $R_{\mathrm{cl}}(C_j)$ matches the string \str{x;} with the group map, i.e. if and only if all of the clauses $C_j$ are true, i.e. if and only if the conjunction $C_1 \land C_2 \land ... \land C_j$ is true under the environment.

We then define:
$$
    R_{n+1} = R_{\mathrm{prop}}(\varphi) = \begin{cases}
        R_{\mathrm{conj}}(\varphi) & \text{if } \varphi(x_1, ..., x_n) = C_1 \land C_2 \land ... \land C_m \\
        \re{\neglookahead{\mathnormal{R_{\mathrm{conj}}}}} & \text{if } \varphi(x_1, ..., x_n) = \neg(C_1 \land C_2 \land ... \land C_m)
    \end{cases}
$$
This way, given a group map and environment that agree:
\begin{itemize}
    \item if $\varphi$ is not negated, then $R_\mathrm{prop}(\varphi)$ matches the string \str{x;x;...;x;} (\str{x;} $m$ times) with the group map if and only if $R_\mathrm{conj}(\varphi)$ matches the same string with the group map, i.e. if and only if $\varphi = C_1 \land C_2 \land ... \land C_m$ is true under the environment, according to the previous explanation,
    \item if $\varphi$ is negated, then $R_\mathrm{prop}(\varphi)$ matches the string \str{x;x;...;x;} (\str{x;} $m$ times) with the group map if and only if $R_\mathrm{conj}(\varphi)$ does \emph{not} match the same string, i.e. if and only if $C_1 \land C_2 \land ... \land C_m$ is false under the environment (according to the previous explanation), i.e. if and only if $\varphi = \neg(C_1 \land C_2 \land ... \land C_m)$ is true under the environment.
\end{itemize}

Then, we define:
\begin{align*}
    E(x_i) &= \re{\noncap{\disjunction{\group{i}{x}}{x}};} \quad \text{for all } i \in \{1, ..., n\} \\
    R_i &= \begin{cases}
        E(x_i) R_{i+1} & \text{if } Q_i = \exists \\
        \re{\neglookahead{\mathnormal{E(x_i)R_{i+1}}}} & \text{if } Q_i = \neg\exists
    \end{cases}
\end{align*}
Let $q_i = Q_ix_i. Q_{i+1}x_{i+1}. ... Q_nx_n. \varphi(x_1, ..., x_n)$ and consider a group map and environment that agree on variables $x_1$ to $x_{i-1}$ and leave variables $x_i$ to $x_n$ undefined. Then, $R_i$ matches the string \str{x;x;...;x;} (\str{x;} $m+(n+1-i)$ times) with the group map if and only if the partially quantified boolean formula $q_i$ is true under the environment. To see this, proceed by descending induction on $i \in \{1, ..., n+1\}$. The case $i = n+1$ corresponds to the characterization of $R_{\mathrm{prop}}(\varphi)$ given at the previous step. Then, given $i \in \{1, ..., n\}$, assuming the property to be true for $i+1$, consider a group map and environment that agree on variables $x_1$ to $x_{i-1}$ and leave variables $x_i$ to $x_n$ undefined:
\begin{itemize}
    \item if $Q_i = \exists$, then $R_i = E(x_i) R_{i+1} = \re{\noncap{\disjunction{\group{i}{x}}{x}};} R_{i+1}$ matches the string \str{x;x;...;x;} (\str{x;} $m+(n+1-i)$ times) under the group map if and only if $R_{i+1}$ matches the string \str{x;...;x;} (\str{x;} $m+(n+1-(i+1))$ times) under the group map with group $i$ set to either \str{x} or the empty string. This is equivalent by the induction hypothesis to $q_{i+1}$ being true under the environment with either $x_{i+1} = \top$ or $x_{i+1} = \bot$, i.e. to $q_i = \exists x_i. q_{i+1}$ being true under the environment;
    \item if $Q_i = \neg\exists$, then $R_i = \re{\neglookahead{\mathnormal{E(x_i) R_{i+1}}}}$ matches the string \str{x;x;...;x;} (\str{x;} $m+(n+1-i)$ times) with the group map if and only if $E(x_i) R_{i+1}$ does \emph{not} match the string \str{x;x;...;x;} (\str{x;} $m+(n+1-i)$ times) with the group map. By the previous point, this is equivalent to $\exists x_i. q_{i+1}$ being \emph{false}, i.e. to $q_i = \neg\exists x_i. q_{i+1}$ being true under the environment.
\end{itemize}

Finally, we define:
\begin{align*}
    r &= R_1 \\
    s &= \str{x;x;...;x;z} \quad \text{(\str{x;} $n+m$ times, followed by \str{z})}
\end{align*}
We reduce the problem of checking whether $q$ is true to the problem of checking whether $r$ matches $s$ from the beginning of the string.

The reduction is done in polynomial time.
Indeed, an amount of work proportional to the size of the underlying variable is done per quantifier to map each $Q_ix_i$ to either $E(x_i)$ or $\re{\neglookahead{\mathnormal{E(x_i) ...}}}$.
Then for each clause, the reduction performs a constant amount of work (appending a `{;}' to the clause-checking regex) plus an amount of work proportional to the size of the variable for each variable (adding a disjunction plus either $\re{\backref{\mathnormal{i}}}$ or $\re{\backref{\mathnormal{i}}x}$ for variable $i$).

The proof that the above reduction from QBF' to $\mathrm{JSREGTEST}$ is correct, meaning that a QBF is true if and only if the corresponding regex matches the corresponding string, was mechanized in the Rocq proof assistant. It relies on six theorems, each specifying a part of the reduction: one about the $R_{\mathrm{lit}}(l_{j,k})$, one about the $R_{\mathrm{cl}}(C_j)$, one about $R_\mathrm{conj}(\varphi)$, one about $R_{\mathrm{prop}}(\varphi)$, one about $E(x_i)$, and one about the $R_i$. The final theorem about $r$ is as follows:

\begin{theorem}
    The regex $r$ corresponding to a QBF $q$ matches the string $s$ corresponding to the QBF $q$ if and only if $q$ is true.
\end{theorem}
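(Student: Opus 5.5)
The plan is to assemble the final theorem from the six component theorems, following the informal argument above but stated over the Linden backtracking-tree semantics. Each statement will have the form: for every group map $\gm$ that agrees with an environment $e$ on the relevant variables, and for every input $\inp$ positioned at the start of a suffix of the form \str{x;}$^k$ followed by an arbitrary remainder, the tree $\treecont$ with $\istree{\areg{R} :: \cont}{\inp}{\gm}{\forward}{\treecont}$ has a first leaf exactly when the corresponding formula holds under $e$ and the continuation $\cont$ succeeds from the advanced input.

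To avoid reasoning about arbitrary continuations under lookaheads, I will phrase each lemma in continuation-passing style. A lemma describes the leaves of the tree for $\areg{R} :: \cont$ in terms of the trees for $\cont$ from the advanced input, with the appropriate group maps. For $R_{\mathrm{lit}}$, $R_{\mathrm{cl}}$, $R_{\mathrm{conj}}$ and $E(x_i)$, the regexes consume a fixed number of characters, so this description is straightforward.

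\begin{itemize}
\item $R_{\mathrm{lit}}(x_i)$ advances by one \chr{x} iff group $i$ is set. Here I must check that an undefined group makes \re{\backref{\mathnormal{i}}} match the empty string, so that \re{\backref{\mathnormal{i}}x} succeeds iff group $i$ is undefined.
\item $R_{\mathrm{cl}}(C_j)$ succeeds iff some disjunct does. The leaves of a \ctor{Choice} tree are the concatenation of the leaves of its two branches, so the tree has a leaf iff some literal is true.
\item $R_{\mathrm{conj}}$ follows by induction on $m$, using \ruleref{tree}{SeqForward}.
\item $E(x_i)$ yields two branches, with group maps agreeing with $e[x_i\mapsto\top]$ and $e[x_i\mapsto\bot]$.
\end{itemize}

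For $R_{\mathrm{prop}}$ and for the $R_i$ with $Q_i=\neg\exists$, I use the semantics of negative lookahead. The lookahead subtree is evaluated with the empty continuation. The assertion succeeds, leaving the group map unchanged and the input in place, iff that subtree has no leaf; in that case the tree for the outer continuation is grafted on. The key consequence is that the inner statement only needs to be proved for the continuation $[]$.

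I will therefore strengthen the induction on $R_i$ to the following claim: for any continuation $\cont$, the tree for $\areg{R_i} :: \cont$ has a leaf iff there is an extension $e'$ of $e$ to $x_i,\dots,x_n$, chosen according to the quantifier prefix, such that $\cont$ succeeds. This does not quite fit, because negation does not commute with an arbitrary continuation.

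The cleaner route is to exploit the structure of the reduction. Every negation of $R_{i+1}$ wraps the entire remainder of the regex, and the top level has continuation $[]$ followed by the ignored \str{z}. So I will prove, by descending induction on $i$, that with continuation $[]$ and an input whose suffix begins with \str{x;}$^{m+n+1-i}$, the tree has a leaf iff $q_i$ is true under $e$. The existential case uses the $E(x_i)$ lemma together with the induction hypothesis for both branches; the prefix lemma lets me pass continuation $[]$ to $R_{i+1}$. Since the regexes are anchored by position only, the trailing characters and the \chr{z} are irrelevant to the result.

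Finally, I instantiate the claim at $i=1$, with the empty group map, which agrees vacuously on no variables, and the input at position $0$. The statement only requires a match from the beginning; the backtracking semantics also tries later starting positions, so I need the extra observation that the first leaf from position $0$ decides the answer. This direction works because a match at a later start can only exist if there is one at $0$; otherwise I restrict the theorem to matching from the start, as the text does.

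I expect the main obstacle to be the interaction between group resets and the negative lookahead. \ruleref{tree}{GreedyStar} does not occur, but captures made inside a lookahead are discarded on success of the negated assertion. I must therefore check that the group map after \re{\neglookahead{\cdot}} is exactly the incoming one, so that the agreement invariant on $x_1,\dots,x_{i-1}$, with $x_i,\dots,x_n$ undefined, is preserved. I also have to show that groups $j>i$ are indeed undefined on entry, which rests on $E(x_j)$ being the only regex that sets group $j$.
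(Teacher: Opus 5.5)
Your decomposition is essentially the paper's proof. It uses the same six component lemmas ($R_{\mathrm{lit}}$, $R_{\mathrm{cl}}$, $R_{\mathrm{conj}}$, $R_{\mathrm{prop}}$, $E(x_i)$, $R_i$), combined by descending induction on $i$. The invariant is that the group map agrees with the environment on $x_1,\dots,x_{i-1}$ and leaves the remaining groups undefined. Your observation that every negative lookahead wraps the entire remainder of the regex, so the inner claims are only needed with the empty continuation, is what makes that induction go through. It also makes your ``main obstacle'' immediate: a negative lookahead that succeeds has no inner leaf, so there are no captures to discard, and what follows it is the empty continuation anyway.

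The one step you must drop is the claim that a match at a later starting position can only exist if there is one at position $0$. It is false. Whenever $Q_1 = \neg\exists$, the input at position $1$ begins with \chr{;}. Both branches of $E(x_1)$ fail on their first character, so \re{\neglookahead{\mathnormal{E(x_1)R_2}}} succeeds and $r$ matches the empty string at position $1$, whatever the truth value of $q$. For example, $q = \neg\exists x_1.\, x_1$ is false, and $r$ correctly fails at position $0$: the left branch of $E(x_1)$ captures \str{x} and \re{\noncap{\backref{1}};} then matches, so the lookahead fails. Yet $r$ still has a match in $s = \str{x;x;z}$. The theorem is therefore only true for matching from position $0$, which is how the paper states the reduction (``whether $r$ matches $s$ from the beginning of the string''). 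To reduce to unanchored $\mathrm{JSREGTEST}$, you would prepend the anchor \re{\regbos} to $r$. With the fallback you already mention, your argument is correct.
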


\subsection{JavaScript regex matching without negative lookarounds is still PSPACE-hard}
\label{subsec:pspacehard_noneglk}

\subsubsection{Encoding negation without negative lookarounds}
\label{subsubsec:negation_poslk}

Previous work~\cite{uezato_rewblk} and our reduction in the previous subsection relies on negative lookarounds to negate regex matching and thus encode negation. Indeed, it seems at first that negative lookarounds are the only feature of JavaScript regexes that allow such a negation of regex matching. However, we find that the use of negative lookarounds is in fact not essential to encode negated existential quantifiers, and that positive lookarounds are enough to encode a limited form of negation; this reduces the subset of JavaScript regex features that make JavaScript regex matching PSPACE-hard.

Indeed, in JavaScript and other languages such as Perl~\cite{perlre} and PCRE2~\cite{pcre2}, (positive) lookarounds are \emph{atomic}, which means that once a match for the lookaround sub-regex is found, if no match is found for the rest of the regex, no further backtracking is performed inside the lookaround. Therefore, if we know that a branch of the lookaround matched, then we know that there was no higher priority branch of the lookaround that could have matched.

In concrete terms, to check whether a regex $r$ does \emph{not} match a string $s = \str{x;...;x;z}$, we check whether the regex $$ \re{\lookahead{\disjunction{\mathnormal{r}}{\mathnormal{R_{\mathrm{cap}}(\gid_\mathrm{z})}}}\mathnormal{R_{\mathrm{chk}}(\gid_\mathrm{z})}} $$ matches the string $s$, where for all $\gid$, $R_{\mathrm{cap}}(\gid) = \re{\regstar{\noncap{x{;}}}\group{\gid}{z}}$ and $R_{\mathrm{chk}}(\gid) = \re{\regstar{\noncap{x{;}}}\backref{\gid}\anchor{\regeos}}$ and $\gid_\mathrm{z}$ is a fresh group name.
If $r$ indeed does not match $s$, then the right branch of the disjunction $R_{\mathrm{cap}}(\gid_\mathrm{z}) = \re{\regstar{\noncap{x{;}}}\group{\gid_\mathrm{z}}{z}}$ matches \str{x;...;x;z}, capturing the final \chr{z} in group $\gid_\mathrm{z}$, the lookaround succeeds and the remaining regex $R_{\mathrm{chk}}(\gid_\mathrm{z}) = \re{\regstar{\noncap{x{;}}}\backref{\gid_\mathrm{z}}\anchor{\regeos}}$ matches the string $s = \str{x;...;x;z}$.
Conversely, if $r$ does match $s$ (it does not capture group $\gid_\mathrm{z}$, since $\gid_\mathrm{z}$ is fresh), then the lookahead succeeds without capturing group $\gid_\mathrm{z}$, and the remaining regex $R_{\mathrm{chk}}(\gid_\mathrm{z})$ does \emph{not} match the string $s$.
Crucially, since lookarounds are atomic in JavaScript, the matching process does \emph{not} attempt to backtrack in the lookaround to try to match the right branch of the disjunction, which would succeed while making group $\gid_\mathrm{z}$ capture the final \chr{z} and make the entire regex match the string $s$.
Therefore, in that case, the entire regex does not match $s$.
All in all, we have that the regex \re{\lookahead{\disjunction{\mathnormal{r}}{\mathnormal{R_{\mathrm{cap}}(\gid_\mathrm{z})}}}\mathnormal{R_{\mathrm{chk}}(\gid_\mathrm{z})}} matches the string $s$ if and only if the regex $r$ does \emph{not} match the string $s$.

\subsubsection{Reducing QBF' to JavaScript regex matching without negative lookarounds}
To see how we leverage this principle to reduce QBF' to $\mathrm{JSREGTEST}_{\regset_\mathrm{noneglk}}$, where $\regset_\mathrm{noneglk}$ is the set of regexes without negative lookarounds (note that this set is decidable in polynomial time), we start with an example.
Consider the QBF $q_\mathrm{b} = \exists x_1. \neg\exists x_2. x_1 \lor x_2$.
We want to build a regex without negative lookarounds and a string such that the regex matches the string from the beginning of the string if and only if $q_\mathrm{b}$ is true.
Our new reduction produces the following regex $r_\mathrm{b}$ and string $s_\mathrm{b}$:
\begin{align*}
    r_\mathrm{b} &= \re{\mathnormal{E(x_1)} \lookahead{\disjunction{\noncap{\mathnormal{E(x_2)} \noncap{\disjunction{\backref{1}}{\backref{2}}}{;}}}{\mathnormal{R}_{\mathrm{cap}}(\texttt{z}_2)}}\mathnormal{R_{\mathrm{chk}}(\texttt{z}_2)}} \\
    s_\mathrm{b} &= \str{x;x;x;z}
\end{align*}
where $E(x_i)$ is defined as in the previous subsection and $R_{\mathrm{cap}}(\gid)$ and $R_{\mathrm{chk}}(\gid)$ are defined as in section \ref{subsubsec:negation_poslk} for all $\gid$.

Here is a breakdown of the regex. As in section \ref{subsec:pspacehard}, the core of the regex, \re{\noncap{\disjunction{\backref{1}}{\backref{2}}}{;}}, matches the string \str{x;z} given values of groups 1 and 2 if and only if the propositional formula $x_1 \lor x_2$ underlying the QBF $q_\mathrm{b}$ is true given the corresponding values of the variables $x_1$ and $x_2$.
Going further outside, as in section \ref{subsec:pspacehard}, given a group map and environment that agree on variable $x_1$ and leave $x_2$ undefined, the regex \re{\mathnormal{E(x_2)} \noncap{\disjunction{\backref{1}}{\backref{2}}}{;}} matches the string \str{x;x;z} with the group map if and only if the partially quantified boolean formula $\exists x_2. x_1 \lor x_2$ is true under the environment.
Then, as explained above, the regex \re{\lookahead{\disjunction{\noncap{\mathnormal{E(x_2)} \noncap{\disjunction{\backref{1}}{\backref{2}}}{;}}}{\mathnormal{R}_{\mathrm{cap}}(\texttt{z}_2)}}\mathnormal{R_{\mathrm{chk}}(\texttt{z}_2)}} matches the string \str{x;x;z} with the group map if and only if the inner regex \re{\mathnormal{E(x_2)} \noncap{\disjunction{\backref{1}}{\backref{2}}}{;}} does \emph{not} match the string \str{x;x;z} with the group map, hence if and only if the partially quantified boolean formula $\exists x_2. x_1 \lor x_2$ is false, i.e. the partially quantified boolean formula $\neg\exists x_2. x_1 \lor x_2$ is true under the environment.
Finally, as in section \ref{subsec:pspacehard}, the regex \re{\mathnormal{E(x_1)} \lookahead{\disjunction{\noncap{\mathnormal{E(x_2)} \noncap{\disjunction{\backref{1}}{\backref{2}}}{;}}}{\mathnormal{R}_{\mathrm{cap}}(\texttt{z}_2)}}\mathnormal{R_{\mathrm{chk}}(\texttt{z}_2)}} matches the string \str{x;x;x;z} if and only if there exists a value of group 1, either \str{x} or the empty string, that makes the regex \re{\lookahead{\disjunction{\noncap{\mathnormal{E(x_2)} \noncap{\disjunction{\backref{1}}{\backref{2}}}{;}}}{\mathnormal{R}_{\mathrm{cap}}(\texttt{z}_2)}}\mathnormal{R_{\mathrm{chk}}(\texttt{z}_2)}} match the string \str{x;x;z}, i.e. if and only if there exists a value of variable $x_1$ making the partially quantified boolean formula $\neg\exists x_2. x_1 \lor x_2$ true, i.e. if and only if the QBF $q_\mathrm{b} = \exists x_1. \neg\exists x_2. x_1 \lor x_2$ is true.

More generally, let $q = Q_1x_1. Q_2x_2. ...Q_nx_n. \varphi(x_1, x_2, ..., x_n)$ be a QBF that respects the input format of QBF', as in section \ref{subsec:pspacehard}, where for all $i$, $Q_i \in \{\exists, \neg\exists\}$ and $\varphi$ is a possibly negated propositional formula in CNF. Let $\varphi(x_1, x_2, ..., x_n) = C_1 \land C_2 \land ... \land C_m$ or $\varphi(x_1, x_2, ..., x_n) = \neg(C_1 \land C_2 \land ... \land C_m)$, where each of the $C_j$ is a clause. We define:
\begin{align*}
    R_{\mathrm{conj}}(\varphi) &= R_{\mathrm{cl}}(C_1) \reskip R_{\mathrm{cl}}(C_2) \reskip ... \reskip R_{\mathrm{cl}}(C_m) \quad \text{as before} \\
    R'_{n+1} = R'_{\mathrm{prop}}(\varphi) &= \begin{cases}
        R_\mathrm{conj}(\varphi) & \text{if } \varphi(x_1, ..., x_n) = C_1 \land C_2 \land ... \land C_m \\
        \re{\lookahead{\disjunction{\mathnormal{R_\mathrm{conj}(\varphi)}}{\mathnormal{R_\mathrm{cap}(\texttt{z}_\mathnormal{n+1})}}}\mathnormal{R_\mathrm{chk}(\texttt{z}_\mathnormal{n+1})}} & \text{if } \varphi(x_1, ..., x_n) = \neg(C_1 \land C_2 \land ... \land C_m)
    \end{cases}
\end{align*}
This way, consider a group map and an environment that agree on variables $x_1, ..., x_n$ such that the group map does not define group $\texttt{z}_{n+1}$; as in section \ref{subsec:pspacehard}, $R_{\mathrm{conj}}(\varphi)$ matches the string \str{x;...;x;z} (\str{x;} $m$ times followed by a \chr{z}) with the group map if and only if the conjunction $C_1 \land C_2 \land ... \land C_m$ is true under the environment. Then:
\begin{itemize}
    \item If $\varphi$ is not negated, then $R'_\mathrm{prop}(\varphi) = R_\mathrm{conj}(\varphi)$ matches the string \str{x;...;x;z} with the group map if and only if $\varphi(x_1, ..., x_n)$ is true under the environment.
    \item If $\varphi$ is negated, then as explained in section \ref{subsubsec:negation_poslk}, the regex $$ R'_\mathrm{prop}(\varphi) = \re{\lookahead{\disjunction{\mathnormal{R_\mathrm{conj}(\varphi)}}{\mathnormal{R_\mathrm{cap}(\texttt{z}_\mathnormal{n+1})}}}\mathnormal{R_\mathrm{chk}(\texttt{z}_\mathnormal{n+1})}} $$ matches the string \str{x;...;x;z} with the group map if and only the inner regex, $R_\mathrm{conj}(\varphi)$, does \emph{not} match the string \str{x;...;x;z} with the group map, hence if and only if $C_1 \land C_2 \land ... \land C_m$ is false, i.e. if and only if $\varphi(x_1, ..., x_n) = \neg(C_1 \land C_2 \land ... \land C_m)$ is true under the environment.
\end{itemize}
We then define:
\begin{align*}
    R_{\mathrm{cap}}(\gid) &= \re{\regstar{\noncap{x{;}}}\group{\gid}{z}} \\
    R_{\mathrm{chk}}(\gid) &= \re{\regstar{\noncap{x{;}}}\backref{\gid}\anchor{\regeos}} \\
    R'_i &= \begin{cases}
        E(x_i) R'_{i+1} & \text{if } Q_i = \exists \\
        \re{\lookahead{\disjunction{\mathnormal{E(x_i)R'_{i+1}}}{\mathnormal{R_{\mathrm{cap}}(\texttt{z}_\mathnormal{i})}}}\mathnormal{R_{\mathrm{chk}}(\texttt{z}_\mathnormal{i})}} & \text{if } Q_i = \neg\exists
    \end{cases}
\end{align*}
Let $q_i = Q_ix_i. Q_{i+1}x_{i+1}. ... Q_nx_n. \varphi(x_1, ..., x_n)$ for all $i \in \{1, ..., n+1\}$. Then, given a group map and an environment that agree on variables $x_1$ to $x_{i-1}$ such that the group map does not define any other groups, $R'_i$ matches the string \str{x;x;...;x;z} (\str{x;} $m+(n+1-i)$ times followed by a \chr{z}) with the group map if and only if $q_i$ is true under the environment. To see this, we proceed by descending induction on $i \in \{1, ..., n+1\}$. The case $i = n+1$ corresponds to the above characterization of $R'_{\mathrm{prop}}(\varphi)$. Then, let $i \in \{1, ..., n\}$, assume the property to be true for $i+1$, and consider a group map and environment that agree on variables $x_1$ to $x_{i-1}$ such that the group map does not define any other groups:
\begin{itemize}
    \item If $Q_i = \exists$, then similarly to section \ref{subsec:pspacehard}, $R'_i = E(x_i) R'_{i+1}$ matches the string \str{x;x;...;x;z} if and only if $R'_{i+1}$ matches the string \str{x;...;x;z} with the group map with group $i$ set to either \str{x} or to nothing, i.e. if and only if $q_{i+1}$ is true under the environment with some value of the variable $x_i$, i.e. if and only if $q_i = \exists x_i. q_{i+1}$ is true under the environment.
    \item If $Q_i = \neg\exists$, then by the previous point, $E(x_i) R'_{i+1}$ matches the string \str{x;x;...;x;z} with the group map if and only if $\exists x_i. q_{i+1}$ is true under the environment.
    Then, as explained in section \ref{subsubsec:negation_poslk}, the regex $$R'_i = \re{\lookahead{\disjunction{\mathnormal{E(x_i)R'_{i+1}}}{\mathnormal{R_{\mathrm{cap}}(\texttt{z}_\mathnormal{i})}}}\mathnormal{R_{\mathrm{chk}}(\texttt{z}_\mathnormal{i})}}$$ matches the string \str{x;x;...;x;z} with the group map if and only if the inner regex $E(x_i) R'_{i+1}$ does not match the string \str{x;x;...;x;z} with the group map, i.e. if and only if $\exists x_i. q_{i+1}$ is false, i.e. if and only if $q_i = \neg\exists x_i. q_{i+1}$ is true under the environment.
\end{itemize}
We finally define:
\begin{align*}
    r' &= R'_1 \\
    s &= \str{x;x;...;x;z} \quad \text{(\str{x;} $n+m$ times, followed by \str{z})}
\end{align*}
We reduce the problem of determining whether $q$ is true to the problem of determining whether $r'$ matches $s$.

The correctness of the reduction was proved in Rocq, with significant proof sharing with the reduction that uses negative lookarounds. Three theorems (plus auxiliary lemmas) were needed: one specifying the behavior of the $R_{\mathrm{cap}}(\texttt{z}_i)$, one specifying the behavior of the $R_{\mathrm{chk}}(\texttt{z}_i)$, and one specifying the behavior of the $R'_i$. The final theorem follows:

\begin{theorem}
    The regex $r'$ corresponding to a QBF $q$ matches the string $s$ corresponding to the QBF $q$ if and only if $q$ is true.
\end{theorem}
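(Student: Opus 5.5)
The proof is a descending induction on $i \in \{1, \dots, n+1\}$, with the invariant stated in the paper for $R'_i$: for every group map $\gm$ and environment $e$ that agree on $x_1, \dots, x_{i-1}$, where $\gm$ defines no other group (no group $j \geq i$ and no $\texttt{z}_j$), $R'_i$ matches \str{x;...;x;z} (\str{x;} $m+(n+1-i)$ times, then \chr{z}) from its beginning with $\gm$ if and only if $q_i$ is true under $e$. In terms of the Linden semantics, ``matches'' means the backtracking tree of $[\areg{R'_i}]$ from that input has a first leaf. Instantiating at $i=1$ with the empty group map and the string $s$ gives the statement, since the empty map agrees vacuously with every environment and $q_1 = q$.

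Before the induction we prove three auxiliary lemmas, reusing the six theorems from the previous subsection. First, the earlier theorems on $R_{\mathrm{lit}}$, $R_{\mathrm{cl}}$, $R_{\mathrm{conj}}$ and $E(x_i)$ are stated on strings followed by an arbitrary suffix, here the trailing \chr{z}. The only relevant fact about such a suffix is that the fragments never read it, so the earlier results transfer to this string. Second, the $R_{\mathrm{cap}}$ lemma: from a position inside \str{x;...;x;z}, $R_{\mathrm{cap}}(\gid)$ always has a leaf, and in that leaf $\gid$ captures the final \chr{z}. Third, the $R_{\mathrm{chk}}$ lemma: $R_{\mathrm{chk}}(\gid)$ followed by the empty continuation succeeds at such a position iff $\gm$ maps $\gid$ to \str{z}; if $\gid$ is undefined, the backreference matches the empty string and the anchor $\regeos$ fails at every position the star can reach, because a \chr{z} always remains.

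The heart of the argument is a generic negation lemma. Suppose the sub-regex $t$ does not mention $\gid$ and $\gm$ leaves $\gid$ undefined. Then $\re{\lookahead{\disjunction{t}{R_{\mathrm{cap}}(\gid)}}R_{\mathrm{chk}}(\gid)}$ matches iff $t$ does not. To prove it, unfold the lookahead rule. The lookahead keeps only the first leaf of its subtree, and that leaf's group map is passed on. If $t$ has a leaf, then by the Disj rule the first leaf of the disjunction comes from $t$, and its group map leaves $\gid$ undefined because $t$ does not define $\gid$. The $R_{\mathrm{chk}}$ lemma then gives failure. If $t$ has no leaf, the first leaf comes from $R_{\mathrm{cap}}(\gid)$, which sets $\gid$ to \str{z}, and $R_{\mathrm{chk}}$ succeeds. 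The argument also needs position-independence: the lookahead does not advance the input, so $R_{\mathrm{chk}}$ runs from the original position, which still lies inside \str{x;...;x;z}.

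With the lemma in hand, the induction goes as follows. The base case splits on whether $\varphi$ is negated and uses the $R_{\mathrm{conj}}$ theorem, plus the negation lemma with $\gid = \texttt{z}_{n+1}$ in the negated case. In the step, the case $Q_i = \exists$ follows the $E(x_i)$ theorem from the previous subsection: it yields exactly two continuations, with group $i$ set to \str{x} or left undefined, both at the input advanced past \str{x;}. Each of these group maps agrees with $e[x_i \mapsto \top]$ or $e[x_i \mapsto \bot]$ and defines no other groups, so the induction hypothesis applies. The case $Q_i = \neg\exists$ applies the negation lemma with $t = E(x_i)R'_{i+1}$ and $\gid = \texttt{z}_i$, which is fresh because $R'_{i+1}$ mentions only groups $j > i$ and $\texttt{z}_j$ with $j > i$.

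I expect the main obstacle to be the group-map bookkeeping rather than the logic. Lookaheads leak captures from $t$ into the continuation, so after $\re{\lookahead{\dots}R_{\mathrm{chk}}}$ the map may contain groups defined inside $t$. The invariant must therefore be stated with a continuation: either $R'_i$ is always last, or the result is stated for arbitrary continuations that do not read groups $\geq i$. I would first try to keep $R'_i$ as the last action, which holds here because every $R'_{i+1}$ sits at the end of its enclosing sequence or lookahead. A second worry is that the star in $R_{\mathrm{chk}}$ must fail on all branches, not just the first, when $\gid$ is undefined. This needs a small induction on the number of remaining \str{x;} blocks, and it has to take into account the $\acheck{\inp}$ progress check of the GreedyStar rule.
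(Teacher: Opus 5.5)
Your proposal is correct and follows essentially the same route as the paper. The paper also uses a descending induction on $i$ with the invariant for $R'_i$ (group map agreeing with the environment on $x_1, \dots, x_{i-1}$ and defining no other group), separate lemmas for $R_{\mathrm{cap}}$ and $R_{\mathrm{chk}}$, the earlier $R_{\mathrm{conj}}$ and $E(x_i)$ results, and atomicity of the positive lookahead (the first leaf of the disjunction comes from $t$ whenever $t$ matches) to encode both $\neg\exists$ and the negated matrix. One small fix: your $R_{\mathrm{chk}}$ ``iff'' is false for arbitrary captures (e.g.\ $\gid$ capturing \str{x;z} also succeeds), so restrict it to the two cases you actually use, $\gid$ undefined or capturing the final \chr{z}, at block-boundary positions.
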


\subsection{JavaScript regex matching without lookarounds (and without quantified regexes) is OptP-hard}

\subsubsection{Principle of reduction from CNF LEXICOGRAPHIC SAT}

In the absence of negated existential quantifiers, the two reductions presented in the two previous subsections are identical. This reduction is essentially also a reduction from CNF LEXICOGRAPHIC SAT to $\mathrm{JSREGTEST}_{\regset_\mathrm{nolk}}$, where $\regset_\mathrm{nolk}$ is the set of regexes without lookarounds. Indeed, consider for instance the following propositional formula in CNF with three variables $x_1$, $x_2$ and $x_3$:
$$ \varphi_\mathrm{c} = (x_1 \lor x_2) \land (\neg x_2 \lor x_3) $$
The corresponding QBF $q_\mathrm{c} = \exists x_1. \exists x_2. \exists x_3. \varphi_\mathrm{c}$ is encoded by the reductions of the two previous subsections into the problem of deciding whether the following regex $r_\mathrm{c}$ matches the following string $s_\mathrm{c}$:
\begin{align*}
    r_\mathrm{c} &= E(x_1) E(x_2) E(x_3) R_{\mathrm{prop}}(\varphi_\mathrm{c}) \\
    s_\mathrm{c} &= \str{x;x;x;x;x;z}
\end{align*}
where $R_{\mathrm{prop}}(\varphi_\mathrm{c}) = \re{\noncap{\disjunction{\backref{1}}{\backref{2}}}{;}\noncap{\disjunction{\backref{2}x}{\backref{3}}}{;}}$. As explained earlier, each regex $E(x_i)$ sets the capture group $i$ to either \str{x} (representing $\top$) or the empty string (representing $\bot$), and after all capture groups are set, $R_{\mathrm{prop}}(\varphi_\mathrm{c})$ checks whether the resulting environment satisfies the propositional formula $\varphi_\mathrm{c}$. What also makes this reduction a reduction from CNF LEXICOGRAPHIC SAT to JavaScript regex matching without lookarounds is the following observations:
\begin{itemize}
    \item The backtracking semantics of JavaScript regexes gives priority to first setting group $1$ to \str{x} (left branch of the disjunction) before setting it to the empty string (right branch of the disjunction). Within a setting of group 1, the backtracking semantics gives priority to first setting group $2$ to \str{x} before setting it to the empty string, and similarly for group 3. As a consequence, the backtracking semantics of JavaScript specifies that environments are considered with an order of priority corresponding to descending lexicographic order.
    \item When the first match of a regex on a string is found, the matching process returns the values of the capture groups (the group map), which is equivalent in our case to returning the highest priority environment that makes the propositional formula true.
    \item In the absence of negated existential quantifiers in the QBF, the reduction does not involve negative or positive lookarounds in the regex.
\end{itemize}
Therefore, to solve our instance of CNF LEXICOGRAPHIC SAT, one can simply try matching $r_\mathrm{c}$ on $s_\mathrm{c}$ from the beginning of the string and straightforwardly convert the resulting group map into an environment mapping variables to $\top$ or $\bot$ (or return $0$ if $r_\mathrm{c}$ does not match $s_\mathrm{c}$).

More generally, let $\varphi(x_1, x_2, ..., x_n) = C_1 \land C_2 \land ... \land C_m$ be a propositional formula in CNF, where each of the $C_j$ is a clause. We reduce the resulting instance of CNF LEXICOGRAPHIC SAT to the problem of matching the following regex $r$ on the following string $s$:
\begin{align*}
    R_{\mathrm{prop}}(\varphi) &= R_{\mathrm{cl}}(C_1) \reskip R_{\mathrm{cl}}(C_2) \reskip ... \reskip R_{\mathrm{cl}}(C_m) \quad \text{as before} \\
    r &= E(x_1) \reskip E(x_2) \reskip ... \reskip E(x_n) R_{\mathrm{prop}} \\
    s &= \str{x;x;...;x;z} \quad \text{(\str{x;} $n+m$ times, followed by \str{z})}
\end{align*}

\subsubsection{Proof of validity of reduction}

We now prove that the above reduction from CNF LEXICOGRAPHIC SAT to JavaScript regex matching without lookarounds is valid, which means that given an instance of CNF LEXICOGRAPHIC SAT, one can recover the result of this instance from the result of the corresponding instance of regex matching.
The proof is not mechanized, but it uses arguments similar to those of section \ref{subsec:pspacehard} which were mechanized. We are therefore confident that this proof can be mechanized by adapting the mechanized proof in \ref{subsec:pspacehard}.

We set $R_\mathrm{conj}(\varphi) = R_{n+1} = R_\mathrm{prop}(\varphi)$ and for all $i \in \{1, ..., n+1\}$, $R_i = E(x_i) \reskip E(x_{i+1}) \reskip ... \reskip E(x_n) R_\mathrm{prop}(\varphi)$; these notations match the reduction presented in section \ref{subsec:pspacehard}.
We say that a group map $\gm$ is \emph{well-formed} when for all $i \in \{1, ..., n\}$, either $\gm(i)$ is undefined or $\gm(i) = (2(i-1), 2(i-1)+1)$ (corresponding to the $i$-th \chr{x} in the input string).
We then define, for all well-formed group maps $\gm$ and $\gm'$:
\begin{multline*}
    \gm' >_\mathrm{lex} \gm \Leftrightarrow \exists i \in \{1, ..., n\}, \\ \gm'(i) = (2(i-1), 2(i-1)+1) \land \gm(i) = \bot \land \forall j \in \{1, ..., i-1\}, \gm'(j) = \gm(j)
\end{multline*}
where $\gm(i) = \bot$ means that $\gm(i)$ is undefined; this is a strict partial order on the set of well-formed group maps. Finally, we define the following function $\mathcal{G}$ that computes the group map of the first leaf corresponding to a list of actions, input, group map and matching direction:
\begin{definition}
    Let $\actions$ be a list of actions, $inp$ be an input, $\gm$ be a group map and $\dir$ be a matching direction. We define $\mathcal{G}(\actions, inp, \gm, \dir)$ as follows. Let $\treecont$ be the backtracking tree such that $\istree{\actions}{inp}{\gm}{\dir}{\treecont}$. Then:
    \begin{itemize}
        \item if $\treecont$ has a leaf, and its first leaf is $(inp', \gm')$, we define $\mathcal{G}(\actions, inp, \gm, \dir) = \gm'$,
        \item if $\treecont$ has no leaf, $\mathcal{G}(\actions, inp, \gm, \dir)$ is undefined (denoted $\mathcal{G}(\actions, inp, \gm, \dir) = \bot$).
    \end{itemize}
\end{definition}
This function is similar to the function $\mathcal{L}_0$ in~\cite{linden_popl26} that computes the first leaf of a backtracking tree for an input.

We now prove the following invariant:

\begin{theorem}
    Let $i \in \{1, ..., n+1\}$. Let $inp = \inpofidx{s}{2(i-1)}$ be the input in string $s$ at position $2(i-1)$ (right before the $i$-th \chr{x}).
    Let $\gm_0$ be a well-formed group map such that $\gm(i), \gm(i+1), ..., \gm(n)$ are undefined.
    \begin{enumerate}
        \item \label{itm:hasleaf} Assume that $\mathcal{G}([R_i], inp, \gm_0, \forward) = \gm$ for some $\gm$. Then:
        \begin{enumerate}
            \item \label{itm:sat_phi} $\gm$ satisfies $\varphi$,
            \item \label{itm:same_prefix} $\gm$ and $\gm_0$ coincide on indices $1, 2, ..., i-1$,
            \item \label{itm:greater_unsat} for any well-formed $\gm'$ such that $\gm' >_\mathrm{lex} gm$ and $\gm'$ and $\gm$ coincide on indices $1, 2, ..., i-1$, $\gm'$ does not satisfy $\varphi$.
        \end{enumerate}
        \item \label{itm:hasnoleaf} Assume that $\mathcal{G}([R_i], inp, \gm_0, \forward)$ is undefined. Then for any well-formed $\gm$ that coincides with $\gm_0$ on indices $1, 2, ..., i-1$, $\gm$ does not satisfy $\varphi$.
    \end{enumerate}
\end{theorem}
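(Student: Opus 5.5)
I will prove both parts simultaneously by descending induction on $i$ from $n+1$ down to $1$. The inductive step relies on the structure of the backtracking tree for $[R_i] = [E(x_i) R_{i+1}]$ given by the rules \ruleref{tree}{SeqForward} and \ruleref{tree}{Disj}. Throughout, I write ``$\gm$ satisfies $\varphi$'' to mean that the environment agreeing with the well-formed $\gm$ satisfies $\varphi$.

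\textbf{Base case $i = n+1$.} Here $R_{n+1} = R_\mathrm{prop}(\varphi)$ contains no capture groups, so every leaf of its tree carries the unchanged group map $\gm_0$. The input is at position $2n$, where the remaining string is \str{x;...;x;z} with $m$ copies of \str{x;}. I reuse the mechanized characterizations of $R_{\mathrm{lit}}$, $R_{\mathrm{cl}}$ and $R_\mathrm{conj}$ from section \ref{subsec:pspacehard}: the tree of $R_\mathrm{prop}(\varphi)$ has a leaf iff $\gm_0$ satisfies $\varphi$. If $\mathcal{G}$ is defined, then $\gm = \gm_0$, which gives (a) and (b). Claim (c) is vacuous, since $\gm' >_\mathrm{lex} \gm$ requires an index $j \le n$ where $\gm'$ and $\gm$ differ, yet they must coincide on $1..n$. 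If $\mathcal{G}$ is undefined, then $\gm_0$ does not satisfy $\varphi$, and the only well-formed map coinciding with it on $1..n$ is $\gm_0$ itself.

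\textbf{Inductive step.} Unfold $E(x_i) = \noncap{\disjunction{\group{i}{x}}{x}};$ followed by $R_{i+1}$. The tree is a \ctor{Choice} whose left subtree opens group $i$, reads \chr{x}, closes group $i$ at $(2(i-1), 2(i-1)+1)$, reads \chr{;} and continues with $R_{i+1}$ from position $2i$ under $\gm_0^\top := \gm_0[i \mapsto (2(i-1),2(i-1)+1)]$. The right subtree reads \chr{x;} and continues with $R_{i+1}$ under $\gm_0^\bot := \gm_0$. Both maps are well-formed with groups $i+1..n$ undefined, so the induction hypothesis applies to each.

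The key lemma I need is a small compositionality fact: the first leaf of a \ctor{Choice} is the first leaf of the left subtree if one exists, and otherwise the first leaf of the right subtree. I also need that \ctor{Read}, \ctor{Open} and \ctor{Close} prefixes do not change which leaf is first. So $\mathcal{G}([R_i], inp, \gm_0) = \mathcal{G}([R_{i+1}], inp', \gm_0^\top)$ if the latter is defined, and otherwise equals $\mathcal{G}([R_{i+1}], inp', \gm_0^\bot)$. The case analysis then proceeds as follows.

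(1) If the left branch is defined, it yields $\gm$ with (a) holding and $\gm$ coinciding with $\gm_0^\top$ on $1..i$, which gives (b). For (c), take $\gm' >_\mathrm{lex} \gm$ agreeing on $1..i-1$. Then $\gm'(i) = \gm(i) = \top$, so $\gm'$ also agrees with $\gm$ on index $i$, and (c) of the hypothesis at $i+1$ applies.

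(2) If the left branch is undefined and the right is defined, then (a) and (b) follow as before. For (c), take $\gm' >_\mathrm{lex} \gm$ agreeing on $1..i-1$. If $\gm'(i)$ is set, then $\gm'$ coincides with $\gm_0^\top$ on $1..i$, so part (2) of the hypothesis for the left branch shows it is unsatisfying. Otherwise $\gm'(i) = \bot = \gm(i)$, and (c) of the hypothesis at $i+1$ applies.

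(3) If both branches are undefined, part (2) of the hypothesis applied to both branches covers both values of index $i$.

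The main obstacle is not the logic but the tree bookkeeping in the inductive step. It requires establishing the first-leaf decomposition through \ctor{Open}, \ctor{Close} and \ctor{Read} nodes, and checking that the closed group value is exactly $(2(i-1),2(i-1)+1)$ so that well-formedness is preserved. I would also need the group-reset on entering the group to be harmless here, since group $i$ is undefined beforehand. I would state this as a lemma analogous to the properties of $\mathcal{L}_0$ in~\cite{linden_popl26}, from which the first-leaf decomposition of \ctor{Choice} nodes can be taken.
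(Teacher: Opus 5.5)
Your proposal is correct and follows the paper's proof essentially step for step. Both use descending induction on $i$, and both use the same first-leaf decomposition of $\mathcal{G}([R_i], inp, \gm_0, \forward)$: the left branch under $\gm_0[i \mapsto (2(i-1), 2(i-1)+1)]$ if it is defined, otherwise the right branch under $\gm_0$, followed by the same three-way case analysis using parts (b), (c) and (2) of the induction hypothesis.
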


\begin{proof}[Proof (sketch)]
We proceed by descending induction on $i \in \{1, ..., n+1\}$.

First, let $i = n+1$. Let $inp = \inpofidx{s}{2n}$ and $\gm_0$ be a well-formed group map.
\begin{itemize}
    \item Assume that $\mathcal{G}([R_{n+1}], inp, \gm_0, \forward) = \gm$ for some $\gm$. Then statements \ref{itm:sat_phi} and \ref{itm:same_prefix} are consequences of the theorem about $R_\mathrm{conj}(\varphi)$ in section \ref{subsec:pspacehard}, while statement \ref{itm:greater_unsat} has an impossible condition (a group map $\gm'$ cannot both coincide with $\gm$ on all indices and be lexicographically greater than $\gm$).
    \item Statement \ref{itm:hasnoleaf} follows from the fact that the statement ``$\gm$ satisfies $\varphi$'' does not depend on indices other than $1, 2, ..., n$: essentially, the only $\gm$ that coincides with $\gm_0$ on indices $1, 2, ..., n$ is $\gm_0$ itself.
\end{itemize}

Now, let $i \in \{1, ..., n\}$ and assume that the invariant holds for $i+1$. Let $inp_i = \inpofidx{s}{2(i-1)}$ and $\gm_0$ be a well-formed group map such that $\gm(i), \gm(i+1), ..., \gm(n)$ are undefined.

To prove statement \ref{itm:hasleaf}, assume that $\mathcal{G}([R_i], inp_i, \gm_0, \forward) = \gm$ for some $\gm$. Then, letting $inp_{i+1} = \inpofidx{s}{2i}$, one the following is true:
\begin{itemize}
    \item $\mathcal{G}([R_{i+1}], inp_{i+1}, \gm_0[i \mapsto (2(i-1), 2(i-1)+1)], \forward) = \gm$.
    \item $\mathcal{G}([R_{i+1}], inp_{i+1}, \gm_0[i \mapsto (2(i-1), 2(i-1)+1)], \forward) = \bot$ and $\mathcal{G}([R_{i+1}], inp_{i+1}, \gm_0, \forward) = \gm$.
\end{itemize}
Perform case analysis on this:
\begin{itemize}
    \item In the first case, the induction hypothesis directly implies statement \ref{itm:sat_phi}, and it almost directly implies statement \ref{itm:same_prefix}. For statement \ref{itm:greater_unsat}, notice that the condition implies $\gm' >_\mathrm{lex} \gm$ and that $\gm'$ and $\gm$ coincide on indices $1, 2, ..., i=i+1-1$: the induction hypothesis applies.
    \item In the second case, the induction hypothesis directly implies statement \ref{itm:sat_phi}, and it almost directly implies statement \ref{itm:same_prefix}. For statement \ref{itm:greater_unsat}, the condition states that $\gm' >_\mathrm{lex} \gm$ and that $\gm'$ and $\gm$ coincide on indices $1, 2, ..., i-1$. Perform case analysis on whether $\gm'(i) = \gm(i)$; notice that $\gm(i) = \bot$ by statement \ref{itm:same_prefix} of the induction hypothesis.
    \begin{itemize}
        \item If $\gm'(i) = \bot$, then $\gm'$ and $\gm$ coincide on indices $1, 2, ..., i = i+1-1$ and statement \ref{itm:greater_unsat} of the induction hypothesis applies.
        \item If $\gm'(i) = (2(i-1), 2(i-1)+1)$, then statement \ref{itm:hasnoleaf} of the induction hypothesis applies, because $\mathcal{G}([R_{i+1}], inp_{i+1}, \gm_0[i \mapsto (2(i-1), 2(i-1)+1)], \forward) = \bot$.
    \end{itemize}
\end{itemize}

To prove statement \ref{itm:hasnoleaf}, assume that $\mathcal{G}([R_i], inp_i, \gm_0, \forward) = \bot$. Let $inp_{i+1} = \inpofidx{s}{2i}$. Then we have $\mathcal{G}([R_{i+1}], inp_{i+1}, \gm_0[i \mapsto (2(i-1), 2(i-1)+1)], \forward) = \mathcal{G}([R_{i+1}], inp_{i+1}, \gm_0, \forward) = \bot$. For any well-formed group map $\gm$ that coincides with $\gm_0$ on indices $1, 2, ..., i-1$:
\begin{itemize}
    \item if $\gm(i) = \bot$, then the induction hypothesis applies with group map $\gm_0$,
    \item if $\gm(i) = (2(i-1), 2(i-1)+1)$, then the induction hypothesis applies with group map $\gm_0[i \mapsto (2(i-1), 2(i-1)+1)]$.
\end{itemize}
\end{proof}

This invariant implies the following theorem:
\begin{theorem}
    \mbox{}
    \begin{itemize}
        \item Assume that $\mathcal{G}([r = R_1], \inpofidx{s}{0}, \gmempty, \forward) = \gm$ for some $\gm$. Then $\gm$ satisfies $\varphi$, and for any well-formed group map $\gm'$ such that $\gm' >_\mathrm{lex} \gm$, $\gm'$ does not satisfy $\varphi$.
        \item Assume that $\mathcal{G}([r = R_1], \inpofidx{s}{0}, \gmempty, \forward) = \bot$. Then for any well-formed group map $\gm$, $\gm$ does not satisfy $\varphi$.
    \end{itemize}
\end{theorem}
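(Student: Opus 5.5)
This theorem should follow by instantiating the preceding invariant at $i = 1$. The plan has three steps.

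First, check the hypotheses of the invariant for $i=1$. The input $\inpofidx{s}{0}$ is exactly $\inpofidx{s}{2(1-1)}$, the position right before the first \chr{x}. The empty group map $\gmempty$ is well-formed, since every index is undefined. It also satisfies the requirement that indices $1, \ldots, n$ are undefined. Finally, $R_1 = E(x_1) E(x_2) \cdots E(x_n) R_\mathrm{prop}(\varphi)$ is by definition the regex $r$ of the reduction.

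Second, for the first bullet, assume $\mathcal{G}([R_1], \inpofidx{s}{0}, \gmempty, \forward) = \gm$ and apply part~\ref{itm:hasleaf} of the invariant. Statement~\ref{itm:sat_phi} gives directly that $\gm$ satisfies $\varphi$. For the maximality claim, take any well-formed $\gm'$ with $\gm' >_\mathrm{lex} \gm$. The extra side condition of statement~\ref{itm:greater_unsat}, that $\gm'$ and $\gm$ coincide on indices $1, \ldots, i-1$, is vacuous when $i = 1$ because that index set is empty. So statement~\ref{itm:greater_unsat} yields that $\gm'$ does not satisfy $\varphi$. Statement~\ref{itm:same_prefix} is not needed here.

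Third, for the second bullet, assume $\mathcal{G}([R_1], \inpofidx{s}{0}, \gmempty, \forward) = \bot$ and apply part~\ref{itm:hasnoleaf}. Its hypothesis, that $\gm$ coincides with $\gmempty$ on indices $1, \ldots, 0$, is again vacuous. Therefore every well-formed $\gm$ fails to satisfy $\varphi$.

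There is no real obstacle here. The only points needing care are that $\gmempty$ is well-formed and that the empty-prefix conditions are genuinely vacuous. One should also note that the relation ``$\gm$ satisfies $\varphi$'' is the notion used in the invariant, namely evaluating $\varphi$ in the environment that agrees with $\gm$. Under that reading, the theorem says that the first match yields the lexicographically maximal satisfying assignment, or that there is no satisfying assignment at all. This is what the metric reduction from CNF LEXICOGRAPHIC SAT needs, with $T_2$ reading the environment off the group map.
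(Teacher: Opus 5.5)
Your proposal is correct and is the paper's argument: the paper says only that the theorem follows from the invariant, and your instantiation at $i = 1$ with $\gm_0 = \gmempty$ and input $\inpofidx{s}{0}$ fills in that step. You also correctly note that the prefix-agreement conditions on indices $1, \ldots, 0$ are vacuous, which is what yields the unrestricted statements.
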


This theorem in turn proves the validity of the reduction:
\begin{theorem}
    The result of the instance $\varphi$ of CNF LEXICOGRAPHIC SAT is equal to the transformation of the result of matching the regex $r'$ on the string $s$ into an assignment of propositional variables to $\top$ or $\bot$.
\end{theorem}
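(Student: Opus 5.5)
The plan is to derive this statement directly from the preceding theorem about $\mathcal{G}([R_1], \inpofidx{s}{0}, \gmempty, \forward)$, together with a translation between well-formed group maps and environments. First I will fix the conversion map $T_2$. Given the result of matching $r$ on $s$, take the group map $\gm$ of the highest-priority match; if there is no match, return $0$. Otherwise return the assignment $e_\gm$ with $e_\gm(x_i) = \top$ iff $\gm(i)$ is defined. Since the matching problem returns the first leaf of the backtracking tree, this group map is exactly $\mathcal{G}([r], \inpofidx{s}{0}, \gmempty, \forward)$. This relies on the semantics of~\cite{linden_popl26}: the first leaf of the tree is the JavaScript match result, and the search starts at position $0$. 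Here the anchoring subtlety must be handled. JavaScript tries later starting positions only if position $0$ fails. By the second bullet of the previous theorem, failure at position $0$ implies $\varphi$ is unsatisfiable, so I would either restrict to matching from the beginning, as the paper does throughout, or observe that a match from a later position is discarded by $T_2$ in that case and output $0$.

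Second, I will establish the bijection between well-formed group maps and environments: $\gm \mapsto e_\gm$ and $e \mapsto \gm_e$, with $\gm_e(i) = (2(i-1), 2(i-1)+1)$ when $e(x_i) = \top$. Under this bijection, ``$\gm$ satisfies $\varphi$'' means $e_\gm$ satisfies $\varphi$ (this is how satisfaction is used in the $R_\mathrm{conj}$ theorem). The relation $\gm' >_\mathrm{lex} \gm$ corresponds exactly to $e_{\gm'}$ being lexicographically greater than $e_\gm$ with $\top > \bot$, via the first index where they differ. I also need the returned $\gm$ to be well-formed and defined on groups $1..n$ only. This follows from the invariant's hypotheses and statement (b): groups are only ever set by $E(x_i)$ at the $i$-th \chr{x}, and there are no other groups since there are no lookarounds.

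Third, I will conclude by cases. If a match exists, the previous theorem says $e_\gm$ satisfies $\varphi$ and no lexicographically greater environment does. Since lexicographic order on assignments is total, $e_\gm$ is the lexicographic maximum satisfying assignment, which is the answer of CNF LEXICOGRAPHIC SAT. If no match exists, no environment satisfies $\varphi$, so the answer is $0$, which is what $T_2$ outputs. Finally, I will check that $T_1$ (building $r,s$) and $T_2$ run in polynomial time; this is immediate from the size analysis given earlier.

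The main obstacle is the first step: justifying rigorously that the ``result of matching'' in $\mathrm{JSREGMATCH}$ coincides with the first-leaf group map $\mathcal{G}$ from position $0$, including how groups are reset or reported as undefined. The statement also mentions $r'$, which here coincides with $r$, since there are no negated quantifiers. I would handle this by invoking the equivalence between the first leaf and the Warblre result proved in~\cite{linden_popl26}.
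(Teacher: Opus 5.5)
Your proposal is correct and follows the paper's route. The paper obtains this statement directly from the preceding theorem on $\mathcal{G}([R_1], \inpofidx{s}{0}, \gmempty, \forward)$, using the same transformation ($x_i \mapsto \top$ iff $\gm(i)$ is defined). Your extra bookkeeping (totality of $>_\mathrm{lex}$ on well-formed maps, matching from position $0$, and $r'$ coinciding with $r$) makes explicit what the paper leaves implicit.

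One small correction: well-formedness of the returned $\gm$ does not follow from statement (b) at $i = 1$, since that statement is vacuous there. It follows instead from the structural observation you also give: only $E(x_i)$ sets group $i$, it does so at the $i$-th \chr{x}, and $r$ has no other groups.
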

The transformation from a group map $\gm$ consists in the assignment that maps each variable $x_i$ to either $\top$ if $\gm(i)$ is defined or $\bot$ otherwise.

\section{Membership results}
\label{sec:completeness}

The previous section proves that JavaScript regex matching is PSPACE-hard and OptP-hard. We will now show that with one restriction, JavaScript regex matching is in PSPACE, and that JavaScript regex matching with the same restriction and without lookarounds is in OptP. Together with the results of the previous section, this will show that JavaScript regex matching with the restriction is PSPACE-complete and that JavaScript regex matching with the same restriction and without lookarounds is OptP-complete.

The restriction that we pose on JavaScript regex matching is the absence of \emph{lower-bounded} quantifiers, that is quantifiers whose minimum number of repetitions is non-zero. This is because lower-bounded quantifiers can incur an exponentially long matching process, and as such the argument that we present below is not sound in the presence of such quantifiers. For instance, in order to match the regex \re{\regepsilon\{n\}} on the empty string, the naïve backtracking algorithm matches the sub-regex \re{\regepsilon} $n$ times, which is exponential in the textual representation of the regex ($\mathrm{\Theta}(\log n)$).

\paragraph{Plan to show that JavaScript regex matching without lower-bounded quantifiers belongs in PSPACE}
In order to show PSPACE membership of the JavaScript regex matching problem without lower-bounded quantifiers, we proceed as follows. We present an algorithm that, given a regex and a string, finds the top priority result of matching that regex on that string according to JavaScript regex semantics.
We then show that this algorithm uses a polynomial amount of space (in the regex and input sizes).
This algorithm is very similar to the functional semantics of~\cite{linden_popl26}.
To show that it is correct and uses a polynomial amount of space, we show properties about the functional semantics of~\cite{linden_popl26}, then transfer these results to our algorithm.
Sections \ref{subsec:functional_termination} and \ref{subsec:state_size} show the properties of the functional semantics that we prove, while Section \ref{subsec:pspace_algo} presents our algorithm and transfers these properties to it.

\paragraph{Plan to show that JavaScript regex matching without lookarounds nor lower-bounded quantifiers belongs in OptP}
In order to show OptP membership of the JavaScript regex matching problem without lookarounds nor lower-bounded quantifiers, we proceed as follows.
We first sketch an algorithm that given a regex, a string and a list of \emph{choices} (left or right), finds the result corresponding to matching the regex on the string with the list of choices.
The list of choices tells the algorithm which branches of disjunctions to take.
We then show that this algorithm runs in polynomial time given a list of choices, and that given a regex and a string, its minimum result across the possible lists of choices is the top priority result of matching that regex on that string according to JavaScript regex semantics.
To show these two facts, we also transfer properties from the functional semantics of~\cite{linden_popl26} to our algorithm.
Section \ref{subsec:optp_algo} presents our algorithm and transfers the properties of the functional semantics to it.

\subsection{The functional semantics of~\cite{linden_popl26}}
\label{subsec:functional_semantics}

The functional semantics of~\cite{linden_popl26} is a function \texttt{compute\_tree} that takes as arguments a list of actions, an input, a group map and a matching direction, and returns the corresponding backtracking tree.
It is a recursive function that follows the rules of the inductive semantics in~\cite{linden_popl26} to compute the subtrees needed.
Its termination is not clear a priori, since the list of actions grows and shrinks as the function recurses.
Therefore, a fuel argument is added to the function.
This fuel argument decreases by one for each recursive call.
It therefore acts as a bound on the number of nested recursions that the function can use at any given time.
\authorcite{linden_popl26} show that there exists an amount of fuel for any list of actions, input, group map and matching direction (a \emph{semantic state}) such that the functional semantics called with these arguments and this fuel terminates without running out of fuel.
However, this fuel can be exponential in the regex size in the presence of nested quantifiers, such as with the sequence of regexes \re{\regstar{\regepsilon}}, \re{\regstar{\regstar{\regepsilon}}}, \re{\regstar{\regstar{\regstar{\regepsilon}}}}, etc.

In Section \ref{subsec:functional_termination}, we present a more complex, but much more tight fuel function \texttt{fuel} that also ensures termination of the functional semantics without running out of fuel.
This new fuel is polynomial in the regex and input sizes.
In Section \ref{subsec:state_size}, we additionally prove that all the semantic states involved in the computation of a backtracking tree have a size bounded by a polynomial in the regex and input sizes.

\subsection{Showing termination of the functional semantics of~\cite{linden_popl26} with a polynomial fuel}
\label{subsec:functional_termination}

To prove termination of the functional semantics with our new polynomial fuel, we prove that at each step of the functional semantics, when there is a recursive call, the fuel associated with the semantic state of the next call is strictly smaller than the one associated to the original state. Namely, we prove:

\begin{theorem}[fuel decrease]
    For all semantic states $s_1$ and $s_2$, if $s_2$ is one of the next semantic states of $s_1$, then $\fuel(s_2) < \fuel(s_1)$.
    \label{thm:fuel_decrease}
\end{theorem}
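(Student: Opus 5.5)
The proof will be a case analysis over the rules of the functional semantics, one case per rule, with the measure $\fuel$ built so that each case reduces to an arithmetic inequality. I will give $\fuel$ explicitly, since its definition is what makes the theorem true. A semantic state is $(\actions, \inp, \gm, \dir)$. The group map will not enter the measure at all.

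\emph{The measure.} Let $L$ be the number of characters still readable from $\inp$ in direction $\dir$. Set
\[
\fuel(\actions, \inp, \gm, \dir) = \sum_{a \in \actions} c(a, L_a),
\]
a sum of per-action charges with the following ingredients.
\begin{itemize}
\item $L_a$ is an \emph{effective} remaining length. For actions not preceded in $\actions$ by any $\acheck{\inp_0}$, it is the current $L$.
\item For an action lying after some $\acheck{\inp_0}$ (for the nearest such check), $L_a$ is $L_0 - 1$, where $L_0$ is the remaining length at $\inp_0$. If that check succeeds, at least one character will have been consumed, so this is a valid bound.
\item Actions lying \emph{before} such a check (the current star body) are charged relative to the check: they may consume at most $L - (L_0 - 1)$ characters before the check fails or passes, up to truncation.
\item $c(\aclose{\gid}, L) = c(\acheck{\inp_0}, L) = 1$.
\item $c(\areg{r}, L)$ is defined by structural recursion on $r$, shaped like the size function of Fig.~\ref{fig:regex_size} but scaled by lengths:
\[
c(\areg{\sequence{r_1}{r_2}}, L) = 1 + c(r_1, L) + c(r_2, L),
\qquad
c(\areg{\noncap{\disjunction{r_1}{r_2}}}, L) = 1 + \max(c(r_1, L), c(r_2, L)),
\]
and lookarounds and groups add a constant to their body.
\item For the star, $c(\areg{\quant{r}{0}{\Delta}{\greedy}}, L) = (L+1)\cdot(K_r + 3)$, with $K_r$ a linear-size per-iteration cost of $r$. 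This makes the total polynomial, namely $O(|r| \cdot (|s|+1))$, rather than exponential in the nesting depth.
\end{itemize}

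\emph{The cases.} I then go through every rule that recurses and compare the charge of each successor state with that of the original.
\begin{itemize}
\item Sequence, disjunction and group are immediate from the $1+\cdots$ terms. For disjunction, each successor gets one branch, so the $\max$ suffices.
\item For character reads and anchors, $L$ decreases or stays the same while a positive charge disappears. This needs monotonicity of $c$ in $L$, which I prove as a separate lemma by induction on $r$.
\item Lookarounds have two successors: the standalone state $[\areg{r}]$ and the continuation. Both are strict sub-sums, so both are smaller.
\item A $\acheck{\inp_0}$ either fails (a leaf, no successor) or succeeds and is dropped. Dropping it re-anchors the following actions, whose effective length was already $L_0 - 1 \geq L$, so monotonicity applies.
\item The skip branch of a star drops the star's charge.
\end{itemize}

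\emph{The main obstacle} is the iteration branch of the greedy and lazy star rules. There the single action $\areg{r^*}$ at length $L$ is replaced by the three actions $\areg{r}$, $\acheck{\inp}$, $\areg{r^*}$, and the trailing star is now charged at $L - 1$. The required inequality is
\[
c_{\mathrm{body}}(r) + 1 + (L)(K_r+3) < (L+1)(K_r+3),
\]
that is, $c_{\mathrm{body}}(r) + 1 < K_r + 3$. So I must arrange that the body, charged relative to the upcoming check, costs at most $K_r + 1$. This holds only if the body's own inner stars are charged by the relative length, which the check forces to be small, and not by the absolute $L$. Getting this relative accounting consistent under nesting is where I expect the difficulty, since an inner star unfolding inside an outer body introduces a second check. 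The fix I would try first is to always charge against the nearest enclosing check and to carry explicit offsets in the definition of $c$. I would then tune the additive constants (the $+3$ for quantifiers in Fig.~\ref{fig:regex_size} is presumably chosen for exactly this) until every case closes, verifying the arithmetic in Rocq with \texttt{lia}.
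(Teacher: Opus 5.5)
Your proposal leaves open exactly the step that carries the theorem. As long as actions behind a pending check are charged at the frozen length $L_0-1$, no choice of constants closes it.

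With that rule, the trailing star's charge does not move while the current body runs. So every step of the body must be paid for by the body's own charge, which your unrolling inequality forces below $K_r+1$, independently of $L$. But a check bounds the body's consumption from below (at least one character), not from above. Your claim that the body ``may consume at most $L-(L_0-1)$ characters'' is false: it may consume all $L$.

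Take \re{\regstar{\regstar{a}}} on $a^L$. After the outer unrolling to $[\re{\regstar{a}}, \acheck{\inp}, \re{\regstar{\regstar{a}}}]$, the inner star iterates $L$ times (about $3L$ steps). Throughout, the check nearest the trailing outer star stays $\acheck{\inp}$, so the part in front of that check would need a charge of about $3L$.

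Your rules also conflict for actions lying between an inner and an outer check. Under the rule for actions behind a check, the inner \re{\regstar{a}}, once unrolled, sits behind its own fresh check and is charged about $L(K_a+3)$. One step earlier it carried the small relative charge, so the measure goes up.

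The underlying dilemma is that the body and the trailing star compete for the same $L$ characters. If you charge the body proportionally to $L$, the unrolling inequality fails. If you charge it by a constant, its own $\Theta(L)$ steps cannot be paid for. Letting $K_r$ grow with $L$ brings back the blowup in nesting depth that the new fuel is designed to avoid.

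The missing idea is a single budget shared by all nesting levels. The paper cuts the action list into chunks at the checks. It multiplies only the total size of the \emph{last} chunk by the \emph{current} $|\inp|+\prog(\actions,\inp)$. It charges the earlier chunks by plain sizes, where a quantifier heading a middle chunk, or heading the first chunk when $\prog=1$, counts as $1$. Alongside, it proves three invariants of reachable lists: every check is followed by a non-lower-bounded quantifier; every non-last chunk is smaller than the quantifier right after its check (hence smaller than the last chunk); and check inputs are ordered from most to least restrictive.

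With these, consumed characters never raise $|\inp|+\prog$. A star iteration either flips $\prog$ from $1$ to $0$, losing one copy of the last chunk's size, which by the chunk-size invariant dominates the new body plus its check. Or it happens with $\prog=0$, where the leading quantifier was charged at full size and splitting it pays for itself. The decrease is therefore proved together with these invariants, i.e.\ for reachable states, a point your case analysis never considers.

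Your lookaround case has a related flaw. The fresh state $[\areg{\subreg}]$ starts at the current position, possibly in the other direction; for a lookbehind the remaining length is the current index, which can exceed $L$. Inside a pending body, your relative charge for the lookaround is also below the absolute charge of that fresh state, so it is not a sub-sum. The paper instead adds the separate term $\actlkfuel$, charging each lookaround body $(1+|s|)\times|\subreg|$ (plus its nested lookarounds), the worst case over all positions in the string.
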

This is the key result of our membership proof. This theorem is proved in the next section.

From this theorem, we deduce:
\begin{theorem}
    For all $\actions, \inp, \gm, \dir, f$ such that $f > \fuel(\actions, \inp, \gm, \dir)$, we have $\computetree~l~\inp~\gm~\dir~f <> \outoffuel$.
    \label{thm:termination}
\end{theorem}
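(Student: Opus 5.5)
We prove Theorem~\ref{thm:termination} by strong induction on the fuel $f$, or equivalently by induction on $\fuel(\actions, \inp, \gm, \dir)$, using Theorem~\ref{thm:fuel_decrease} as the only semantic ingredient. Concretely, we show by induction on the natural number $f$ the statement: for every semantic state $\sigma = (\actions, \inp, \gm, \dir)$ with $f > \fuel(\sigma)$, $\computetree~\actions~\inp~\gm~\dir~f \neq \outoffuel$.

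For the base case $f = 0$ the hypothesis $0 > \fuel(\sigma)$ is impossible, since $\fuel$ is natural-valued, so there is nothing to prove.

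For the inductive step $f = f'+1$, we unfold one step of $\computetree$. By construction, when the fuel is nonzero, the function performs a case analysis on the head of $\actions$ (and, for a regex action, on the shape of the regex). It returns either a leaf tree directly (\treematch, a mismatch, or a failed progress check), or a tree built from the results of recursive calls on fuel $f'$ at the next semantic states $\sigma'$. These next states are: the continuation for \ctor{Close}; the two premises for disjunction and for stars; the unfolded list for sequences; the inner and outer calls for lookarounds; and the post-read state for characters and backreferences. The result is $\outoffuel$ only if one of these recursive calls returns $\outoffuel$.

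For each such $\sigma'$, Theorem~\ref{thm:fuel_decrease} gives $\fuel(\sigma') < \fuel(\sigma) < f'+1$, hence $\fuel(\sigma') < f'$, that is $f' > \fuel(\sigma')$. The induction hypothesis at $f'$ then shows every recursive call avoids $\outoffuel$, and so does the assembled tree.

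The main obstacle is bookkeeping rather than mathematics. We must ensure that the notion of ``next semantic state'' in Theorem~\ref{thm:fuel_decrease} covers exactly the arguments of every recursive call of $\computetree$. This includes lookarounds, whose inner call runs on a fresh action list (possibly in the backward direction) and whose outer continuation uses the group map extracted from the inner tree's first leaf. If the definition of next states is phrased directly in terms of the recursive calls of $\computetree$, this is immediate. Otherwise, we first prove a small lemma showing that each call site in $\computetree$ produces a next state in that sense, by case analysis mirroring the inductive rules.

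A second subtlety is that the error propagation of $\computetree$ must be exactly ``the result is $\outoffuel$ iff some subcall returns $\outoffuel$''. In particular, for lookarounds the outer call depends on the inner result, so we first establish that the inner result is not $\outoffuel$ before reasoning about the outer call.
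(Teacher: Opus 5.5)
Given Theorem~\ref{thm:fuel_decrease} as stated, your argument is correct and takes the paper's route: the paper just says termination is a direct consequence of the fuel decrease, and your strong induction on $f$, discharging the lookaround's inner call before its continuation, spells out that remark. One refinement: the fuel decrease actually relies on the invariants on reachable action lists (e.g.\ Invariant~\ref{inv:chunk_size_lt}; unfolding the star in $[\re{\regstar{\subreg}} :: \acheck{\inp_0} :: \re{\regstar{a}}]$, with $\inp$ strictly past $\inp_0$ and $\regsize{\subreg}$ large, increases the fuel), so your induction should range over states satisfying those invariants, which is harmless because they hold for $[\areg{\subreg}]$ and are preserved by every recursive call.
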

\begin{proof}
    This is a direct consequence of Theorem \ref{thm:fuel_decrease}. A proof of this theorem that includes the arguments of Theorem \ref{thm:fuel_decrease} is mechanized.
\end{proof}

Finally, we show that for any regex and string, the fuel of the initial state is polynomial in the sizes of the regex and string.
\begin{theorem}
    For all regex $\subreg$ and string $s$, $\fuel([\areg{\subreg}], \inpofidx{s}{0}, \gmempty, \forward) \le (1 + |s|)(\regsize{\subreg} + \regsize{\subreg}^2)$.
    \label{thm:polyfuel}
\end{theorem}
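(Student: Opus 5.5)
The bound depends on the definition of $\fuel$, which the paper has not yet given at this point. So the plan has to be phrased relative to the form I expect that definition to take. I assume $\fuel$ of a semantic state $(\actions,\inp,\gm,\dir)$ is a sum over the actions in $\actions$. Each term $\areg{\subreg}$ would be weighted by a per-regex cost that depends on the number of characters remaining in the matching direction, i.e.\ at most $|s|$ positions. The terms $\aclose{\gid}$ and $\acheck{\inp'}$ would contribute constants.

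The proof then specializes to the initial state $([\areg{\subreg}], \inpofidx{s}{0}, \gmempty, \forward)$. Here the list of actions is a singleton, so $\fuel$ reduces to the per-regex cost of $\subreg$ at position $0$, with $|s|$ characters remaining. The group map is empty and plays no role, since fuel should not depend on captured values. It therefore suffices to prove the following lemma by structural induction on $\subreg$: for any number $k \le |s|$ of remaining characters, the per-regex cost of $\subreg$ is at most $(1+k)(\regsize{\subreg}+\regsize{\subreg}^2)$. The lemma is monotone in $k$, so evaluating at $k=|s|$ gives the theorem.

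The base cases (empty, character, anchor, backreference) should have constant cost, and $\regsize{\cdot} = 1$ there gives slack $2(1+k)$. For sequence and disjunction, the cost should be about $1 + \mathrm{cost}(\subreg_1) + \mathrm{cost}(\subreg_2)$, possibly with the costs of both continuations added. The bound then follows from $a + a^2 + b + b^2 + 1 \le (1+a+b) + (1+a+b)^2$. Groups and lookarounds add a constant plus a close or check action, and they are absorbed the same way, since $|\group{\gid}{\subreg}| = 2 + \regsize{\subreg}$ leaves room.

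The main obstacle is the quantifier case $\quant{\subreg}{0}{\Delta}{\greedy}$. Each non-final iteration must consume at least one character because of the $\acheck{\inp}$ progress action, so a star can iterate at most $k$ times. Its cost should therefore be roughly $(1+k)$ times (the cost of one iteration plus a constant), with the factor $(1+k)$ paid once at the quantifier rather than multiplied at every nesting level. The danger is that nested stars multiply the $(1+k)$ factors and destroy polynomiality. The quadratic term $\regsize{\subreg}^2$ is what should absorb this.

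I would first try to show that the per-regex cost is really of the form $(1+k)\cdot A(\subreg) + B(\subreg)$ with $A, B$ independent of $k$. Here $A$ would count quantifier depth contributions linearly, and the inner star's budget would be shared with the outer star's remaining input rather than multiplied by it. I would then prove $A(\subreg) + B(\subreg) \le \regsize{\subreg} + \regsize{\subreg}^2$ by induction, using $|\quant{\subreg}{0}{\Delta}{\greedy}| = 3 + \regsize{\subreg}$ to pay for the extra linear term at each quantifier. If the definition of $\fuel$ instead multiplies at each level, I would fall back on bounding the inner cost by the already-consumed input, i.e.\ an amortized argument. For the mechanization, the lemma would be stated directly on the fuel definition and closed by nia-style arithmetic after unfolding.
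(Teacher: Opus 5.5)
Your reduction to the singleton state and your remark that the group map plays no role are fine. The rest of the plan, however, rests on a guessed shape of $\fuel$ that is not the paper's, and it goes wrong exactly where the quadratic term comes from. The paper's $\fuel$ is not a sum of per-regex costs with a factor $(1+k)$ paid at each quantifier. Instead, the action list is cut into chunks at the \acheck{} actions, and only the total size of the last chunk is multiplied, once, by $|\inp|_\dir + \prog(\actions,\inp,\dir)$. Lookarounds are paid for by a separate additive term $\actlkfuel(\strof{\inp},\actions)$. For the initial state, $[\areg{\subreg}]$ is a single chunk with no checks, so $\prog = 1$ and the chunk part is just $\fuellast([\areg{\subreg}], \inpofidx{s}{0}, \forward, 1) = (1+|s|)\regsize{\subreg}$. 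Quantifiers only enter through $\regsize{\re{\quant{\subreg'}{0}{\Delta}{\greedy}}} = 3 + \regsize{\subreg'}$. So the case you single out as the main obstacle is immediate and uses none of the $\regsize{\subreg}^2$ slack.

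The $\regsize{\subreg}^2$ term pays for nested lookarounds, and this is the step your plan gets wrong: you treat a lookaround as a constant plus a close or check action, absorbed like a group. A lookaround pushes no such action. It launches a fresh sub-computation on $[\areg{\subreg'}]$ from the current position, possibly in the opposite direction. The fuel budgets this as $\reglkfuel(s, \re{\lookaround{\lk}{\subreg'}}) = (1+|s|)\regsize{\subreg'} + \reglkfuel(s,\subreg')$, so these amounts accumulate along chains of nested lookarounds. The missing lemma is $\reglkfuel(s,\subreg) \le (1+|s|)\regsize{\subreg}^2$, proved by structural induction. The lookaround case uses $\regsize{\subreg'} + \regsize{\subreg'}^2 \le (1+\regsize{\subreg'})^2$, and every other case is $0$, a pass-through, or a $\max$ over strictly smaller subterms. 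Adding the chunk part gives exactly $(1+|s|)(\regsize{\subreg}+\regsize{\subreg}^2)$. Note also that $\reglkfuel$ uses the full length $|s|$ rather than the remaining length. Your auxiliary lemma quantified over all $k \le |s|$ is therefore false for this definition: take $\subreg = \re{\lookahead{a}}$, $k = 0$ and $|s| \ge 4$, where the fuel is $3 + |s| > 6$. It should be stated only at $k = |s|$.
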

\begin{proof}
    This theorem follows from the definition of $\fuel$ in sections \ref{sssec:fuel_definition} and \ref{sssec:extension_lookarounds}. This proof is mechanized.
\end{proof}

We will now define the $\fuel$ function and prove Theorem \ref{thm:fuel_decrease}. We first provide a formula for the $\fuel$ function that assumes that there are no lookarounds, then extend it in Section \ref{sssec:extension_lookarounds}.

\subsubsection{Fuel definition}
\label{sssec:fuel_definition}

Our fuel definition does not need the group map.
To define our termination measure, we first divide the list of actions into \emph{chunks}, sublists of actions separated by the \acheck{} actions. In the semantics in~\cite{linden_popl26}, chunks are created when unrolling non-lower-bounded quantifiers, and erased when successfully passing a progress check.

We define three functions, $\fuelfirst$, $\fuelmiddle$ and $\fuellast$ for different chunks. We define $\fuel(\actions, \inp)$ as follows: given $\actions = c_1 \app \acheck{\inp_1} :: c_2 \app ... \app \acheck{\inp_{n-1}} :: c_n$,
\begin{align*}
    \fuel(\actions, \inp) &= \fuelfirst(c_1, \prog(\actions, \inp)) + \sum_{k=2}^{n-1} (\fuelmiddle(c_k)+1) + \fuellast(c_n, \inp, \prog(\actions, \inp)) \\
    \text{where } \prog(\actions, \inp) &= \begin{cases}
    1 & \text{if } \forall \inp', \acheck{\inp'} \in \actions \Rightarrow \inp < \inp' \\
    0 & \text{otherwise}
    \end{cases}
\end{align*}
Given $\actions = [a_1; ...; a_n]$,
\begin{align*}
    \fuelfirst(\actions, p) &= \begin{cases}
        1 + \sum_{k=2}^n |a_k| & \text{if $a_1$ is a non-lower-bounded quantifier and $p = 1$} \\
        \sum_{k=1}^n |a_k| & \text{otherwise}
    \end{cases} \\
    \fuelmiddle(\actions) &= 1 + \sum_{k=2}^n |a_k| \\
    \fuellast(\actions, \inp, p) &= \left(\sum_{k=1}^n |a_k|\right) \times (|\inp| + p)
\end{align*}
and $|\aclose{\gid}| = 1$ for all $\gid$.

We say that a check input $\inp$ is \emph{smaller} than a check input $\inp'$, denoted $\inp < \inp'$, when $\inp$ represents the same string as $\inp'$ but has a strictly further position in the string than $\inp'$.

\subsubsection{A strictly decreasing fuel (Theorem \ref{thm:fuel_decrease})}

In this section, we focus on explaining why this function is a strictly decreasing termination measure (Theorem \ref{thm:fuel_decrease}). We focus on high-level arguments. For any details, the proof has been fully mechanized.

\paragraph{Sums of sizes}

At the heart of our fuel formula, we sum the sizes of actions in multiple places. The main idea is that, for any regex feature except non-lower-bounded quantifiers, the sum of the sizes of actions in the action list provides a polynomial, strictly decreasing termination measure. Indeed, while the list of actions both grows (e.g. when unrolling a sequence) and shrinks (e.g. when consuming a character) as we explore the tree, at each step, the sum of the sizes of all actions in the list strictly decreases. For instance, in the case of the sequence, the list grows but the sum of sizes decreases by one.

While this simple observation fails in the presence of quantifiers, e.g. when unfolding one iteration of a star, it still explains why a large part of our definitions consists in summing sizes of actions. Everything that is not a sum of sizes is to deal with non-lower-bounded quantifiers.

\paragraph{Multiplication by $|\inp|+1$ for the last chunk}

In the $\fuellast$ function, we multiply the sum of the sizes of each action by the length of the remaining input plus one (we ignore cases where $\prog(\actions, \inp)$ is zero for now; this will be explained later).

One key observation is that in JavaScript, each optional iteration has to consume at least one character in order for the subsequent progress check to pass. A consequence is that non-lower-bounded quantifiers can only do as many iterations as the remaining length of the string plus one.
\authorcite{linden_popl26} used that argument, to define a fuel function where we sum the sizes of each action except for quantifiers, for which the size is multiplied by the remaining length of input.
This fuel function is proved to be strictly decreasing, but it is not polynomial in general. Consider for instance the case of the list of actions $[\re{\regstar{\regstar{\subreg}}}]$ for a regex $\subreg$. For a given input $\inp$,~\cite{linden_popl26} gives it fuel $\lVert \re{\regstar{\regstar{\subreg}}} \rVert_\inp = \lVert \re{\regstar{\subreg}} \rVert_\inp \times |\inp| = \lVert \subreg \rVert_\inp \times |\inp| \times |\inp|$. Each level of nested star multiplies the entire fuel by the length of the input, which can make the fuel grow exponentially with the size of input.

In order to be more precise in the case of nested stars, we present a new key observation: \textbf{Nested Iterations Observation:} each subregex can be matched at most $|\inp|+1$ times, regardless of how many nested quantifiers it is inside of. For instance, when matching \re{\regstar{\regstar{\subreg}}}, it is possible for both the inner and the outer star to be iterated $|\inp|+1$ times. However, it is not possible to have $|\inp|+1$ iterations of the outer star, each of which doing $|\inp|+1$ iterations of the inner star. If one of the outer star iterations iterates the inner star $|\inp|+1$ times, then we consumed at least $|\inp|$ characters in the string, and the outer star cannot iterate anymore.

That is why in the definition of $\fuellast$, in order to be polynomial, we multiply the whole sum of sizes by $|\inp|+1$ at most. If a regex with nested stars ends up in that chunk, the whole size will be multiplied only once by the input length.

While this is more precise, this also makes the formula more complex: we have to treat other chunks than the last one in a different way.

\paragraph{Handling other chunks}

When we unroll a quantifier, for instance when the list of actions $[\re{\regstar{\regstar{\subreg}}}]$ becomes $[\re{\regstar{\subreg}} :: \acheck{\inp} :: \re{\regstar{\regstar{\subreg}}}]$, we end up with multiple chunks, increasing the size of our list. Our formula cannot measure quantifiers in these new chunks the same way that it measured quantifiers in the last chunk, otherwise the fuel would increase.

We again leverage the nested iterations observation. In this example, $\re{\regstar{\subreg}}$ can be iterated $n_1$ times and $\re{\regstar{\regstar{\subreg}}}$ can be iterated $n_2$ times, but we should ensure that $n_1 + n_2 \leq |\inp|+1$ so that the common subregex $\subreg$ is iterated at most $|\inp|+1$ times.

We are trying to compute a bound on the maximum number of steps that executing this list of actions could take. In our example, since $\re{\regstar{\subreg}}$ has a smaller size than $\re{\regstar{\regstar{\subreg}}}$, the worst case happens when iterating $\re{\regstar}$ only once ($n_1 = 1$), and iterating $\re{\regstar{\regstar{\subreg}}}$ $|\inp|$ times ($n_2 = |\inp|$). In that example for the worst case, we need $n_1$ to be at least $1$, otherwise the \acheck{} action will not succeed.

To generalize this idea, we prove the following invariants:
\begin{invariant}
    In any reachable list of actions, each \acheck{} action is followed by a non-lower-bounded quantifier.
\end{invariant}
\begin{invariant}
    Except for the last chunk, the size of any chunk is smaller than the size of the next quantified regex (just after the check at the end of the chunk).
    \label{inv:chunk_size_lt}
\end{invariant}
With these invariants, we know that every chunk has a size smaller than the quantifier that follows it, and that the last chunk has a bigger size than any other chunk. By multiplying the last chunk by $|\inp|+1$ and not the others, we ensure:
\begin{enumerate}
    \item that we enforce the nested iterations observation, not giving too much fuel to nested stars, and
    \item that by multiplying by $|\inp|+1$ the size of the biggest chunk (and not the others), we have enough fuel to handle the worst-case scenario where the bigger regex ($\re{\regstar{\regstar{\subreg}}}$ in our example) is iterated more times.
\end{enumerate}

\paragraph{Measuring quantifiers outside of the last chunk}
Using that same principle (iterating the last chunk is more costly than iterating other part of the list of actions), when a non-lower-bounded quantifier immediately follows an \acheck{} action, the worst-case scenario consists in skipping that quantifier, allowing more iterations of the last chunk.
This is why the first action of each middle chunk is measured as 1.

Similarly, when the first action of the first chunk is a quantifier, we may skip it and measure it as 1, but only in the case where skipping the quantifier does not prevent us from passing the subsequent checks (i.e., only when all checks in the list pass). In cases like $[\re{\regstar{\subreg}} :: \acheck{\inp} :: \re{\regstar{\regstar{\subreg}}}]$ at input $\inp$, in the worst case, \re{\regstar{\subreg}} is iterated once in order to pass the $\acheck{\inp}$ action, and then be able to iterate the last chunk as many times as possible.

\paragraph{Using progress}
In our formula, we multiply the last chunk by either $|\inp|$ or $|\inp|+1$ depending on the value of $\prog(\actions, \inp)$, which encodes whether or not we have made enough progress in the string to pass all checks in the list of actions.
When at least one check does not pass ($\prog(\actions, \inp) = 0$), it means that we are currently iterating a quantifier, but we can't iterate it further unless we read one character during this current iteration and pass the \acheck{} action successfully.
This means that the quantifier after the check can be iterated at most $|\inp|$ times, not $|\inp|+1$, since we only execute them after one character has been read.
On the contrary, if all checks pass ($\prog(\actions, \inp) = 1$), then it is possible to pass all \acheck{} actions without consuming any more characters, leading to at most $|\inp|+1$ iterations of the last chunk.

To mechanize this argument, we had to prove the following invariant:
\begin{invariant}
    All check inputs in a reachable list of actions are ordered from more restrictive to less restrictive.
\end{invariant}

\paragraph{Example}

To see how this formula is strictly decreasing even when dealing with quantifiers, consider matching the regex \re{\regstar{a}} on the string \str{aa}.

At some point, we reach the list of actions $\actions_1 = [a :: \acheck{\str{aa}} :: \re{\regstar{a}}]$ with input $\inp_1 = \str{aa}$. The fuel corresponding to this list of actions and input is $1 + 1 + (4 \times 2) = 10$ (since $\prog(\actions_1, \inp_1) = 0$). As we read the first letter, we move on to the lists $\actions_2 = [\acheck{\str{aa}} :: \re{\regstar{a}}]$, then $\actions_3 = [\re{\regstar{a}}]$, at input $\inp_2 = \inp_3 = \str{a}$. Their fuel is respectively $1 + (4 \times (1 + 1)) = 9$ and $4 \times (1 + 1) = 8$ ($\prog(\actions_2, \inp_2) = \prog(\actions_3, \inp_3) = 1$).

On the longest branch of the tree, the one which iterates the star again, we move on to the list of actions $\actions_4 = [a :: \acheck{\str{a}} :: \re{\regstar{a}}]$, with input $\inp_4 = \str{a}$. This pair has fuel $1 + 1 + (4 \times 1) = 6$ (because $\prog(\actions_4, \inp_4) = 0$).

We can see that at each step, the fuel function has strictly decreased.

\subsubsection{Extension of the \fuel~function to lookarounds}
\label{sssec:extension_lookarounds}

In order to take lookarounds into account, we add enough fuel to match the lookarounds from any position in the string, and adapt our fuel definition to support both matching directions. 

We define the function $\reglkfuel$ as follows:
\begin{align*}
    \reglkfuel(s, \re{\regepsilon}) = \reglkfuel(s, \re{\cd}) &= 0 \\
    \reglkfuel(s, \re{\anc}) = \reglkfuel(s, \re{\backref{\gid}}) &= 0 \\
    \reglkfuel(s, \re{\group{\gid}{\subreg}}) = \reglkfuel(s, \re{\quant{\subreg}{\rmin}{\Delta}{\greedy}}) &= \reglkfuel(s, \subreg) \\
    \reglkfuel(s, \re{\disjunction{\subreg_1}{\subreg_2}}) = \reglkfuel(s, \re{\sequence{\subreg_1}{\subreg_2}}) &= \max(\reglkfuel(s, \subreg_1), \reglkfuel(s, \subreg_2)) \\
    \reglkfuel(s, \re{\lookaround{\lk}{\subreg}}) &= (1 + |s|) \times \regsize{\subreg} + \reglkfuel(s, \subreg)
\end{align*}

We define the function $\actlkfuel$ as $$ \actlkfuel(s, \actions) = \max_{\areg{\subreg} \in \actions} \reglkfuel(s, \subreg) $$

We finally (re)define $\fuel(\actions, \inp, \dir)$ as follows: given $\actions = c_1 \app \acheck{\inp_1} :: c_2 \app ... \app \acheck{\inp_{n-1}} :: c_n$,
\begin{align*}
    \fuel(\actions, \inp, \dir) &= \fuelfirst(c_1, \prog(\actions, \inp, \dir)) + \sum_{k=2}^{n-1} (\fuelmiddle(c_k)+1) \\
    &+ \fuellast(c_n, \inp, \dir, \prog(\actions, \inp, \dir)) + \actlkfuel(\strof{\inp}, \actions) \\
    \text{where } \prog(\actions, \inp, \dir) &= \begin{cases}
    1 & \text{if } \forall \inp', \acheck{\inp'} \in \actions \Rightarrow \inp <_\dir \inp' \\
    0 & \text{otherwise}
    \end{cases}
\end{align*}
Given $\actions = [a_1; ...; a_n]$,
\begin{align*}
    \fuelfirst(\actions, p) &= \begin{cases}
        1 + \sum_{k=2}^n |a_k| & \text{if $a_1$ is a non-lower-bounded quantifier and $p = 1$} \\
        \sum_{k=1}^n |a_k| & \text{otherwise}
    \end{cases} \\
    \fuelmiddle(\actions) &= 1 + \sum_{k=2}^n |a_k| \\
    \fuellast(\actions, \inp, \dir, p) &= \left(\sum_{k=1}^n |a_k|\right) \times (|\inp|_\dir + p)
\end{align*}
We write $|\inp|_\dir$ for the remaining length of input $\inp$ in direction $\dir$.
We say that a check input $\inp$ is \emph{smaller} than a check input $\inp'$ in direction $\dir$, denoted $\inp <_\dir \inp'$, where $\inp$ represents the same string as $\inp'$ but has a position in the string that is strictly further in direction $\dir$ than $\inp'$.

As in~\cite{linden_popl26}, we consider the worst case for the strings of each lookaround, which means taking as remaining length of the input the length of the entire input string.

\subsection{Bounding the size of the semantic states of the functional semantics}
\label{subsec:state_size}
We now polynomially bound the memory footprint of the semantic states of the functional semantics by a polynomial in the size of the input string $s$ and regex $\subreg$.

A semantic state of the functional semantics contains a list of actions, an input, a group map and a matching direction.

The memory footprint of the input is bounded by $|s|$. The group map contains at most as many pairs of indices into the string as there are groups in the regex, and there are at most $\regsize{\subreg}$ such groups. Therefore, the memory footprint of the group map is $\mathrm{O}(\regsize{\subreg} \times \log{|s|})$. The matching direction has a constant memory footprint.

We now polynomially bound the \emph{sizes} of the lists of actions reachable from $\subreg$ and $s$. The key idea to bound these sizes is to use invariant \ref{inv:chunk_size_lt} about the lists of actions, which states that each chunk has a size less than that of the quantified regex following the next check action. This entails that each chunk has a size less than the next one, which allows bounding the sum of the sizes of the chunks by $m(m+1)/2$ where $m$ is the size of the last chunk. Finally, the size of the last chunk is less than or equal to the size of the original regex. This is intuitively because the last chunk is the result of unrolling the original regex without considering duplicates incurred by unrolling a quantified regex, and that cannot increase the size of the last chunk.

We end up with the following theorem, which was proven in Rocq:
\begin{theorem}
    For any list of actions $\actions$ reachable from a regex $\subreg$, $$ |\actions| \le |\subreg| + |\subreg|(|\subreg|+1)/2 $$
    where $|\actions|$ is the sum of the sizes of each action of $\actions$ and $|\acheck{\inp}| = 1$.
\end{theorem}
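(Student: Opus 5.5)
We prove the bound by exhibiting an invariant on reachable lists of actions and showing it is preserved by every step of the functional semantics. Write a reachable list as $\actions = c_1 \app \acheck{\inp_1} :: c_2 \app \dots \app \acheck{\inp_{n-1}} :: c_n$, and write $|c|$ for the sum of the sizes of the actions of a chunk $c$. The invariant we carry is the conjunction of the following three facts:
\begin{enumerate}
\item each $\acheck{}$ is immediately followed by a non-lower-bounded quantifier $q_k$ (the first invariant);
\item $|c_k| < |q_k|$, where $q_k$ is the quantifier heading $c_{k+1}$ (Invariant~\ref{inv:chunk_size_lt});
\item $|c_n| \le |\subreg|$.
\end{enumerate}
The last fact is stated on the last chunk, which always contains every quantifier that heads an earlier chunk. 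From these three facts the bound follows by arithmetic. Each $q_k$ is an action of $c_{k+1}$, so $|q_k| \le |c_{k+1}|$, and hence $|c_1| < |c_2| < \dots < |c_n| \le |\subreg|$. Being strictly increasing positive integers bounded by $|\subreg|$, the chunk sizes sum to at most $\sum_{j=1}^{|\subreg|} j = |\subreg|(|\subreg|+1)/2$. The $n-1$ check actions each contribute $1$. Since there are at most $|c_n| \le |\subreg|$ chunks, they contribute at most $|\subreg|$, which gives the claimed bound $|\subreg| + |\subreg|(|\subreg|+1)/2$.

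The invariant holds trivially for the initial list $[\areg{\subreg}]$, which is a single chunk of size $|\subreg|$. For preservation we do a case analysis on the head action, following the rules of the semantics (including the lookaround rules, which spawn a new computation on the list $[\areg{\subreg'}]$ for a subregex $\subreg'$ of size at most $|\subreg|$). In every case except quantifier unrolling and check consumption, only the first chunk changes, and it does not grow in size. Sequence, disjunction, groups (which replace $\group{\gid}{\subreg'}$ by $\subreg' :: \aclose{\gid}$, of size one less) and reading a character all strictly decrease the size of the head chunk. These steps therefore preserve facts 1 and 2, and fact 3 as well when $n=1$. Passing a check removes a $\acheck{}$, which merges $c_1$ into $c_2$ and leaves the sequence of later chunks unchanged. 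Since the head of the list is then the quantifier $q_1$ itself, $c_1$ is empty at that point and nothing grows. Skipping a quantifier just drops it.

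The main obstacle is the unrolling of a non-lower-bounded quantifier $q = \quant{\subreg'}{0}{\Delta}{\greedy}$ heading $c_1 = q :: c'$. It produces $\subreg' :: \acheck{\inp} :: q' :: c'$, where $q'$ has $\Delta$ decremented or left at $\infty$, and has the same size as $q$ since the size ignores $\Delta$. The new first chunk is $[\subreg']$, and we need $|\subreg'| < |q'| = |\subreg'|+3$, which holds. The new second chunk $q' :: c'$ has the same size as the old $c_1$, so fact 2 for the old boundary after $c_1$ is preserved. Fact 3 is preserved because, when $n=1$, the new last chunk has the same size as the old one. A lower-bounded quantifier would break this, since its unrolling yields no $\acheck{}$, which is why the hypothesis excludes it. I expect the delicate point to be verifying this bookkeeping uniformly in the Rocq development, in particular in the backward direction and for lazy quantifiers, whose two branches produce the same lists in a different order.
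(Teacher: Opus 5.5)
Your proposal is correct and follows essentially the paper's argument: Invariant~\ref{inv:chunk_size_lt} makes the chunk sizes strictly increasing, the last chunk never exceeds $|\subreg|$ because unrolling moves the duplicated body into a fresh, smaller first chunk, and summing gives at most $|\subreg|(|\subreg|+1)/2$ for the chunks plus at most $|\subreg|$ for the checks. One small slip does not affect the result: $c_1$ may be empty, so the chunk sizes are non-negative rather than positive, but $n-1\le|c_n|$ and $\sum_{j=0}^{|\subreg|} j$ still give the same bound.
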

However, the size of the list of actions may be an underestimate of its memory footprint, mainly due to us supporting quantified regexes with an unbounded maximum number of repetitions. Indeed, the size we give to such a regex $\re{\quant{\subreg}{0}{\rmax}{\greedy}}$ is $3 + |\re{\subreg}|$, discarding entirely the memory footprint of $\rmax$. $\rmax$ may not even appear in the maximum numbers of repetitions encountered in the original regex. To solve this problem, we notice the following: the bound above remains a bound on the number of actions in the list of actions, and we can easily bound the memory footprint of each action by either the memory footprint of the original regex (for any action other than an \acheck{} action) or by $\mathrm{O}(\log{|s|})$ (for \acheck{} actions: we remember the index into the input string).

Therefore, the memory footprint of any list of actions reachable from a regex $\subreg$ is $\mathrm{O}((|\subreg| + \log{|s|}) \times [|\subreg| + |\subreg|(|\subreg|+1)/2])$.

This yields a bound on the memory footprint of the semantic states of the functional semantics which is polynomial in the sizes of the initial regex and input string.

\subsection{A PSPACE algorithm for JavaScript regex matching}
\label{subsec:pspace_algo}

Below is the description of a recursive function \texttt{compute\_result} that computes the result (not a backtracking tree) of JavaScript regex matching given a list of actions, input, group map and matching direction, and a function \texttt{pspace\_algo} that uses \texttt{compute\_result} to check whether a given regex matches a given input string.

\begin{lstlisting}[language=Coq]
compute_result act inp gm dir 0 = Out_of_fuel
compute_result [] inp gm dir (S fuel) = Success (inp, gm)
compute_result (Acheck strcheck :: cont) inp gm dir (S fuel) =
    if (is_strict_suffix inp strcheck dir) then
        compute_result cont inp gm dir fuel
    else NoMatch
compute_result (Aclose gid :: cont) inp gm dir (S fuel) =
    compute_result cont inp (GroupMap.close (idx inp) gid gm) dir fuel
compute_result (Areg Epsilon :: cont) inp gm dir (S fuel) =
    compute_result cont inp gm dir fuel
compute_result (Areg (Regex.Character cd) :: cont) inp gm dir (S fuel) =
    match read_char rer cd inp dir with
    | Some (c, nextinp) =>
        compute_result cont nextinp gm dir fuel
    | None => NoMatch
    end
compute_result (Areg (Disjunction r1 r2) :: cont) inp gm dir (S fuel) =
    match compute_result (Areg r1 :: cont) inp gm dir fuel with
    | Out_of_fuel => Out_of_fuel
    | Success lf => Success lf
    | NoMatch => compute_result (Areg r2 :: cont) inp gm dir fuel
    end
compute_result (Areg (Sequence r1 r2) :: cont) inp gm dir (S fuel) =
    compute_result (seq_list r1 r2 dir ++ cont) inp gm dir fuel
compute_result (Areg (Quantified greedy (S min) delta r1) :: cont) inp gm dir (S fuel) =
    let gidl := def_groups r1 in
    compute_result (Areg r1 :: Areg (Quantified greedy min delta r1) :: cont) inp
        (GroupMap.reset gidl gm) dir fuel
compute_result (Areg (Quantified greedy 0 (NoI.N 0) r1) :: cont) inp gm dir (S fuel) =
    compute_result cont inp gm dir fuel
compute_result (Areg (Quantified true 0 delta r1) :: cont) inp gm dir (S fuel) =
    let gidl := def_groups r1 in
    match
        compute_result
            (Areg r1 :: Acheck inp ::
                Areg (Quantified true 0 (noi_pred delta) r1) :: cont)
            inp (GroupMap.reset gidl gm) dir fuel
    with
    | Out_of_fuel => Out_of_fuel
    | Success lf => Success lf
    | NoMatch => compute_result cont inp gm dir fuel
    end
compute_result (Areg (Quantified false 0 delta r1) :: cont) inp gm dir (S fuel) =
    let gidl := def_groups r1 in
    match (compute_result cont inp gm dir fuel) with
    | Out_of_fuel => Out_of_fuel
    | Success lf => Success lf
    | NoMatch =>
        compute_result
            (Areg r1 :: Acheck inp :: Areg (Quantified false 0 (noi_pred delta) r1) :: cont)
            inp (GroupMap.reset gidl gm) dir fuel
    end
compute_result (Areg (Group gid r1) :: cont) inp gm dir (S fuel) =
    compute_result (Areg r1 :: Aclose gid :: cont) inp (GroupMap.open (idx inp) gid gm) dir fuel
compute_result (Areg (Lookaround lk r1) :: cont) inp gm dir (S fuel) =
    match (compute_result [Areg r1] inp gm (lk_dir lk) fuel) with
    | Out_of_fuel => Out_of_fuel
    | Success (_,gmlk) =>
        match (positivity lk) with
        | false => NoMatch
        | true => compute_result cont inp gmlk dir fuel
        end
    | NoMatch =>
        match (positivity lk) with
        | true => NoMatch
        | false => compute_result cont inp gm dir fuel
        end
    end
compute_result (Areg (Anchor a) :: cont) inp gm dir (S fuel) =
    if anchor_satisfied rer a inp then
        compute_result cont inp gm dir fuel
    else NoMatch
compute_result (Areg (Backreference gid) :: cont) inp gm dir (S fuel) =
    match read_backref rer gm gid inp dir with
    | Some (br_str, nextinp) =>
        compute_result cont nextinp gm dir fuel
    | None => NoMatch
    end
	
pspace_algo (r:regex) (s:string) = 
    let init_fuel := fuel r input(s) in
    match compute_result ([Areg r]) (input(s)) GroupMap.empty forward init_fuel with
    | Success _ => Some true
    | NoMatch => Some false
    | OutOfFuel => None (* does not happen *)
    end
\end{lstlisting}
The {\fuel} function is the same as the one described in section \ref{sssec:extension_lookarounds}.
The {\computeresult} function is very similar to the functional semantics of~\cite{linden_popl26}, except that instead of computing the backtracking tree, it explores branches of the backtracking tree in a depth-first search order (starting with the highest priority branches).
We now prove that \texttt{pspace\_algo} is correct and uses a polynomial amount of space.

For the space property, notice that the space used by the algorithm is essentially the space used by the call to \computeresult. The function \computeresult uses space in two ways: to store its arguments (a semantic state), and a call stack. We polynomially bound both.

\paragraph{Bounding the number of stack frames}
Like for the functional semantics of~\cite{linden_popl26}, the fuel passed to the \computeresult function acts as a bound on the number of nested recursive calls that can be performed. This leads to the key observation that the initial fuel given to the function \computeresult is a bound on the number of stack frames used at any point by the algorithm. But we have already proved that the initial fuel given to the function is polynomial: this is Theorem \ref{thm:polyfuel}.

\paragraph{Bounding the size of the semantic states used by the algorithm} 
Notice that the semantic states used by the algorithm are a subset of the semantic states used by the POPL26 functional semantics. Therefore, the result of section \ref{subsec:state_size} applies.

This yields:
\begin{theorem}
    For any regex $\subreg$ and string $s$, executing $\pspacealgo{\subreg}{s}$ requires polynomial space (in the sizes of $\subreg$ and $s$).
    \label{thm:polyspace}
\end{theorem}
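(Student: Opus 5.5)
The total space used by $\pspacealgo{\subreg}{s}$ is at most the depth of the call stack of $\computeresult$ times the maximum size of a single stack frame, plus a polynomial overhead. The overhead covers the top-level computation of $\fuel([\areg{\subreg}], \inpofidx{s}{0}, \gmempty, \forward)$ and storing that number. I plan to bound the two factors separately, using results already established, and then multiply.

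\emph{Stack depth.} Every recursive call in the definition of $\computeresult$ is made with fuel argument $f-1$ when the caller has fuel $f$. This includes both calls in the disjunction and quantifier cases, and the nested call for lookarounds. These calls are sequential, never simultaneous, so at most one chain of nested calls is live at any moment. By induction on the fuel, the depth of live nested calls starting from fuel $f$ is at most $f+1$. The initial fuel is $\fuel([\areg{\subreg}], \inpofidx{s}{0}, \gmempty, \forward)$, and by Theorem~\ref{thm:polyfuel} it is at most $(1+|s|)(\regsize{\subreg}+\regsize{\subreg}^2)$. So the depth is polynomial. This number also has polynomially many bits, so storing and decrementing it costs $\mathrm{O}(\log|s| + \log\regsize{\subreg})$ per frame. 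Computing it in the first place only requires evaluating sums, products and maxima over a list of one action, which is clearly polynomial time and hence polynomial space.

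\emph{Frame size.} Each frame stores a semantic state, a fuel counter, and a constant amount of control information: which branch is pending, and the result of a subcall. A returned result is a pair (input, group map), which has the same size bound as a state component. It remains to show that every semantic state appearing as an argument of $\computeresult$ is a state the functional semantics of~\cite{linden_popl26} also visits. This is the step I expect to need the most care. The argument is a simulation by induction on fuel: each clause of $\computeresult$ recurses on exactly the states that \texttt{compute\_tree} recurses on in the corresponding clause. For the lookaround clause, the positive-success continuation uses the group map $gmlk$ of the first leaf of the lookaround subtree, and the tree semantics uses the same one. Hence the set of states reachable by $\computeresult$ is contained in the reachable states of the functional semantics. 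For those states, Section~\ref{subsec:state_size} gives an input of size $\mathrm{O}(|s|)$ and a group map of size $\mathrm{O}(\regsize{\subreg}\log|s|)$. It also gives an action list with at most $|\subreg| + |\subreg|(|\subreg|+1)/2$ actions, each of footprint $\mathrm{O}(|\subreg| + \log|s|)$. The underlying invariants, such as Invariant~\ref{inv:chunk_size_lt}, were proved for reachable lists, so the containment is exactly what is needed to apply them.

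Multiplying a polynomial number of frames by a polynomial frame size, and adding the polynomial overhead, gives a polynomial bound on total space in $\regsize{\subreg}$ and $|s|$. This proves Theorem~\ref{thm:polyspace}. Combined with correctness and Theorem~\ref{thm:termination}, which guarantees the \texttt{OutOfFuel} branch never occurs, this places the restricted matching problem in PSPACE.
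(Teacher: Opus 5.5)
Your proposal is correct and takes essentially the same approach as the paper. It bounds the number of live stack frames by the initial fuel, which is polynomial by Theorem~\ref{thm:polyfuel}, and bounds each frame using Section~\ref{subsec:state_size}, because the states visited by \computeresult{} are among those of the functional semantics of~\cite{linden_popl26}; you also spell out details the paper leaves implicit, namely the bit-size of the fuel counter, the cost of computing the initial fuel, and the simulation argument for the containment of states.
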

\begin{proof}
We know that during the execution, we never use more stack frames than the initial fuel. From Theorem \ref{thm:polyfuel}, we know that this number is polynomial in the sizes of $\subreg$ and $s$. From section \ref{subsec:state_size}, we know that each of these stack frames themselves require a polynomial amount of space.
\end{proof}

\paragraph{Correctness of the algorithm} We prove:
\begin{theorem}
    For any regex $\subreg$ and string $s$, $\pspacealgo{\subreg}{s}$ returns the result of the instance of JSREGTEST with $\subreg$ and $s$, according to the ECMAScript 2023 specification.
    \label{thm:pspace_correctness}
\end{theorem}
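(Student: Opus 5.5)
We prove Theorem~\ref{thm:pspace_correctness} by relating \computeresult\ to the functional semantics \computetree\ of~\cite{linden_popl26}, and then using the known correctness chain from \computetree\ to the inductive semantics, to Warblre, and finally to ECMAScript 2023. The central lemma is a simulation statement. For every fuel $f$ and every semantic state $(\actions, \inp, \gm, \dir)$:
\begin{itemize}
\item if $\computetree~\actions~\inp~\gm~\dir~f$ returns a tree $\treecont$ (not \outoffuel), then $\computeresult~\actions~\inp~\gm~\dir~f$ returns $\mathtt{Success}(\inp',\gm')$ exactly when the first leaf of $\treecont$ is $(\inp',\gm')$;
\item it returns \texttt{NoMatch} exactly when $\treecont$ has no leaf.
\end{itemize}
We prove this by induction on $f$, with a case analysis that mirrors the clauses of \computeresult.

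Most cases of the induction are direct. For \texttt{Match}, \texttt{Close}, sequence, epsilon, character, anchor, backreference, group and check, the tree constructor adds at most a single node before the subtree, or yields a \texttt{Mismatch}. The first leaf is therefore the first leaf of the unique subtree, and the induction hypothesis applies.

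Choice nodes arise from disjunctions and from greedy and lazy non-lower-bounded quantifiers. For these we use the fact that the first leaf of $\treechoice{\treecont_1}{\treecont_2}$ is the first leaf of $\treecont_1$ if that exists, and otherwise the first leaf of $\treecont_2$. This is exactly the short-circuit evaluation performed by \computeresult. We must also check that the group reset recorded in the tree node ($\treereset{\cdot}{\cdot}$) matches the \texttt{GroupMap.reset} applied in the algorithm, so that reconstructed leaves agree.

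Lookarounds need the most care. The tree stores a lookaround subtree, and leaf reconstruction uses the group map of the first leaf of that subtree; this is the atomicity property. We apply the induction hypothesis twice:
\begin{itemize}
\item once to the inner call in direction $\mathit{lk\_dir}$, which gives either the subtree's first-leaf group map or its emptiness;
\item once to the continuation, starting from that group map in the positive case or from the unchanged $\gm$ in the negative case.
\end{itemize}
This must match the tree rule for lookarounds in~\cite{linden_popl26}, including how the first-leaf function $\mathcal{L}_0$ there treats lookaround nodes. We expect this case to be the main technical obstacle, since it depends on the exact form of that rule.

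To finish, instantiate the lemma with the initial state $([\areg{\subreg}], \inpofidx{s}{0}, \gmempty, \forward)$ and the fuel $\fuel([\areg{\subreg}], \inpofidx{s}{0}, \gmempty, \forward)$ used by \pspacealgo. One small adjustment is needed: Theorem~\ref{thm:termination} requires fuel strictly greater than \fuel, so \pspacealgo\ should pass $\fuel+1$ (or the inequality should be adjusted accordingly). With that fuel, Theorem~\ref{thm:termination} shows \computetree\ does not run out of fuel. The lemma then shows that \computeresult\ does not return \texttt{Out\_of\_fuel} and returns \texttt{Success} exactly when the backtracking tree has a leaf.

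By the equivalence theorem of~\cite{linden_popl26}, the tree has a leaf exactly when the Warblre matcher succeeds from that position. This is faithful to ECMAScript 2023, so it answers JSREGTEST anchored at position $0$. If JSREGTEST is read as unanchored search (the engine tries each start position), we either run the algorithm at each position $0,\dots,|s|$, which preserves polynomial space, or prefix the regex with a lazy $\regstar{\noncap{[\cdcomplement{}]}}$, which is non-lower-bounded, and invoke the corresponding lemma from~\cite{linden_popl26}.
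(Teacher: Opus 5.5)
Your proof is correct and follows the same route as the paper. You show that \computeresult{} agrees with \computetree{} (Success with the first leaf, or NoMatch when there is no leaf) whenever the latter does not run out of fuel, apply Theorem~\ref{thm:termination} to the initial state, and compose with the faithfulness result of~\cite{linden_popl26}. Your other two points fill in details the paper's short proof leaves implicit: Theorem~\ref{thm:termination} as stated needs fuel strictly above $\fuel$, so the call should pass $\fuel+1$; and unanchored search needs either a loop over start positions or a lazy-star prefix.
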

\begin{proof}
We prove that, when the functional semantics of~\cite{linden_popl26} is given enough fuel so that it does not run out of fuel, \computeresult and the functional semantics of~\cite{linden_popl26} are equivalent. By Theorem \ref{thm:termination}, this fact can be applied to our call to \computeresult with fuel $\fuel~\subreg~\inpofidx{s}{0}$, and then composed with the proof of faithfulness of the semantics in~\cite{linden_popl26} (section 4).
\end{proof}

We conclude from Theorems \ref{thm:polyspace} and \ref{thm:pspace_correctness}:
\begin{theorem}
    JavaScript regex matching without lower-bounded quantifiers is in PSPACE.
    \label{thm:pspace}
\end{theorem}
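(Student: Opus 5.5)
The plan is to read Theorem~\ref{thm:pspace} directly off Theorems~\ref{thm:polyspace} and~\ref{thm:pspace_correctness}, with a few side conditions to check. The decision problem is $\mathrm{JSREGTEST}_{\regset}$, where $\regset$ is the set of regexes with no lower-bounded quantifier.

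I would proceed in the following order.

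\begin{enumerate}
\item \textbf{Checking membership in $\regset$.} Membership in $\regset$ is decidable in polynomial time by a single traversal of the AST that checks $\rmin = 0$ at every quantifier. So, on input $(\subreg, s)$, the machine first rejects inputs outside $\regset$ (or, equivalently, only considers the promise problem on $\regset$).

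\item \textbf{Running the algorithm.} The machine then runs $\pspacealgo{\subreg}{s}$.
\begin{itemize}
\item By Theorem~\ref{thm:pspace_correctness}, the output is the JSREGTEST answer dictated by the ECMAScript 2023 specification.
\item By Theorem~\ref{thm:termination} applied with the initial fuel, the output is never the out-of-fuel value, so the machine always answers true or false.
\item Since JSREGTEST asks for a match anywhere in $s$, not only from position~$0$, I would wrap the call in a loop over starting positions $0,\dots,|s|$. This loop costs only $\mathrm{O}(\log|s|)$ extra space. Alternatively, I would rely on the correctness theorem already covering the spec's left-to-right search. If a position-$k$ start is needed, Theorem~\ref{thm:polyfuel}'s bound still applies, since the remaining length only shrinks.
\end{itemize}

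\item \textbf{Accounting for space.}
\begin{itemize}
\item \emph{Stack depth.} The recursion depth is bounded by the initial fuel, which is at most $(1+|s|)(\regsize{\subreg}+\regsize{\subreg}^2)$ by Theorem~\ref{thm:polyfuel}.
\item \emph{Frame size.} Each frame stores a semantic state of size $\mathrm{O}\big((|\subreg|+\log|s|)\cdot|\subreg|^2\big)$ by Section~\ref{subsec:state_size}.
\item \emph{Fuel counter and recomputation.} The fuel counter itself takes $\mathrm{O}(\log)$ bits of the polynomial bound. Recomputing $\fuel$ at the start, i.e.\ sizes, $\reglkfuel$, and a product, is polynomial time and space.
\end{itemize}
Multiplying the stack depth by the frame size gives the polynomial space bound of Theorem~\ref{thm:polyspace}.

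\item \textbf{Moving to a Turing machine.} A deterministic Turing machine can simulate the recursive procedure with an explicit stack at polynomial overhead. Here I must make sure the size measure of Fig.~\ref{fig:regex_size} is polynomially related to the actual encoding length of $\subreg$. This holds because the measure is at most three times the AST size, and the encoding of $\Delta$ only adds to the input length.
\end{enumerate}

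The main obstacle is not new mathematics but faithfulness of the accounting. Two points need care.
\begin{itemize}
\item The semantic states reached by \computeresult must be among those of the functional semantics, so that the invariant-based size bound applies. The one place this could fail is the lookaround case, which starts a fresh action list $[\areg{\subreg_1}]$ in another direction. I would argue that this is still a state reachable from $\subreg$ in the sense of Section~\ref{subsec:state_size}, being a lookaround body with no checks.
\item The $\acheck{}$ entries and the copies of $\Delta$ must not blow up memory. Here I would invoke the per-action footprint bound already given: each action costs at most the original regex's footprint, or $\mathrm{O}(\log|s|)$ for a check.
\end{itemize}
Once these points are checked, PSPACE membership follows.
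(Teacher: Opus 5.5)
Your proposal is correct and follows the paper's route. The paper proves Theorem~\ref{thm:pspace} by combining Theorem~\ref{thm:polyspace} (stack depth bounded by the polynomial initial fuel of Theorem~\ref{thm:polyfuel}, times the polynomial state size from Section~\ref{subsec:state_size}) with the correctness Theorem~\ref{thm:pspace_correctness}; your extra side conditions (the membership check, the optional loop over starting positions, the Turing-machine simulation, and treating lookaround calls as states of the functional semantics) are sound elaborations of points the paper leaves implicit or states in one line.
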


\subsection{An OptP algorithm for JavaScript regex matching without lookarounds}
\label{subsec:optp_algo}

We sketch an algorithm that, given a regex without lookarounds (nor lower-bounded quantifiers) and a string, as well as a list of choices (left or right), computes the result of matching the regex on the string by following the choices in the list of choices instead of performing the full backtracking.
This algorithm is similar to the PSPACE algorithm presented above and to the functional semantics of~\cite{linden_popl26}, and is presented below.

\begin{lstlisting}[language=Coq]
Inductive choice := Left | Right.

optp_result act inp gm dir [] = Out_of_fuel
optp_result [] inp gm dir (_ :: q) = Success (inp, gm)
optp_result (Acheck strcheck :: cont) inp gm dir (_ :: q) =
    if (is_strict_suffix inp strcheck dir) then
        optp_result cont inp gm dir q
    else NoMatch
optp_result (Aclose gid :: cont) inp gm dir (_ :: q) =
    optp_result cont inp (GroupMap.close (idx inp) gid gm) dir q
optp_result (Areg Epsilon :: cont) inp gm dir (_ :: q) =
    optp_result cont inp gm dir q
optp_result (Areg (Regex.Character cd) :: cont) inp gm dir (_ :: q) =
    match read_char rer cd inp dir with
    | Some (c, nextinp) =>
        optp_result cont nextinp gm dir q
    | None => NoMatch
    end
optp_result (Areg (Disjunction r1 r2) :: cont) inp gm dir (Left :: q) =
    optp_result (Areg r1 :: cont) inp gm dir q
optp_result (Areg (Disjunction r1 r2) :: cont) inp gm dir (Right :: q) =
    optp_result (Areg r2 :: cont) inp gm dir q
optp_result (Areg (Sequence r1 r2) :: cont) inp gm dir (_ :: q) =
    optp_result (seq_list r1 r2 dir ++ cont) inp gm dir q
optp_result (Areg (Quantified greedy (S min) delta r1) :: cont) inp gm dir (_ :: q) =
    let gidl := def_groups r1 in
    optp_result (Areg r1 :: Areg (Quantified greedy min delta r1) :: cont) inp
        (GroupMap.reset gidl gm) dir q
optp_result (Areg (Quantified greedy 0 (NoI.N 0) r1) :: cont) inp gm dir (_ :: q) =
    optp_result cont inp gm dir q
optp_result (Areg (Quantified true 0 delta r1) :: cont) inp gm dir (Left :: q) =
    let gidl := def_groups r1 in
    optp_result
        (Areg r1 :: Acheck inp ::
            Areg (Quantified true 0 (noi_pred delta) r1) :: cont)
        inp (GroupMap.reset gidl gm) dir q
optp_result (Areg (Quantified true 0 delta r1) :: cont) inp gm dir (Right :: q) =
    optp_result cont inp gm dir q
optp_result (Areg (Quantified false 0 delta r1) :: cont) inp gm dir (Left :: q) =
    optp_result cont inp gm dir q
optp_result (Areg (Quantified false 0 delta r1) :: cont) inp gm dir (Right :: q) =
    let gidl := def_groups r1 in
    optp_result
        (Areg r1 :: Acheck inp :: Areg (Quantified false 0 (noi_pred delta) r1) :: cont)
        inp (GroupMap.reset gidl gm) dir q
optp_result (Areg (Group gid r1) :: cont) inp gm dir (_ :: q) =
    optp_result (Areg r1 :: Aclose gid :: cont) inp (GroupMap.open (idx inp) gid gm) dir q
optp_result (Areg (Anchor a) :: cont) inp gm dir (_ :: q) =
    if anchor_satisfied rer a inp then
        optp_result cont inp gm dir q
    else NoMatch
optp_result (Areg (Backreference gid) :: cont) inp gm dir (_ :: q) =
    match read_backref rer gm gid inp dir with
    | Some (br_str, nextinp) =>
        optp_result cont nextinp gm dir q
    | None => NoMatch
    end
	
optp_algo (r:regex) (s:string) (choices:list choice) =
    match optp_result ([Areg r]) (input(s)) GroupMap.empty forward choices with
    | Success lf => (Some choices, Success lf)
    | NoMatch => (List.repeat Right (List.length choices), NoMatch)
    | OutOfFuel => (List.repeat Right (List.length choices), OutOfFuel)
    end
\end{lstlisting}

The \optpalgo function above is meant to be called with a list of choices whose length is equal to $\fuel~\subreg~\inpofidx{s}{0}$. This ensures that \optpresult does not run out of choices to make.

When there is no choice to be made, the choice at the head of the list of choices is ignored and popped.
When there is a choice to be made, namely when matching a disjunction or a non-lower-bounded quantifier, instead of trying both choices, the algorithm uses the choice at the head of the list of choices to decide which choice to make.

We now sketch why this algorithm shows that JavaScript regex matching without lookarounds and without lower-bounded quantifiers is in OptP.
We need to prove two things: that this algorithm runs in polynomial time given a list of choices, and that the minimum result of all the branches corresponds to the result of JavaScript regex matching.

\begin{theorem}
    Given a regex $\subreg$ without lookarounds (nor lower-bounded quantifiers), a string $s$ and a list of choices of length $\fuel~\subreg~\inpofidx{s}{0}$, \optpalgo runs in polynomial time in the sizes of $\subreg$ and $s$.
    \label{thm:optp_polytime}
\end{theorem}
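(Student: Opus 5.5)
The plan is to bound separately the number of recursive calls made by \optpresult{} and the cost of each call, and then account for the outer wrapper \optpalgo. The key structural point is that \optpresult{} has no branching. Every clause makes at most one recursive call, and each call consumes exactly one element of the list of choices, so the sequence of calls is a single chain. Its length is therefore at most the length of the list of choices plus one, namely $\fuel~\subreg~\inpofidx{s}{0} + 1$. By Theorem \ref{thm:polyfuel}, this is at most $(1+|s|)(\regsize{\subreg}+\regsize{\subreg}^2)+1$, which is polynomial. Unlike \computeresult, no clause tries one branch and then the other, so we do not need Theorem \ref{thm:fuel_decrease} to control a search tree; the choice list alone bounds the total work.

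Next I would bound the cost of a single step. The arguments of each call are a semantic state together with a suffix of the choice list. I would argue that every state reached by \optpresult{} is also a state reachable by the functional semantics of~\cite{linden_popl26}, since each clause follows one of the tree rules; the lookaround clauses are absent, and the regex contains no lookarounds anyway. The bound of Section \ref{subsec:state_size} then applies: there are at most $|\subreg| + |\subreg|(|\subreg|+1)/2$ actions, each of footprint $\mathrm{O}(|\subreg| + \log|s|)$, plus a group map of size $\mathrm{O}(|\subreg|\log|s|)$ and an input of size at most $|s|$.

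Given these bounds, each step does polynomial work. Pattern matching on the head action, prepending a constant number of actions (via \texttt{seq\_list} or the quantifier unfolding, which copies a subregex of $\subreg$), and appending to \texttt{cont} are each linear in the state size. \texttt{GroupMap.reset} and \texttt{open}/\texttt{close} are linear in the group map. \texttt{read\_char} and \texttt{anchor\_satisfied} run in time polynomial in $|s|$ and the character descriptor. \texttt{read\_backref} compares at most $|s|$ characters. \texttt{is\_strict\_suffix} compares indices. Computing $\texttt{noi\_pred}~\Delta$ on a binary-encoded bound costs $\mathrm{O}(\log\Delta)$, which is within the regex footprint.

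Finally I would account for the wrapper. \optpalgo{} calls \optpresult{} once and then possibly builds \texttt{List.repeat Right} of length $\fuel~\subreg~\inpofidx{s}{0}$. Computing the fuel itself takes polynomial time, because it is a sum and product of sizes over a list of polynomially many actions (the $\reglkfuel$ term vanishes without lookarounds). Multiplying the polynomial number of steps by the polynomial cost per step gives the result.

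The main obstacle is justifying that the states visited by \optpresult{} are covered by the reachability invariant of Section \ref{subsec:state_size}. That result was proved for states reachable from the functional semantics, so I would first try a simulation lemma. It would show that each \optpresult{} transition, with either choice, corresponds to a premise of a tree rule. Reachability, and with it Invariant \ref{inv:chunk_size_lt}, then transfers by induction on the length of the consumed choice prefix.
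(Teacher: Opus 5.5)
Your proposal is correct and follows the paper's own argument. The paper also proves this by noting that \texttt{optp\_algo} never branches, that the choice list caps the recursion at roughly $\fuel~\subreg~\inpofidx{s}{0}$ calls (polynomial by Theorem \ref{thm:polyfuel}), and that each call does polynomial work. Your only addition is to justify the per-call cost explicitly, via the state-size bound of Section \ref{subsec:state_size} and a reachability/simulation argument (the same reasoning the paper uses for \texttt{pspace\_algo}), whereas the paper simply asserts it here.
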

\begin{proof}
    Given a list of choices of length $\fuel~\subreg~\inpofidx{s}{0}$, this algorithm never branches, and it performs a polynomial amount of work per recursive call. Moreover, there are no more than $\fuel~\subreg~\inpofidx{s}{0}$ recursive calls made, and $\fuel~\subreg~\inpofidx{s}{0}$ is polynomial in $\subreg$ and $s$ by Theorem \ref{thm:polyfuel}. Therefore, the algorithm runs in polynomial time given a list of choices.
\end{proof}

\begin{theorem}
    Given a regex $\subreg$ without lookarounds (nor lower-bounded quantifiers) and a string $s$, the minimum result of all calls to \optpalgo with a list of choices of length $\fuel~\subreg~\inpofidx{s}{0}$ corresponds to the result of JavaScript regex matching as defined in the ECMAScript 2023 specification.
    \label{thm:optp_correctness}
\end{theorem}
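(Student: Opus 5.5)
The plan is to relate \optpresult, run with a choice list $c$, to a single root-to-node path of the backtracking tree computed by the functional semantics of~\cite{linden_popl26}. The minimum over all choice lists should then select the first leaf. The order on outputs is the one implicit in \optpalgo: a successful run is encoded by its choice list $c$ read as a binary number with $\texttt{Left} < \texttt{Right}$, compared lexicographically. Failing runs return the all-\texttt{Right} list, which is the maximal value. So ``minimum result'' means the lexicographically smallest choice list that leads to \texttt{Success}, together with its leaf.

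\textbf{Step 1 (path correspondence).} Fix a semantic state $(\actions,\inp,\gm,\dir)$ without lookarounds, and a choice list $c$ with $|c| > \fuel(\actions,\inp,\dir)$. Let $\treecont$ be the backtracking tree with $\istree{\actions}{\inp}{\gm}{\dir}{\treecont}$. I would prove, by induction on $c$ that follows the recursion of \optpresult, the following claim. Consider the \ctor{Choice} nodes met along the descent in $\treecont$, and at each one follow the branch prescribed by the corresponding entry of $c$, popping one entry per non-choice step as \optpresult does. Then $\optpresult~\actions~\inp~\gm~\dir~c$ returns \texttt{Success}$(\inp',\gm')$ exactly when this descent reaches a \treematch{} node whose reconstructed leaf is $(\inp',\gm')$. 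It returns \texttt{NoMatch} exactly when the descent reaches a \ctor{Mismatch} node. Each clause of \optpresult mirrors one rule of the inductive semantics: \ruleref{tree}{Disj}, \ruleref{tree}{GreedyStar} and the lazy variant are choice nodes, and every other rule is a unary node. So each case of the induction is routine once the rules are aligned. The fact that \texttt{Out\_of\_fuel} never occurs comes from Theorem~\ref{thm:fuel_decrease}. Each recursive call consumes one entry of $c$ and strictly decreases $\fuel$, so a list of length $\fuel+1$ never runs out. The statement asks for length exactly $\fuel~\subreg~\inpofidx{s}{0}$; I would either check that the final \texttt{Success} step at $[]$ fits within this length by the same decrease argument, or pad by one entry, which is harmless.

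\textbf{Step 2 (order).} The priority order on leaves, where the leftmost leaf comes first, coincides with the lexicographic order on the sequences of branch choices made at \ctor{Choice} nodes. Two distinct root-to-leaf paths first diverge at some \ctor{Choice} node, and the path taking the left branch there has both higher priority and the smaller choice list. The positions of the entries of $c$ agree up to that divergence point, because both runs pop exactly one entry per step along the common prefix. Hence the lexicographically minimal $c$ yielding \texttt{Success} corresponds to the first leaf of $\treecont$. Entries of $c$ consumed at non-choice steps or after reaching \treematch{} are irrelevant to the outcome. Minimality therefore forces them to be \texttt{Left}, which does not change the leaf returned. If $\treecont$ has no leaf, every run returns \texttt{NoMatch} with the maximal encoding, which correctly signals no match.

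\textbf{Step 3.} I would conclude by composing with the faithfulness theorem of~\cite{linden_popl26}. The first leaf of the backtracking tree for $([\areg{\subreg}],\inpofidx{s}{0},\gmempty,\forward)$ is the ECMAScript 2023 result, exactly as in the proof of Theorem~\ref{thm:pspace_correctness}. The main obstacle is Step 2: the correspondence between choice positions in $c$ and tree depth must be stated carefully, because non-choice nodes also consume entries. I would handle this by strengthening the Step 1 invariant to say that the $k$-th entry of $c$ is consumed at the $k$-th node of the descent, regardless of the node's kind.
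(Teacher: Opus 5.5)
Your proposal is correct and follows the paper's own argument, which is only a two-point sketch. With enough choices \optpresult{} never returns \texttt{Out\_of\_fuel}, so every branch of the backtracking tree is reachable; and since accepting runs return their own choice list while rejecting runs return the all-\texttt{Right} list, the lexicographic minimum selects the highest-priority accepting branch. You also spell out several things the paper leaves implicit: the path correspondence, the normalization of irrelevant entries to \texttt{Left}, and the composition with the faithfulness theorem of~\cite{linden_popl26}. Your caution about the list length is justified: Theorem~\ref{thm:termination} requires strictly more than $\fuel$ entries, and the final $[]$ clause consumes one extra entry (with the stated formula, \re{\regepsilon} on the empty string has $\fuel = 1$ but needs two entries), so padding the list by one entry is the right fix.
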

\begin{proof}
    The correctness of this algorithm relies on the following two points:
    \begin{itemize}
        \item No matter the list of choices given to \optpalgo, as long as its length is equal to $\fuel~\subreg~\inpofidx{s}{0}$, the result is never \outoffuel. This means that we explore all the branches when ranging over all possible lists of choices.
        \item We order the results by lexicographic order on the pair of the returned list of choices and match result. The order on the lists of choices is lexicographic order. Any order on the leaves that puts Success results before NoMatch works (the specific order does not matter, since there is a unique leaf corresponding to each list of choices). This way, the minimal result corresponds to the highest priority list of choices that accepts. Indeed, accepting branches of the matching algorithm return the list of choices untouched and the result, while rejecting branches return the maximum possible list of choices. As such:
        \begin{itemize}
            \item if there is an accepting branch, its return value will be less than all the rejecting branches,
            \item among all the accepting branches, the highest priority one will be the one with the minimal list of choices.
        \end{itemize}
    \end{itemize}
\end{proof}

Combining Theorems \ref{thm:optp_polytime} and \ref{thm:optp_correctness}, we obtain:
\begin{theorem}
    JavaScript regex matching without lower-bounded quantifiers and without lookarounds is in OptP.
    \label{thm:optp}
\end{theorem}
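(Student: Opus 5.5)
The plan is to exhibit $\mathrm{JSREGMATCH}_{\regset}$, for $\regset$ the set of regexes without lookarounds nor lower-bounded quantifiers, as an instance of Krentel's definition of OptP. We build a nondeterministic polynomial-time transducer that minimizes its output over accepting branches. On input $(\subreg, s)$, the machine first computes $N = \fuel~\subreg~\inpofidx{s}{0}$. By Theorem \ref{thm:polyfuel}, $N \le (1+|s|)(\regsize{\subreg}+\regsize{\subreg}^2)$, and since the fuel formula is an explicit arithmetic expression in the regex and input sizes, it can be evaluated in polynomial time. The machine then nondeterministically guesses a list of $N$ choices in $\{\mathtt{Left},\mathtt{Right}\}$, one bit per step, runs $\optpalgo~\subreg~s$ on that list, and outputs an encoding of the returned pair as a binary number.

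\textbf{Polynomial time per branch.} Each branch takes polynomial time. This follows from Theorem \ref{thm:optp_polytime}, together with the polynomial cost of guessing $N$ bits. One also has to check that each recursive step of $\optpresult$ is polynomial. This holds because the semantic states reachable along a run stay polynomially bounded by the result of Section \ref{subsec:state_size}: every state visited by $\optpresult$ is a state visited by the functional semantics of~\cite{linden_popl26} along some branch.

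\textbf{Encoding so that the minimum is the answer.} The output is the list of choices returned by $\optpalgo$, written as an $N$-bit number with $\mathtt{Left}=0$ and $\mathtt{Right}=1$, followed by one bit that is $0$ for Success and $1$ otherwise, followed by a fixed-width encoding of the leaf (input position and group map, padded to a polynomial length). Because all outputs have the same width, numeric order coincides with the lexicographic order used in Theorem \ref{thm:optp_correctness}. Every branch can be declared accepting, since failing branches already output the maximal choice list. Theorem \ref{thm:optp_correctness} then gives two cases. If some choice list leads to Success, the minimum output is the one carrying the highest-priority successful choice list, and its leaf is the ECMAScript result. Otherwise every output starts with $N$ ones followed by the NoMatch bit, and the minimum decodes to ``no match''. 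Decoding the minimum back into a match result is a polynomial-time post-processing, which is permitted by a metric reduction. If one insists on a literal OptP function, the answer is this minimum value itself.

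\textbf{Main obstacle.} The delicate step is justifying that lexicographic minimality of the choice list coincides with backtracking priority. A choice list also contains dummy entries consumed at non-branching steps. The argument we would give is that two distinct runs first differ at a genuine branching step; before that point both consume identical prefixes. At that first divergence, $\mathtt{Left}$ corresponds to the higher-priority subtree: the left disjunct, iteration for a greedy quantifier, skipping for a lazy one. This agrees with the $\treechoice{\cdot}{\cdot}$ ordering in the semantics, and with the first-leaf characterization of the functional semantics (the $\mathcal{L}_0$ function of~\cite{linden_popl26}). Several choice lists may yield the same successful run because they differ only at dummy positions. This does not affect the decoded leaf, since the minimum among them is still a list for the highest-priority leaf. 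Finally, Theorem \ref{thm:termination} guarantees that no branch reports \outoffuel.
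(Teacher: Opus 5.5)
Your proposal is correct and follows the paper's proof: guess a choice list of length $\fuel~\subreg~\inpofidx{s}{0}$, run \optpalgo{} on it, and combine Theorems \ref{thm:polyfuel}, \ref{thm:optp_polytime} and \ref{thm:optp_correctness}, with failing branches returning the all-\texttt{Right} list so that the lexicographic minimum picks out the highest-priority success. Your fixed-width numeric encoding and your discussion of ignored entries make explicit what the paper leaves implicit, but two lists can already differ at ignored positions before their runs diverge, so the cleanest justification of minimality is a splicing argument: keep the prefix of the minimal successful list, put \texttt{Left} at the first divergence, and append the tail of a list realizing any higher-priority successful run, which gives a strictly smaller successful list.
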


\section{Related work}
\label{sec:relatedwork}

A body of work is interested in the complexity of modern regex matching.
\authorcite{uezato_rewblk} shows that regex matching with lookarounds and backreferences is PSPACE-complete for a natural, non-backtracking semantics.
We take inspiration from their work in this paper, extending it to a backtracking semantics and providing a mechanized proof of the core arguments.
More recently, \authorcite{nogami2026hardness} show that under the Orthogonal Vectors Conjecture, regex matching with backreferences, intersection or complement cannot be solved in time $\mathrm{O}(|s|^{2-\varepsilon} \operatorname{poly}(|r|))$ for any $\varepsilon > 0$, where $|s|$ is the size of the string and $|r|$ is the size of the regular expression.
The authors did not consider lookarounds, because it was shown \cite{linearjs2024} that regex matching with lookarounds (but without backreferences nor bounded quantifiers) could be done in $\mathrm{O}(|r| \times |s|)$.
However, our results and that of \citeauthor{uezato_rewblk} show that surprisingly, while regex matching with backreferences is NP-complete (in its decision version), regex matching with lookarounds and backreferences is PSPACE-complete.
This contrasts with the lack of increase in computational complexity of the regex matching problem when adding lookarounds without backreferences.
Finally, the construction that we formalize for OptP-hardness is directly inspired by \cite{perl-3sat}, which is much less recent work.

The complexity of other regex-related problems has also been studied.
First, the problem of succinctness of regular expressions is concerned with determining the size of the smallest regular expression (traditional or not) that can represent some language.
For instance, \authorcite{regex_complexity_measures} consider (among others) the language defined by the set of all paths between two nodes on the \emph{arc-labelled} complete graph on $n$ nodes, and show that any traditional regular expresssion representing this language has at least $2^{n-1}$ alphabetical symbols (among other results).
More recently, \authorcite{ere_succinctness_decidability} shows that regular expressions with backreferences cannot be minimized effectively with respect to length or number of variables.
\authorcite{succinctness_complement_intersection} show that one cannot avoid an exponential or double exponential increase in size when finding a traditional regular expression corresponding to the complement of another one, or to the intersection of a fixed or arbitrary number of other traditional regular expressions.
\authorcite{succinctness_interleaving_counting} shows that one cannot avoid a double exponential increase in size when translating a regular expresssion with interleaving into a standard regular expression.

Second, the problem of regex equivalence is concerned with determining whether two regexes represent the same language.
\authorcite{stockmeyermeyer_equivalence} show that traditional regular expression equivalence is PSPACE-complete.
Together with our results, this means that one can encode an instance of traditional regex \emph{equivalence} into an instance of modern regex \emph{matching}.
Slightly more recently, \authorcite{equivalence_unambiguous} show that traditional regular expression equivalence when restricted to \emph{unambiguous} regular expressions is decidable in polynomial time, and generalized this result to regular expressions whose degree of ambiguity is bounded by a fixed quantity.

In this paper, we relied on the mechanized semantics of JavaScript regexes presented by \authorcite{linden_popl26}.
Other mechanized semantics for modern regexes exist: for instance, \authorcite{lean_lookarounds} formalized in Lean a model of regular expressions with lookarounds, intersection and complement, \authorcite{perry_thesis} formalized in Lean two disambiguation policies for regular expressions with capture groups, and \authorcite{rocq_lookarounds} formalized in Rocq a model of regular expressions with lookarounds.
However, these models are either incomplete (missing some features) or not explicitly related to a real-world regex language.
This is important to note, as there are semantic differences between different regex languages that make many regexes non-portable between languages \cite{regex_linguafranca}.
The semantics in \cite{warblre_icfp}, on the other hand, is a direct and complete transcription of the ECMAScript 2023 semantics of JavaScript regular expressions.

Finally, complexity results can be established without reasoning with Turing machines, as we do in this paper.
\authorcite{forster_wcbv} show that the weak call-by-value $\lambda$-calculus is a reasonable model of computation for both time and space, so that complexity results can be derived by studying the natural time (number of $\beta$-reduction steps) and space (size of largest term in computation) measures of a computation.
\authorcite{cooklevin_coq} leverage this observation to mechanize the Cook-Levin theorem in Rocq.

\section{Limitations and future work}
\label{sec:limitations_futurework}

The membership results of section \ref{sec:completeness} do not apply to regexes with lower-bounded quantifiers, that is quantifiers whose minimal number of repetitions is not zero. Bounded repetitions are indeed an active area of research~\cite{turonova_counting,regex_bva,automata_bounded}. In practice, linear-time regex matchers expand them away, potentially leading to an exponential growth of the regex, and backtracking matchers recurse once per regex iteration, leading to worst-case exponential stack usage. To characterize the behavior of these implementations, we provide and prove the following generalizations of theorems \ref{thm:pspace} and \ref{thm:optp}:
\begin{figure}[hbt!]
    \centering
    \(\begin{aligned}
    |\re{\regepsilon}| = |\re{\regchar{\cd}}| &= 1 \\
    |\re{\anchor{\anc}}| = |\re{\backref{\gid}}| &= 1 \\
    |\re{\noncap{\disjunction{\subreg_1}{\subreg_2}}}| = |\re{\sequence{\subreg_1}{\subreg_2}}| &= 1 + |\re{\subreg_1}| + |\re{\subreg_2}| \\
    \mathcolorbox{yellow}{|\re{\quant{\subreg}{\rmin}{\Delta}{\greedy}}|} &= \mathcolorbox{yellow}{(1 + \rmin) \times (3 + |\subreg|)} \\
    |\re{\lookaround{\lk}{\subreg}}| &= 1 + |\re{\subreg}| \\
    |\re{\group{\gid}{\subreg}}| &= 2 + |\re{\subreg}|
    \end{aligned}\)
    \caption{Definition of the \emph{expanded} size of a regex. Modification compared to Fig. \ref{fig:regex_size} highlighted.}
    \label{fig:regex_size_expanded}
    \Description{} 
\end{figure}

\begin{theorem}
    JSREGTEST is PSPACE-complete in the length of the input string plus the expanded length of the input regex, as defined in Fig. \ref{fig:regex_size_expanded}.
\end{theorem}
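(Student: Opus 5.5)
Both directions of the claimed completeness are handled separately.

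\emph{Hardness.} The regexes built by the reduction of Section~\ref{subsec:pspacehard} contain no quantifier whose minimum is non-zero. The only quantifier they use is the star in $R_{\mathrm{cap}}$ and $R_{\mathrm{chk}}$, and only in the variant without negative lookarounds. Hence, on these regexes, the expanded size of Fig.~\ref{fig:regex_size_expanded} coincides with the size of Fig.~\ref{fig:regex_size}, which is linear in the size of the regex. The reduction from QBF' is therefore still polynomial when measured against the expanded length, and the first theorem of Section~\ref{subsec:pspacehard} gives PSPACE-hardness of JSREGTEST under this measure.

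\emph{Membership.} The simplest route is a syntactic expansion. We define a translation $\mathcal{E}$ that rewrites every lower-bounded quantifier $\re{\quant{\subreg}{\rmin}{\Delta}{\greedy}}$ with $\rmin>0$ into a sequence of $\rmin$ copies of $\subreg$, followed by $\re{\quant{\subreg}{0}{\Delta}{\greedy}}$. The translation is applied recursively inside $\subreg$.

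Two facts must then be checked.

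First, $|\mathcal{E}(\subreg)| \le$ the expanded size of $\subreg$, by induction on $\subreg$. For the quantifier case, the result has size
\[
\rmin\,|\mathcal{E}(\subreg)| + (\rmin-1) + 3 + |\mathcal{E}(\subreg)| \le (1+\rmin)(3+|\subreg|_{\mathrm{exp}}).
\]
Moreover, $\mathcal{E}$ is computable in time polynomial in the expanded size.

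Second, $\mathcal{E}$ preserves the JavaScript match result. This is the main obstacle, and it is the step I would spend effort on. The copies must behave exactly like the unrolled iterations. Rule \texttt{compute\_result} for \texttt{Quantified greedy (S min)} pushes \texttt{Areg r1 :: Areg (Quantified greedy min delta r1)}, and it does so after resetting the groups of $\subreg$ (\texttt{GroupMap.reset (def\_groups r1)}). A plain sequence does not perform this reset, and it also duplicates capture groups, which is not valid JavaScript syntax.

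To avoid changing the object language, I would instead argue at the semantic level. I would compare the backtracking trees of $[\areg{\quant{\subreg}{\rmin}{\Delta}{\greedy}}] \app \actions$ and of the mandatory-iteration unrolling, allowing a reset action or treating resets as part of the action list. I would show by induction on $\rmin$ that the trees coincide, possibly up to \ctor{Reset} nodes that do not affect leaves. The chosen "expanded" object is then a list of actions with repeated regex actions. It is not a syntactic regex, so duplicate group names cause no problem.

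With this, the fuel argument of Section~\ref{subsec:functional_termination} carries over if we redefine the action size $|a|$ using the expanded size. The key check is Theorem~\ref{thm:fuel_decrease} in the new case for \texttt{Quantified (S min)}: the action of size $(1+\rmin)(3+|\subreg|)$ is replaced by actions of total size $|\subreg| + \rmin(3+|\subreg|)$, which is strictly smaller. Invariant~\ref{inv:chunk_size_lt} also needs re-checking, since chunks still only arise from zero-minimum quantifiers. The remaining cases are unchanged because all other sizes agree.

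Theorem~\ref{thm:polyfuel} then yields the bound $(1+|s|)(N+N^2)$ with $N$ the expanded size. The state-size bound of Section~\ref{subsec:state_size} goes through with $N$ in place of $|\subreg|$. Theorem~\ref{thm:polyspace} and Theorem~\ref{thm:pspace_correctness} then apply verbatim to \texttt{pspace\_algo}, since the algorithm already handles \texttt{Quantified (S min)} directly. This gives membership in PSPACE, and together with hardness, completeness.
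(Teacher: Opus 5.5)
Your proposal is correct and, without the syntactic-expansion detour, it is the paper's proof. Hardness comes from the Section~\ref{sec:hardness} reductions, whose regexes contain no lower-bounded quantifiers, so both size measures coincide. Membership comes from re-running the fuel argument with the expanded size. There the only new case is unrolling \re{\quant{\subreg}{\rmin+1}{\Delta}{\greedy}} into \re{\subreg} followed by \re{\quant{\subreg}{\rmin}{\Delta}{\greedy}}, which lowers the chunk size by $3$.

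The tree-comparison step you flag as the main obstacle is not needed. The semantics of~\cite{linden_popl26} and \computeresult{} already perform exactly this unrolling, including the group reset. So correctness is inherited unchanged, and only the fuel and state-size bounds need rechecking.
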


\begin{theorem}
    $\jsregmatch_{\regset_\mathrm{nolk}}$ is OptP-complete in the length of the input string plus the expanded length of the input regex, as defined in Fig. \ref{fig:regex_size_expanded}.
\end{theorem}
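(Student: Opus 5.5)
We need to prove that $\jsregmatch_{\regset_\mathrm{nolk}}$ is OptP-complete when size is measured as the string length plus the expanded regex length of Fig.~\ref{fig:regex_size_expanded}. The plan has two halves.

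\textbf{Hardness.} The reduction from CNF LEXICOGRAPHIC SAT already shows this. The regex $r = E(x_1)\cdots E(x_n)R_{\mathrm{prop}}(\varphi)$ contains no quantifiers and no lookarounds. Its expanded size therefore equals its ordinary size, which is linear in $|\varphi|$. The metric reduction $(T_1,T_2)$ and its correctness theorem carry over unchanged: $T_1$ builds $(r,s)$, and $T_2$ reads off the assignment from the returned group map, or returns $0$ if there is no match.

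\textbf{Membership.} We reduce to the lower-bound-free case of Theorem~\ref{thm:optp} through a polynomial-time syntactic expansion $\mathcal{X}$. It rewrites each $\re{\quant{\subreg}{\rmin}{\Delta}{\greedy}}$ with $\rmin>0$ into $\rmin$ copies of $\mathcal{X}(\subreg)$ in sequence, followed by $\re{\quant{\mathcal{X}(\subreg)}{0}{\Delta}{\greedy}}$. By induction on the regex, $|\mathcal{X}(\subreg)|$ is bounded linearly by the expanded size. The case $\rmin\ge1$ gives at most $\rmin\cdot|\mathcal{X}(\subreg)| + \rmin + 3 + |\mathcal{X}(\subreg)| \le (1+\rmin)(3+|\mathcal{X}(\subreg)|)$, which matches the highlighted clause. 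Because $\rmin$ is bounded by the expanded size, $\mathcal{X}$ runs in polynomial time in that measure.

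The semantic step is to show that $\mathcal{X}$ preserves the full backtracking tree up to trivial nodes, or at least preserves the first leaf. The rule for a lower-bounded quantifier, as in \computeresult, unfolds $\re{\quant{\subreg}{\rmin+1}{\Delta}{\greedy}}$ into $\areg{\subreg} :: \areg{\re{\quant{\subreg}{\rmin}{\Delta}{\greedy}}}$ after resetting $\defgroups{\subreg}$, and it has no check action. Sequencing copies performs the same actions. I expect this to be the \textbf{main obstacle}: the group reset. In the original semantics, groups of $\subreg$ are reset before each mandatory iteration, whereas a plain sequence of copies does not reset them. For regexes without lookarounds, I would argue the reset is unobservable. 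A group in $\subreg$ is opened again before it can be read by a later copy, but a backreference inside copy $k$ to a group defined in $\subreg$ and not yet reopened in copy $k$ sees the value left by copy $k-1$ rather than undefined.

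To handle this, I would first try to make the reset explicit. I would add a fresh empty construct that resets the given groups, costing $O(|\subreg|)$ and so staying within the expanded-size bound. If adding syntax is undesirable, I would instead extend \optpresult directly with the lower-bounded quantifier case, which is deterministic and uses no choice. I would then redo the fuel argument with $|\re{\quant{\subreg}{\rmin}{\Delta}{\greedy}}|$ taken as the expanded size. Each mandatory unfolding strictly decreases this size, since $(2+\rmin)(3+|\subreg|) > |\subreg| + (1+\rmin)(3+|\subreg|)$, and the chunk invariants are untouched because mandatory iterations create no check actions.

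Either route then yields, exactly as in Theorems~\ref{thm:polyfuel}, \ref{thm:optp_polytime} and \ref{thm:optp_correctness}, a polynomial-length choice list and a polynomial-time verifier whose minimal output is the highest-priority match. This establishes membership in OptP, and together with hardness gives OptP-completeness.
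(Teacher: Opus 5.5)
Your hardness half and your second membership route are essentially the paper's proof. The paper reuses the CNF LEXICOGRAPHIC SAT reduction, which is quantifier-free, so its expanded and ordinary sizes coincide. It keeps \optpresult unchanged, since the algorithm already contains the choice-free case for $\re{\quant{\subreg}{\rmin+1}{\Delta}{\greedy}}$, and it only adds one case to the mechanized fuel argument: this unrolling shrinks the first chunk from $(2+\rmin)(3+\regsize{\subreg})+k$ to $\regsize{\subreg}+(1+\rmin)(3+\regsize{\subreg})+k$, which is exactly your inequality.

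You should not present the syntactic expansion as an equally valid alternative. Your claim that the reset is unobservable without lookarounds is false even without backreferences, because $\jsregmatch$ returns the group map. For example, matching $\re{\quant{\noncap{\disjunction{\group{1}{a}}{b}}}{2}{0}{\top}}$ on \str{ab} leaves group 1 undefined, since it is reset before the second iteration, which takes the \chr{b} branch. The two-copy sequence instead leaves group 1 capturing the first \chr{a}. Copying $\subreg$ also duplicates group indices, so the result is not a JavaScript regex and Theorem~\ref{thm:optp} does not apply to it; renumbering the copies would change both the returned group map and what backreferences refer to. Finally, a fresh reset construct likewise leaves the language covered by Theorem~\ref{thm:optp}. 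You would then have to redo the fuel argument for it anyway, which is just your second route.
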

\begin{proof}
    The reductions from QBF' to JSREGTEST and from CNF LEXICOGRAPHIC SAT to $\jsregmatch_{\regset_\mathrm{nolk}}$ proven in section \ref{sec:hardness} prove PSPACE-hardness and OptP-hardness in the theorems above. The core argument for PSPACE membership and OptP membership is proven in Rocq. The proof is similar to the case without lower-bounded quantifiers, with one extra case corresponding to lower-bounded quantifiers: when unrolling a lower-bounded quantifier $\re{\quant{\subreg}{\rmin+1}{\Delta}{\greedy}} :: \actions$ into $\re{\subreg} :: \re{\quant{\subreg}{\rmin}{\Delta}{\greedy}} :: \actions$, the size of the first chunk decreases from $(1 + \rmin + 1) \times (3 + \regsize{\subreg}) + k$ to $\regsize{\subreg} + (1 + \rmin) \times (3 + \regsize{\subreg}) + k$ for some $k$.
\end{proof}
We leave the complexity of matching unexpanded counted repetitions to future work.

The result of section \ref{subsec:pspacehard_noneglk}, which states that JavaScript regex matching is PSPACE-hard without negative lookarounds, still relies on the ability to encode some form of negation of regex matching, thanks to the atomicity of lookarounds.
To our knowledge, it is still unknown whether regex matching with non-atomic positive lookarounds and backreferences is still PSPACE-hard.
Studying the computational complexity of such regexes is left to future work.

Finally, our complexity results apply to JavaScript regexes as specified in the 2023 edition of the ECMAScript specification~\cite{ecma_2023}.
Therefore, two questions naturally arise: whether our results still apply to current JavaScript regular expressions (e.g. as specified in the 2025 edition of the ECMAScript specification~\cite{ecma_2025}) and whether our results extend to other regex languages such as Perl, Java, PCRE2, etc.
For the first question, we believe that our results still apply to 2025 JavaScript regexes, as no changes should significantly affect our proofs.
To be confident in this result, one would need to update the Warblre semantics~\cite{warblre_icfp} to ECMAScript 2025, then update the semantics of \authorcite{linden_popl26} accordingly, and finally adapt our proofs to the new semantics. This is left to future work.

\section{Conclusion}
\label{sec:conclusion}

We clarified the complexity class of JavaScript regular expression matching with the following results: JavaScript regex matching is PSPACE-hard and OptP-hard, even without negative lookarounds; JavaScript regex matching without lower-bounded quantifiers is PSPACE-complete; JavaScript regex matching without lower-bounded quantifiers and without lookarounds is OptP-complete.

\begin{acks}
  The authors would like to thank Ola Svensson, Mika Göös and A. R. Balasubramanian for helpful discussions about this paper.
  This research was funded in whole or in part by the Swiss National Science Foundation (SNSF), grant number 10003649.
\end{acks}

\bibliographystyle{ACM-Reference-Format}
\bibliography{main}










\end{document}

%% file: macros.tex
\makeatletter
\newcommand\tightparagraph{\def\@toclevel{4}%
  \@startsection{paragraph}{4}{\parindent}%
  {-.15\baselineskip \@minus -.05\baselineskip}%
  {-3.5\p@}%
  {\ACM@NRadjust{\@parfont}}}
\makeatother

\newcommand{\semspace}{\vspace{1.1\baselineskip}}
\newcommand{\newpremise}{\qquad}

\newcommand{\ruledef}[2]{\!{\textsc{\footnotesize#2}\phantomsection\label{rule:#1:#2}}}
\newcommand{\ruleref}[2]{\hyperref[rule:#1:#2]{\textsc{#2}}}

\renewcommand{\autoref}[1]{\Cref{#1}}
\newcommand{\appendixref}[1]{\hyperref[#1]{Appendix~\ref*{#1}}}

\newcommand{\dir}{\mathit{d}}
\newcommand{\forward}{\rightarrow}

\newcommand{\istree}[5]{(#1,#2,#3,#4)\mathbin{\Downarrow}#5}

\newcommand{\ancsymb}{%
  \vcenter{\hbox{\oalign{
        \noalign{\kern-0.7ex}\hfil\rule{.4pt}{1.5ex}\hfil\cr
        \noalign{\kern-0.7ex}$\smile$\cr
        \noalign{\kern-1.8ex}\hfil{\scalebox{.8}{$-$}}\hfil\cr
        \noalign{\kern-1.5ex}\hfil{\scalebox{.5}{$\circ$}}\hfil\cr
      }}}}

\newcommand{\regbos}{\text{\textasciicircum}}
\newcommand{\regeos}{\text{\$}}

\newcommand{\inp}{\mathit{i}}

\newcommand{\gm}{\mathit{gm}}
\newcommand{\gid}{\mathit{g}}

\newcommand{\cont}{\mathit{l}}
\newcommand{\actions}{\mathit{l}}
\newcommand{\cd}{\mathit{cd}}
\newcommand{\anc}{\mathit{a}}
\newcommand{\rchar}{\mathit{c}}
\newcommand{\subreg}{\mathit{r}}
\newcommand{\greedy}{\mathit{p}}

\newcommand{\lk}{\mathit{lk}}

\newcommand{\rmin}{\mathit{min}}
\newcommand{\rmax}{\mathit{max}}

\newcommand{\treecont}{\mathit{t}}
\newcommand{\treeiter}{\mathit{t_{iter}}}

\newcommand{\str}[1]{\texttt{"#1"}}
\newcommand{\chr}[1]{\texttt{#1}}

\newcommand{\cdcomplement}{\hat{~}}

\newcommand{\reskip}{\mskip\medmuskip}
\newcommand{\resskip}{\mskip.5\thinmuskip}

\newcommand{\defineRegexSyntax}{%
\newcommand{\regepsilon}{\varepsilon}%
\newcommand{\regchar}[1]{##1}%
\newcommand{\charset}[1]{[##1]}%
\newcommand{\cdrange}[2]{##1\mathord{-}##2}%
\newcommand{\disjunction}[2]{##1|##2}%
\newcommand{\sequence}[2]{##1\reskip##2}%
\newcommand{\quant}[4]{##1\{##2,##3,##4\}}%
\newcommand{\anchor}[1]{##1}%
\newcommand{\backref}[1]{\backslash##1}%
\newcommand{\lookaround}[2]{(?{##1}\reskip##2)}%
\newcommand{\group}[2]{(_{##1}\resskip##2)}%
\newcommand{\noncap}[1]{\langle##1\rangle}%
\newcommand{\respecial}[1]{\mathord{\mathsf{##1}}}
\newcommand{\regqm}[1]{##1\resskip\respecial{?}}%
\newcommand{\regstar}[1]{##1\resskip\respecial{*}}%
\newcommand{\regplus}[1]{##1\resskip\respecial{+}}%
\newcommand{\lazystar}[1]{##1\resskip\respecial{*?}}%
\newcommand{\lazyplus}[1]{##1\resskip\respecial{+?}}%
\newcommand{\lazyqm}[1]{##1\resskip\respecial{??}}%
\newcommand{\regdot}{\cdot}%
\newcommand{\regall}{\odot}%
\newcommand{\esc}[1]{\backslash\respecial{##1}}%
\newcommand{\posahead}{\respecial{=}}%
\newcommand{\negahead}{\respecial{!}}%
\newcommand{\posbehind}{\respecial{<=}}%
\newcommand{\negbehind}{\respecial{<!}}%
\newcommand{\lookahead}[1]{\lookaround{\posahead}{##1}}%
\newcommand{\neglookahead}[1]{\lookaround{\negahead}{##1}}%
\newcommand{\lookbehind}[1]{\lookaround{\posbehind}{##1}}%
\newcommand{\neglookbehind}[1]{\lookaround{\negbehind}{##1}}%
}

\newcommand{\re}[1]{\ensuremath{{\defineRegexSyntax\mathrm{#1}}}}

\newcommand{\savere}[2]{%
  \expandafter\newsavebox\csname re:#1\endcsname%
  \sbox{\csname re:#1\endcsname}{{\small#2}}%
}
\newcommand{\usere}[1]{\usebox{\csname re:#1\endcsname}}

\newcommand{\ctor}[1]{\textsf{\textup{#1}}}
\newcommand{\fn}[1]{\textsf{\textup{#1}}}

\newcommand{\ctorone}[2]{\expandafter\ifx\expandafter\relax
  \detokenize{#2}\relax\ctor{#1}\else\ctor{#1}~#2\fi}
\newcommand{\ctortwo}[3]{\expandafter\ifx\expandafter\relax
  \detokenize{#2}\relax\ctor{#1}\else\ctor{#1}~#2~#3\fi}
\newcommand{\ctorthree}[4]{\expandafter\ifx\expandafter\relax
  \detokenize{#2}\relax\ctor{#1}\else\ctor{#1}~#2~#3~#4\fi}

\newcommand{\acheck}[1]{\ctorone{Acheck}{#1}}
\newcommand{\aclose}[1]{\ctorone{Aclose}{#1}}
\newcommand{\areg}[1]{#1}

\newcommand{\idx}[1]{\fn{idx}(#1)}
\newcommand{\strof}[1]{\fn{str}(#1)}

\newcommand{\defgroups}[1]{\mathcal{G}(#1)}

\newcommand{\gmclose}[3]{\fn{GM\textsubscript{close}}(#1,#2,#3)}

\newcommand{\gmreset}[2]{\fn{GM\textsubscript{reset}}(#1,#2)}
\newcommand{\gmempty}{\fn{GM\textsubscript{$\emptyset$}}}

\newcommand{\treematch}{\ctor{Match}}

\newcommand{\treeclose}[2]{\ctortwo{Close}{#1}{#2}}
\newcommand{\treereset}[2]{\ctortwo{Reset}{#1}{#2}}

\newcommand{\treechoice}[2]{\ctortwo{Choice}{#1}{#2}}

\newcommand{\bytecodeinstr}[1]{\texttt{#1}}
\newcommand{\bcone}[2]{\expandafter\ifx\expandafter\relax
  \detokenize{#2}\relax\bytecodeinstr{#1}\else\bytecodeinstr{#1}~#2\fi}
\newcommand{\bctwo}[3]{\expandafter\ifx\expandafter\relax
  \detokenize{#2}\relax\bytecodeinstr{#1}\else\bytecodeinstr{#1}~#2~#3\fi}

\newcommand{\app}{\mathbin{++}}

\definecolor{ex1}{RGB}{201,241,255} 
\definecolor{ex2}{RGB}{255,241,180} 
\definecolor{ex3}{RGB}{198,248,205} 
\definecolor{ex4}{RGB}{255,230,251} 

\newcommand{\inpofidx}[2]{\operatorname{inpofidx}(#1, #2)}
\newcommand{\regset}{\mathcal{R}}
\newcommand{\regsize}[1]{|#1|}
\newcommand{\jsregmatch}{\mathrm{JSREGMATCH}}
\newcommand{\pspacealgo}[2]{\texttt{pspace\_algo} #1 #2}
\newcommand{\computeresult}{\texttt{compute\_result}}
\newcommand{\fuel}{\texttt{fuel}}
\newcommand{\computetree}{\texttt{compute\_tree}}
\newcommand{\outoffuel}{\texttt{OutOfFuel}}
\newcommand{\optpalgo}{\texttt{optp\_algo}}
\newcommand{\optpresult}{\texttt{optp\_result}}
\newcommand{\fuelfirst}{\operatorname{fuel\_first}}
\newcommand{\fuelmiddle}{\operatorname{fuel\_middle}}
\newcommand{\fuellast}{\operatorname{fuel\_last}}
\newcommand{\prog}{\operatorname{prog}}
\newcommand{\actlkfuel}{\operatorname{act\_lk\_fuel}}
\newcommand{\reglkfuel}{\operatorname{reg\_lk\_fuel}}

\newcommand{\mathcolorbox}[2]{\colorbox{#1}{$\displaystyle #2$}}

\pgfdeclarelayer{background}
\pgfdeclarelayer{foreground}
\pgfsetlayers{background,foreground}